\newcommand*{\Comb}[2]{{}^{#1}C_{#2}}
		\newtheorem{converse}{Converse}[section]
\def\SC{{\rm SC}}
\def\OPT{{\rm OPT}}
\def\LP{{\rm LP}}
\def\DP{{\rm DP}}
\def\OBJ{{\rm OBJ}}
\def\DPSW{{\rm DPSW}}
\def\SW{{\rm SW}}
\def\SID{{\rm SID}}
\def\DPSW{{\rm DPSW}}
\def\DPSI{{\rm DPSI}}
\def\DPJE{{\rm DPJE}}
\def\LPSI{{\rm LPSI}}
\def\SW{{\rm SW}}
\def \gammab{\bar{\gamma}}
\def \lambdab{\bar{\lambda}}
\def\lambdat {\tilde{\lambda}}
\def \gammat{\tilde{\gamma}}
\def \lambdah{\widehat{\lambda}}
\def \phih{\widehat{\phi}}
\def \gammah{\widehat{\gamma}}
\newcommand\independent{\protect\mathpalette{\protect\independenT}{\perp}}
\def\independenT#1#2{\mathrel{\rlap{$#1#2$}\mkern2mu{#1#2}}}
\def\csf{\mathsf{c}}
\def\ssf{\mathsf{s}}
\newtheorem{theorem}{Theorem}[section]
\newtheorem{lemma}{Lemma}[section]
\newtheorem{proposition}[theorem]{Proposition}
\newtheorem{corollary}[theorem]{Corollary}
\def\bkE{{\rm I\kern-.17em E}}
\def\bk1{{\rm 1\kern-.17em l}}
\def\bkD{{\rm I\kern-.17em D}}
\def\bkR{{\rm I\kern-.17em R}}
\def\bkP{{\rm I\kern-.17em P}}
\def\bkZ{{\bf{Z}}}
\def\bkE{{\rm I\kern-.17em E}}
\def\bk1{{\rm 1\kern-.17em l}}
\def\bkD{{\rm I\kern-.17em D}}
\def\bkR{{\rm I\kern-.17em R}}
\def\bkP{{\rm I\kern-.17em P}}
\newcommand{\pushright}[1]{\ifmeasuring@#1\else\omit\hfill$\displaystyle#1$\fi\ignorespaces}
\newcommand{\pushleft}[1]{\ifmeasuring@#1\else\omit$\displaystyle#1$\hfill\fi\ignorespaces}
\def\bkZ{{\bf{Z}}}
\def\b12{(\beta_1,\beta_2)}
\newenvironment{proofarg}[1][]{\noindent\hspace{2em}{\itshape Proof #1: }}{\hspace*{\fill}~\qed\par\endtrivlist\unskip}
\newcounter{example}
\renewcommand{\theexample}{\thesection.\arabic{example}}
\newcounter{remark}
\renewcommand{\theremark}{\thesection.\arabic{remark}}
\newenvironment{remarkc}[1][]{\refstepcounter{remark}
\noindent{\itshape Remark~\theremark. #1} \rmfamily}{\hspace*{\fill}~$\square$\vspace{0pt}}
\def\Bscr{\mathscr{B}}
\def\Xscr{\mathcal{X}}
\def\Yscr{\mathcal{Y}}
\def\Ebb{\mathbb{E}}
\newlength{\noteWidth}
\long\def\notes#1{\ifinner
{\tiny #1}
\else
\marginpar{\parbox[t]{\noteWidth}{\raggedright\tiny #1}}
\fi\typeout{#1}}
 \def\notes#1{\typeout{read notes: #1}} 
\newcommand{\I}[1]{\mathbb{I}_{\{#1\}}}
\newcommand{\ie}{i.e.\@\xspace} 
\newcommand{\Real}{\ensuremath{\mathbb{R}}}
\newcommand{\minimize}[1]{\displaystyle\minim_{#1}}
\newcommand{\minim}{\mathop{\hbox{\rm min}}}
\newcommand{\maximize}[1]{\displaystyle\maxim_{#1}}
\newcommand{\maxim}{\mathop{\hbox{\rm max}}}
\def\OPT{{\rm OPT}}
\def\FEA{{\rm FEA}}
\def\Ebb{\mathbb{E}}
\def\Pbb{{\mathbb{P}}}
\def\Nbb{{\mathbb{N}}}
\def\Qbb{{\mathbb{Q}}}
\def\Ibb{{\mathbb{I}}}
\def\dbf{{\bf d}}
\def\exp{\mathop{\hbox{\rm exp}}}  
\def\In{\mathop{\hbox{\it In}\,}}
\def\norm#1{\|#1\|}
\def\spose#1{\hbox to 0pt{#1\hss}}
\def\sub#1{^{\null}_{#1}}
\def\text #1{\hbox{\quad#1\quad}}
\def\zhat{{\hat x}}
\def\nthinsp{\mskip -2   mu}
\def\Q{_{\scriptscriptstyle Q}}
\def\superstar{^{\raise 0.5pt\hbox{$\nthinsp *$}}}
\def\SUPERSTAR{^{\raise 0.5pt\hbox{$*$}}}
\def\lamstarT {\lambda^{\raise 0.5pt\hbox{$\nthinsp *$}T}}
\def\shat{\widehat s}
\def\Ascr{{\cal A}}
\def\Bscr{{\cal B}}
\def\Pscr{{\cal P}}
\def\Sscr{{\cal S}}
\def\Sscrhat{\widehat{\mathcal{S}}}
\def\Zscr{{\cal Z}}
\def\Xscr{{\cal X}}
\def\Yscr{{\cal Y}}
\def\shat{\widehat s}
\def\Shat{\widehat S}
\def\zbar{\skew{2.8}\bar z}
\def\zhat{\skew{2.8}\widehat z}
\def\ztilde{\skew{2.8}\widetilde z}
\def\sbf{{\bf s}}
\def\non{\nonumber}
\let\forallnew\forall
\renewcommand{\forall}{\forallnew\ }
\let\forall\forallnew
		\def\bkE{{\rm I\kern-.17em E}}
		\def\bk1{{\rm 1\kern-.17em l}}
		\def\bkD{{\rm I\kern-.17em D}}
		\def\bkR{{\rm I\kern-.17em R}}
		\def\bkP{{\rm I\kern-.17em P}}
		\def\bkY{{\bf \kern-.17em Y}}
		\def\bkZ{{\bf \kern-.17em Z}}
		\def\bkC{{\bf  \kern-.17em C}}
		\def\bsp{\begin{split}}
		\def\beq{\begin{eqnarray}}
		\def\bal{\begin{align*}}
		\def\bc{\begin{center}}
		\def\be{\begin{enumerate}}
		\def\bi{\begin{itemize}}
		\def\bs{\begin{small}}
		\def\bS{\begin{slide}}
		\def\ec{\end{center}}
		\def\ee{\end{enumerate}}
		\def\ei{\end{itemize}}
		\def\es{\end{small}}
		\def\eS{\end{slide}}
		\def\eeq{\end{eqnarray}}
		\def\eal{\end{align*}}
		\def\esp{\end{split}}
		\def\qed{ \vrule height7.5pt width7.5pt depth0pt}  
	\def\maxproblemlarge#1#2#3#4{\fbox
		 {\begin{tabular*}{1.0\textwidth}
			{@{}l@{\extracolsep{15pt}}l@{\extracolsep{6pt}}l@{\extracolsep{\fill}}c@{}}
				#1 & $\maximize{#2}$ & $#3$ & $ $ \\[5pt]
					$\sub\ $ &     & $#4$ & $ $
			\end{tabular*}}
			}
	\def\maxproblemsmall#1#2#3#4{\fbox
		 {\begin{tabular*}{0.47\textwidth}
			{@{}l@{\extracolsep{\fill}}l@{\extracolsep{6pt}}l@{\extracolsep{\fill}}c@{}}
				#1 & $\maximize{#2}$ & $#3$ & $ $ \\[5pt]
					 & $\subject\ $    & $#4$ & $ $
			\end{tabular*}}
			}
\def\problemsmallLP#1#2#3#4{\fbox
		 {\begin{tabular*}{1.0\textwidth}
			{@{}l@{\extracolsep{\fill}}l@{\extracolsep{6pt}}l@{\extracolsep{\fill}}c@{}}
				#1 & $\minimize{#2}$ & $#3$ & $ $ \\[5pt]
					  $\sub \ $ &    & $#4$ & $ $
			\end{tabular*}}
			}
	\def\cp2problem#1#2#3#4{\fbox
		 {\begin{tabular*}{0.9\textwidth}
			{@{}l@{\extracolsep{\fill}}l@{\extracolsep{6pt}}l@{\extracolsep{\fill}}c@{}}
				#1 & & $#4 $ 
			\end{tabular*}}}
		\def\bkE{{\rm I\kern-.17em E}}
		\def\bk1{{\rm 1\kern-.17em l}}
		\def\bkD{{\rm I\kern-.17em D}}
		\def\bkR{{\rm I\kern-.17em R}}
		\def\bkP{{\rm I\kern-.17em P}}
		\def\bkZ{{\bf{Z}}}
\newcommand {\beeq}[1]{\begin{equation}\label{#1}}
\newcommand {\eeeq}{\end{equation}}
\newcommand {\bea}{\begin{eqnarray}}
\newcommand {\eea}{\end{eqnarray}}
\def\texitem#1{\par\smallskip\noindent\hangindent 25pt
               \hbox to 25pt {\hss #1 ~}\ignorespaces}
\def\bsp{\begin{split}}
		\def\beq{\begin{eqnarray}}
		\def\bal{\begin{align*}}
		\def\bc{\begin{center}}
		\def\be{\begin{enumerate}}
		\def\bi{\begin{itemize}}
		\def\bs{\begin{small}}
		\def\bS{\begin{slide}}
		\def\ec{\end{center}}
		\def\ee{\end{enumerate}}
		\def\ei{\end{itemize}}
		\def\es{\end{small}}
		\def\eS{\end{slide}}
		\def\eeq{\end{eqnarray}}
		\def\eal{\end{align*}}
		\def\esp{\end{split}}
		\def\qed{ \vrule height7.5pt width7.5pt depth0pt}  
\newenvironment{proof}[1][]{{\noindent \bf Proof #1: }}{\hfill \qed \vspace{3pt}\\ }
\def\sub{\hbox{\rm s.t.}}
			\def\problemsmalla#1#2#3#4{\fbox
		 {\begin{tabular*}{0.475\textwidth}
			{@{}l@{\extracolsep{\fill}}l@{\extracolsep{-4pt}}l@{\extracolsep{\fill}}c@{}}
				#1 &  & $\minimize{#2}$  $#3$ & $ $ \\[4pt]
					  $\sub \ $  &   & $#4$ &  $ $
			\end{tabular*}}
			}
			\def\singlecolproblemsmalla#1#2#3#4{\fbox
		 {\begin{tabular*}{0.480\textwidth}
			{@{}l@{\extracolsep{\fill}}l@{\extracolsep{-4pt}}l@{\extracolsep{\fill}}c@{}}
				#1 &  & $\minimize{#2}$  $#3$ & $ $ \\[4pt]
					  $\sub \ $  &   & $#4$ &  $ $
			\end{tabular*}}
			}
	\def\maxproblemsmall#1#2#3#4{\fbox
		 {\begin{tabular*}{0.47\textwidth}
			{@{}l@{\extracolsep{\fill}}l@{\extracolsep{-4pt}}l@{\extracolsep{\fill}}c@{}}
				#1 &  & $\maximize {#2}$ $#3$ & $ $ \\[4pt]
					 $\sub \ $  &   & $#4$ & $ $
			\end{tabular*}}
			}
			\renewcommand{\I}[1]{\mathbb{I}\{#1\}}
\author{\IEEEauthorblockN{Sharu Theresa Jose \quad Ankur A. Kulkarni}\thanks{Sharu and Ankur are with the Systems and Control Engineering group at the Indian Institute of Technology Bombay, Mumbai, 400076, India. 
email: sharutheresa@iitb.ac.in,\quad kulkarni.ankur@iitb.ac.in. This work was presented in part at the IEEE Information Theory Workshop, in Kaohsiung, Taiwan, 2017~\cite{jose2017linearITW}.}}
\title{Improved Finite Blocklength Converses for Slepian-Wolf Coding via Linear Programming}
\begin{document}
\maketitle
\begin{abstract}
A new finite blocklength converse for the Slepian-Wolf coding problem is presented which significantly improves on the best known converse for this problem, due to Miyake and Kanaya~\cite{miyake1995coding}. To obtain this converse, an extension of the linear programming (LP) based framework for finite blocklength point-to-point coding problems from \cite{jose2016linear}  is employed. 
However, a direct application of this framework demands a complicated analysis for the Slepian-Wolf problem. 
An analytically simpler approach is presented wherein LP-based finite blocklength converses for this problem are synthesized from  point-to-point lossless source coding problems  with perfect side-information at the decoder. 
New finite blocklength metaconverses for these point-to-point problems are derived by employing the LP-based framework, and the new converse for Slepian-Wolf coding is obtained by an appropriate combination of these converses.
\end{abstract}
\section{Introduction}
The intractability of evaluating the nonasymptotic or finite blocklength fundamental limit of communication has put the onus on discovering finite blocklength achievability and converses that sandwich tightly the nonasymptotic fundamental limit. Accordingly, recent years have witnessed a surge of tight finite blocklength achievability and converses (\cite{polyanskiy2010channel}, \cite{kostina2012fixed}, \cite{kostina2013lossy}, \cite{jose2016linear}), particularly for coding problems in the point-to-point setting. 

Eventhough many sharp and asymptotically tight finite blocklength converses have been obtained in the point-to-point setting employing tools like hypothesis testing \cite{polyanskiy2010channel} and information spectrum \cite{han2003information}, deriving tight finite blocklength converses for multiterminal coding problems still remains particularly challenging. Part of this challenge could be attributed to the difficulty in extending the techniques in the point-to-point setting to the network setting.
In this paper, we consider the  classical multiterminal source coding problem -- the Slepian-Wolf coding problem and show that the extension of the linear programming (LP) based framework we introduced for the point-to-point setting in \cite{jose2016linear}, in fact results in new and improved finite blocklength converses for this problem. Moreover, it yields a framework via a hierarchy of relaxations in which classical converses can be recovered, and converses for the networked problem can be synthesized using a combination of point-to-point converses.

\begin{figure}[h]
\begin{center}
\includegraphics[scale=0.33,clip=true, trim = 0in 5.3in 0in 2.5in]{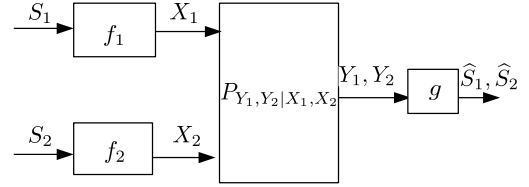} 
\caption{Two-User Joint Source-Channel Coding Problem}\label{fig:SW}
\vspace{-0.6cm}
\end{center}
\end{figure}

Consider the finite blocklength Slepian-Wolf distributed lossless source coding problem (in Figure~\ref{fig:SW}) posed as the following optimization problem,
$$\problemsmalla{SW}
	{f_1,f_2,g}
	{\displaystyle \mathbb{E}[\Ibb\{(S_1,S_2) \neq (\Shat_1,\Shat_2)\}]}
				 {\begin{array}{r@{\ }c@{\ }l}
				 				 X_1&=&f_1(S_1),\\
				 				 X_2&=&f_2(S_2),\\
								 (\Shat_1,\Shat_2)&=&g(Y_1,Y_2),
	\end{array}}$$
	where $S_1,S_2,X_1,X_2,Y_1,Y_2,\Shat_1,\Shat_2$ are discrete random variables taking values in fixed, finite spaces $\Sscr_1,\Sscr_2,\Xscr_1,\Xscr_2,\Yscr_1,\Yscr_2,\Sscrhat_1,\Sscrhat_2$, respectively. Notice that these spaces could be Cartesian products of smaller spaces, and hence could be sets of finite length strings. Here, 
$S_1$ and $S_2$ represent the two \textit{correlated} sources distributed according to a known joint probability distribution $P_{S_1,S_2}$. The source signals are \textit{seperately} encoded  by functions $f_1:\Sscr_1\rightarrow \Xscr_1$ and $f_2:\Sscr_2\rightarrow \Xscr_2$ to produce  signals $X_1=f_1(S_1)$ and $X_2=f_2(S_2)$, respectively. The encoded signals are sent through a deterministic channel with conditional distribution $P_{Y_1,Y_2|X_1,X_2}= \Ibb\{(Y_1,Y_2)=(X_1,X_2)\}$ to get the signal $(Y_1,Y_2)$, where $\Ibb\{\cdot\}$ represents the indicator function which equals unity when `$\cdot$' is true and is zero otherwise.
 $(Y_1,Y_2)$ is then \textit{jointly} decoded by $g:\Yscr_1 \times \Yscr_2 \rightarrow \Sscrhat_1\times \Sscrhat_2$ to obtain the output signal $(\Shat_1,\Shat_2)=g(Y_1,Y_2).$ 
  For the finite blocklength Slepian-Wolf coding problem, we note that spaces $\Sscr_1=\Sscrhat_1$, $\Sscr_2=\Sscrhat_2$, $\Xscr_1=\Yscr_1=\{1,\hdots,M_1\}$ and $\Xscr_2=\Yscr_2=\{1,\hdots, M_2\}$, $M_1,M_2 \in \Nbb$. 
  An error in transmission occurs when $(S_1,S_2)\neq (\Shat_1,\Shat_2)$. Hence, the objective of the finite blocklength Slepian-Wolf coding problem SW is to minimize the probability of error over all \textit{codes}, \ie over all encoder-decoder functions $(f_1,f_2,g)$.	
  
Our interest in this paper is in obtaining finite blocklength converses (or lower bounds) on the optimal value of SW and our approach is via the linear programming (LP) based framework introduced in \cite{jose2016linear}.
In \cite{jose2016linear}, we showed that this framework recovers and  improves on most of the well-known finite blocklength converses for point-to-point coding problems. In particular, the LP framework is shown to imply the metaconverse of Polyanskiy-Poor-Verd\'{u} \cite{polyanskiy2010channel} for finite blocklength channel coding.
For lossy source coding and lossy joint source-channel coding with the probability of excess distortion as the loss criterion, the LP framework results in two levels of improvements on the asymptotically tight tilted-information based converses of Kostina and Verd\'{u} in \cite{kostina2012fixed} and \cite{kostina2013lossy}, respectively.

Fundamental to this framework is the observation that the finite blocklength coding problem can be posed equivalently as a nonconvex optimization problem over joint probability distributions. A natural optimizer's approach~\cite{kulkarni2014optimizer} to obtain lower bounds would then be via a convex relaxation of the nonconvex optimization problem. In particular, resorting to the ``lift-and-project'' technique due to Lovasz, Schrijver, Sherali, Adams and others~\cite{conforti2014integer}, we obtain a LP relaxation of the problem. From linear programming duality, we then get that the objective value of \textit{any feasible point of the dual of this LP relaxation yields a lower bound} on the optimal loss in the finite blocklength problem. As a result of this observation, the problem of obtaining converses reduces to constructing feasible points for the dual linear program.

The converses in~\cite{jose2016linear} stated above for various point-to-point settings emerge as special cases of this LP-based framework, implied by the construction of specific dual feasible points. This tightness of the LP relaxation shows that there is an alternative, asymptotically tight way of thinking about  optimal finite blocklength coding -- as the optimal packing of a pair of \textit{source} and \textit{channel flows} satisfying a certain error density bottleneck. The flows here are the variables of the dual program and the bottleneck, its constraint.

In this paper, we further this theme towards the Slepian-Wolf coding problem. In the present paper we observe that our LP relaxation has an operational interpretation based on optimal transport~\cite{villani2008optimal}, wherein one designs not only codes, but also couplings between them to minimize the resulting `error'.
Using the LP relaxation, we first establish new, clean, meta-converses in the point-to-point setting for lossy source-coding problems with side-information at the decoder; these converses are stronger than our earlier converses in~\cite{jose2016linear}, they imply the hypothesis testing and tilted information based converses of Kostina and Verd\'{u} \cite[Theorem 7,Theorem 8]{kostina2012fixed} and the converse of Palzer and Timo \cite[Theorem 1]{palzer2016converse},  and are, to the best of our knowledge, the strongest known.  Subsequently, we analyse the dual LP of the finite blocklength Slepian-Wolf coding problem. When extended to the networked Slepian-Wolf coding problem, the LP-based framework results in a large number of dual variables and constraints, which makes it quite challenging to analyse and interpret. Consequently, we devise an analytically simpler approach to construct feasible points of the dual program using feasible points of simpler point-to-point problems. This yields tight finite blocklength converses that improve on the hitherto best known converse for this problem, due to Miyake and Kanaya~\cite{miyake1995coding}. 

The dual variables of the LP relaxation of SW also have a structure of `source flows' and `channel flows'. Though, as yet, we do not have physical or operational interpretations for these `flows', they serve as useful analytical devices for synthesizing converse expressions for SW. We find that source and channel flows for problem SW follow a hierarchy such that flows at the highest level satisfy the error density bottleneck, whereas the flows at the next levels have to meet a bottleneck, dictated by the flows at the level above, along various paths in the network.
We show that the well-known information spectrum-based converse of Miyake and Kanaya \cite{miyake1995coding} results from a particular way of constructing these flows. Improvements on this converse follow by synthesizing these flows in a more sophisticated manner. Specifically, by synthesizing flows for the networked problem using flows from the following point-to-point problems: (a) lossless source coding of jointly encoded correlated sources $(S_1,S_2)$, (b) lossless source coding of $S_1$ with perfect side-information of $S_2$ available at the decoder, and (c) lossless source coding of $S_2$ with perfect side-information of $S_1$ at the decoder, 
we show that a new finite blocklength meta-converse results, which improves on the converse of Miyake and Kanaya.

The paper is organized as follows. In Section~\ref{sec:point-to-point}, we consider the point-to-point lossy source coding problem with side-information. By the LP framework and an appropriate choice of source and channel flows, we derive new tight finite blocklength converses for these problems. In Section~\ref{sec:LP}, we discuss the extension of the LP relaxation to problem SW and establish the duality based framework. In Section~\ref{sec:sythesize}, we illustrate how to synthesize  new finite blocklength converses for SW from point-to-point sub-problems and present a new finite blocklength converse which improves on the converse of Miyake and Kanaya.
Lastly, in Section~\ref{sec:interpretation}, we discuss the structure of the constraints of the dual program corresponding to SW and possible avenues for further strengthening of the bound.
\subsection{Notation}
Throughout this paper, we consider only discrete random variables. We make use of the following notation. Upper case letters $A,B$ represent random variables taking values in finite spaces  represented by calligraphic letters, $\Ascr,\Bscr$ respectively; lower case letters $a,b$ represent the specific values these random variables take. $\Ibb\{\cdot\}$ represents the indicator function which is equal to one when `$\cdot$' is true and is zero otherwise. 
$\Pscr(\cdot)$ denotes the set of all probability distributions on $`\cdot$' and $Q\in \Pscr(\cdot)$ represents a specific distribution. If $Q$ is a joint probability distribution, let $Q_{\bullet}$ denote the marginal distribution of `$\bullet$'. For example, $Q_{X|S}$ represents the vector with $Q_{X|S}(x|s)$ for  $x \in \mathcal{X},s\in \mathcal{S}$ as its components.
 Let $P_{A|B}P_{C|D}(a,b,c,d)$ stand for $P_{A|B}(a|b)P_{C|D}(c|d)$. If $P$ represents an optimization problem, then $\OPT(P)$ represents its optimal value and $\FEA(P)$ represents its feasible region.
LHS stands for Left Hand Side and RHS stands for Right Hand Side. The notation $a \independent b$ denotes that $a$ is independent of $b$.
   \section{Finite Blocklength Point-to-Point Source Coding}\label{sec:point-to-point}
  In this section, we consider the point-to-point lossy source coding problem and the lossless source coding problem with side information at the decoder. We employ the LP relaxation framework to obtain finite blocklength converses for these problems.  
  \subsection{Point-to-Point Lossy Source Coding}\label{sec:lossysource}
  \begin{figure}
\begin{center}
\includegraphics[scale=0.5,clip=true, trim = 1in 6in 0in 3.6in]{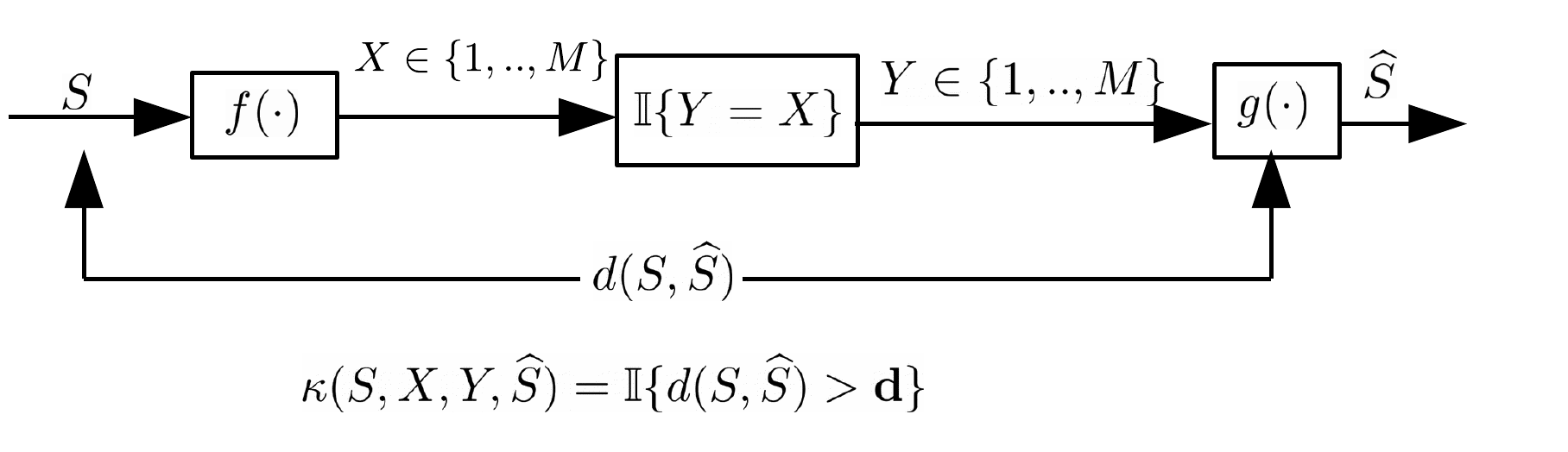} 
\caption{Lossy Source Coding}\label{fig:lossySC}
\vspace{-0.8cm}
\end{center}
\end{figure}
We begin with point-to-point lossy source coding. The finite blocklength lossy source coding problem (Figure~\ref{fig:lossySC}) with probability of excess distortion as the loss criterion can be posed as the following optimization problem,
 $$\problemsmalla{SC}
	{f,g}
	{\displaystyle \mathbb{E}[\Ibb\{d(S,\Shat)>\dbf\}]}
				 {\begin{array}{r@{\ }c@{\ }l}
				 				 X&=&f(S)\\
								 \widehat{S}&=&g(Y).
	\end{array}}
	$$ 
	Here, $S,X,Y$ and $\widehat{S}$ are discrete random variables taking values in fixed, finite spaces $\Sscr,\Xscr,\Yscr,\Sscrhat$ respectively, with $\mathcal{X}=\Yscr=\{1,\hdots,M\}$, $M \in \Nbb$ and $\Sscr=\Sscrhat$. $S$ represents the source message distributed according to a known distribution $P_S$. The source message is encoded according to $f:\Sscr \rightarrow \Xscr$ to get the signal $X$ which is transmitted across a deterministic channel with conditional probability distribution $P_{Y|X}= \Ibb\{Y=X\}$. $Y$ represents the channel output which is decoded according to $g:\Yscr \rightarrow \Sscrhat$ to get the message $\Shat $ at the destination. $d:\Sscr \times \Sscrhat \rightarrow [0,+\infty]$ represents the distortion measure and $\dbf\geq 0$ represents the distortion level. The optimization problem SC, thus, seeks to find a code $(f,g)$ which minimizes  $\mathbb{E}[\Ibb\{d(S,\Shat)>\dbf\}]=\Pbb[d(S,\Shat)>\dbf]$, 	the probability of excess distortion under the measure $\Pbb$ induced by $f,g$.
	
	 SC can be posed equivalently as the following optimization problem over joint probability distributions,
	$$
\problemsmalla{$\SC$}
	{Q,Q_{X|S},Q_{\widehat{S}|Y}}
	{\displaystyle \sum_{s,\shat}\Ibb\{d(s,\shat)>\dbf\} \sum_{x,y}Q(s,x,y,\shat)}
				 {\begin{array}{r@{\ }c@{\ }l}
				 Q(\zhat)&\equiv &P_{S} Q_{X|S} P_{Y|X}  Q_{\widehat{S}|Y}(\zhat),\\
				 				 \sum_{x} Q_{X|S}(x|s)&=& 1 \quad \forall s \in \mathcal{S},\\
\sum_{\shat} Q_{\widehat{S}|Y}(\shat|y)&=&1 \quad\forall y \in \mathcal{Y},\\
 Q_{X|S}(x|s) &\geq& 0\quad \forall s \in \mathcal{S},x \in \mathcal{X},\\
  Q_{\widehat{S}|Y}(\shat|y) &\geq& 0 \quad \forall \shat \in \mathcal{\widehat{S}},y \in \mathcal{Y},
	\end{array}}
	$$  
	where $\zhat:=(s,x,y,\shat) \in \widehat{\Zscr}$, $\widehat{\Zscr}:=\Sscr \times \Xscr \times \Yscr \times \Sscrhat$ and $P_S Q_{X|S} P_{Y|X}  Q_{\Shat|Y}(\zhat)\equiv$ $ P_S(s)Q_{X|S}(x|s)P_{Y|X}(y|x)Q_{\Shat|Y}(\shat|y)$. Here, $Q_{X|S}$ represents a randomized encoder and $Q_{\Shat|Y}$ represents a randomized decoder. We refer the readers to \cite{jose2016linear} for details on this formulation.
  
To obtain lower bounds on the optimal value of SC, we adopt the LP relaxation detailed in \cite{jose2016linear}. Towards this, we introduce a new variable $W(s,x,y,\shat)\equiv Q_{X|S}(x|s)Q_{\Shat|Y}(\shat|y)$ and obtain valid constraints involving $W$ through the constraints of the problem. Specifically, multiply both sides of the constraint $\sum_x Q_{X|S}(x|s)\equiv 1$ by $Q_{\Shat|Y}(\shat|y)$ for all $s,y,\shat$ and multiply both sides of $\sum_{\shat}Q_{\Shat|Y}(\shat|y)\equiv 1$ by $Q_{X|S}(x|s)$ for all $s,x,y$. Replacing the bilinear product terms in the resulting set of constraints and in the objective of $\SC$ with $W$, gives new linear  constraints in the variables $Q_{X|S},Q_{\Shat|Y},W$, which together with $Q_{X|S} \in \Pscr(\Xscr|\Sscr)$ and $Q_{\Shat|Y} \in \Pscr(\Sscrhat|\Yscr)$ give the following LP relaxation.
$$ \problemsmalla{LP}
	{ Q_{X|S},Q_{\widehat{S}|Y},W}
	{\displaystyle \sum_{\zhat}\Ibb\{d(s,\shat)>\dbf\}P_{S}(s)P_{Y|X}(y|x)W(\zhat)}
				 {\begin{array}{r@{\ }c@{\ }l}
				 				 \sum_{x} Q_{X|S}(x|s)&=& 1 \hspace{0.65cm} : \gamma^a(s) \qquad  \hspace{0.08cm} \forall s\\
\sum_{\shat} Q_{\Shat|Y}(\shat|y)&=&1 \hspace{0.65cm} :\gamma^b(y) \qquad \hspace{0.08cm}\forall y\\
 \sum_{x} W(\zhat)-Q_{\widehat{S}|Y}(\shat|y)&=&0  \hspace{0.65cm} :\lambda_{\ssf}(s,\shat,y)
  \hspace{0.1cm}  \forall s,\shat,y\\ 
 \sum_{\shat}W(\zhat)-Q_{X|S}(x|s)&=&0  \hspace{0.65cm} :\lambda_{\csf}(x,s,y)\hspace{0.1cm} \forall x,s,y\\
Q_{X|S},Q_{\Shat|Y},W &\geq& 0.
	\end{array}}  $$ 
	Above $\gamma^a,\gamma^b,\lambda_{\ssf}$ and $\lambda_{\csf}$ are Lagrange multipliers corresponding to the respective constraints.

\subsubsection{An operational interpretation via optimal transport}	
The above LP relaxation can be explained operationally by relating it to the optimal transport problem~\cite{villani2008optimal}. Note that for each $s \in \Sscr$ and $y \in \Yscr$, $W(s,\cdot,y,\cdot)$ is a \textit{coupling} on $\Xscr \times \Sscrhat$ between the marginals $Q_{X|S}(\cdot|s)$ and $Q_{\Shat|Y}(\cdot|y)$; let the set of such $W$ be denoted by $\Xi(Q_{X|S},Q_{\Shat|Y})$. The LP relaxation of SC is a \textit{nested} minimization --  the inner minimization is over all couplings $W\in \Xi(Q_{X|S},Q_{\Shat|Y})$ and the outer minimization is over all randomized codes $(Q_{X|S},Q_{\Shat|Y})$: 
\[\min_{Q_{X|S},Q_{\Shat|Y}} \min_{W \in \Xi(Q_{X|S},Q_{\Shat|Y})} \sum_{\zhat}\Ibb\{d(s,\shat)>\dbf\} P_{S}(s)P_{Y|X}(y|x)W(\zhat).\] 
The original problem SC has the outer minimization over codes, but in place of the inner minimization over $W$ it employs the product $Q_{X|S}(\cdot|\cdot)Q_{\Shat|Y}(\cdot|\cdot) \in \Xi(Q_{X|S},Q_{\Shat|Y})$ to obtain the distribution. 
Thus the LP relaxation is arrived at by considering the term $Q_{X|S}(\cdot|\cdot)Q_{\Shat|Y}(\cdot|\cdot)$ in SC as an  element of $\Xi(Q_{X|S},Q_{\Shat|Y})$ and minimizing the resulting cost over all elements of $\Xi(Q_{X|S},Q_{\Shat|Y})$. Operationally speaking,  the LP relaxation seeks to design codes \textit{and} couplings between them that minimize the error under the joint distribution induced by the coupling.

We caution the readers that for multiterminal problems, one must apply this interpretation with additional caveats. We discuss this in Section~\ref{sec:opttransSW}.

\subsubsection{Duality and bounds}	
 Employing the Lagrange multipliers corresponding to the constraints of LP, we obtain the following dual of LP,
$$
\maxproblemsmall{DP}
	{\gamma^a,\gamma^b,\lambda_{\ssf},\lambda_{\csf}}
	{\displaystyle \sum_{s}\gamma^a(s)+\sum_{y}\gamma^b(y)}
				 {\begin{array}{r@{\ }c@{\ }l}
				 				 \gamma^a(s)- \sum_y \lambda_{\csf}(x,s,y) &\leq& 0 \quad \hspace{0.8cm} \forall x,s\hspace{0.2cm}({\rm P1})\\
\gamma^b(y)- \sum_s \lambda_{\ssf}(s,\shat,y) &\leq& 0  \quad \hspace{0.8cm} \forall \shat,y\hspace{0.21cm}({\rm P2})\\
\lambda_{\ssf}(s,\shat,y)+\lambda_{\csf}(x,s,y) &\leq& \Sigma(\zhat)\hspace{0.65cm}\forall \zhat\hspace{0.53cm}({\rm P3})
 	\end{array}}
	$$ where $\Sigma(\zhat)= \Ibb\{d(s,\shat)>\dbf\} P_S(s) \Ibb\{y=x\}$ for all $\zhat$, since $P_{Y|X}(y|x)=\I{y=x}$.
	
	 In problem DP, it is optimal to choose $\gamma^a(s)$ and $\gamma^b(y)$ such that (P1) and (P2) hold with equality, \ie, $\gamma^a(s) \equiv \min_x \sum_y \lambda_{\csf}(x,s,y)$ and $\gamma^b(y)\equiv \min_{\shat} \sum_s \lambda_{\ssf}(s,\shat,y)$. Then 
the optimal value of DP with $\Sigma(\zhat)$ as the RHS of (P3) evaluates to,
\begin{align}
&\OPT(\DP,\Sigma(\zhat))\non \\&=\max_{\lambda_{\ssf},\lambda_{\csf}} \biggl \lbrace \sum_s \min_x \sum_y \lambda_{\csf}(x,s,y)+\sum_y \min_{\shat}\sum_s \lambda_{\ssf}(s,\shat,y)\biggr \rbrace\non\\
& \quad\mbox{s.t} \hspace{1.3cm} \lambda_{\csf}(x,s,y)+\lambda_{\ssf}(s,\shat,y)\leq \Sigma(\zhat)\ \ \forall \zhat \in \widehat{\Zscr}.\label{eq:optDP}
\end{align}
It follows that if we construct functions $\lambda_{\ssf}:\Sscr \times \Sscrhat \times \Yscr \rightarrow \Real$ and $\lambda_{\csf}: \Sscr \times \Xscr \times \Yscr \rightarrow \Real$ satisfying \eqref{eq:optDP}, then linear programming duality implies the following lower bound on $\OPT(\SC)$,
\begin{align}
&\OPT(\SC)\geq \OPT(\LP) =\OPT(\DP) \non\\&\geq \sum_s \min_x \sum_y \lambda_{\csf}(x,s,y)+\sum_y \min_{\shat}\sum_s \lambda_{\ssf}(s,\shat,y).
\label{eq:ptpframework}
\end{align}

Notice that $\lambda_{\ssf}$ and $\lambda_{\csf}$ are functions on subspaces of $\Sscr \times \Xscr \times \Yscr \times \Sscrhat$.
$\lambda_{\csf}$ is a function of the source signal $s$, the channel input $x$ and channel output $y$; we call this function a  \textit{channel flow}. On the other hand, $\lambda_{\ssf}$ is a function of the source signal $s$, the decoder input $y$ and decoder output $\shat$. We refer to it as a \textit{source flow}.
Hence, for the point-to-point finite blocklength source coding problem, our LP-based framework reduces to constructing a \textit{source flow} and a \textit{channel flow} such that they satisfy the bottleneck imposed by the constraint (P3). The RHS of (P3) is the ``error density'', $\Sigma(\zhat)=P_S(s)\Ibb\{y=x\}\Ibb\{d(s,\shat)>\dbf\}$,
 and hence the challenge is to \textit{optimally pack} a source flow and a channel flow so as to not exceed the error density.

It was shown in \cite{jose2016linear} that by an appropriate construction of these source and channel flows, a new finite blocklength converse for lossy source coding results which improves on the tilted information based converse of Kostina and Verd\'{u} \cite{kostina2012fixed}. Modifying and generalizing this construction of flows, we now present a new metaconverse for lossy source coding, which \textit{implies} our improvement on the Kostina-Verd\'{u} converse and the hypothesis testing based converse in \cite[Theorem 8]{kostina2012fixed}.
To the best of our knowledge, the metaconverse below is the strongest known. 
\begin{theorem}[Metaconverse for Lossy Source Coding]\label{thm:metaconverse}
Consider problem SC. For any code,
\begin{align}
&\Ebb[\Ibb\{d(S,\Shat)>\dbf\}]\geq \OPT(\SC)\geq \OPT(\LP)=\OPT(\DP)\non\\& \geq \sup_{0\leq \phi(s) \leq P_S(s)}\biggl\lbrace \sum_s \phi(s)-M\max_{\shat}\sum_s \phi(s) \Ibb\{d(s,\shat)\leq \dbf\}\biggr \rbrace \label{eq:metalossy},
\end{align}
where the supremum is over all functions $\phi: \Sscr \rightarrow [0, 1]$ such that $0 \leq \phi(s)\leq P_S(s)$ for all $s \in \Sscr$.
\end{theorem}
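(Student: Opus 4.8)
The plan is to exhibit an explicit pair of source and channel flows $(\lambda_{\ssf},\lambda_{\csf})$ feasible for the reduced dual program~\eqref{eq:optDP}, parametrized by an arbitrary function $\phi:\Sscr\to[0,1]$ with $0\le\phi(s)\le P_S(s)$, and then invoke the LP-duality bound~\eqref{eq:ptpframework}. Recall that the error density is $\Sigma(\zhat)=P_S(s)\,\Ibb\{y=x\}\,\Ibb\{d(s,\shat)>\dbf\}$, so the constraint $\lambda_{\csf}(x,s,y)+\lambda_{\ssf}(s,\shat,y)\le\Sigma(\zhat)$ is vacuous whenever $y\ne x$ (the sum may then be any nonnegative quantity as long as the negative parts cooperate) and, when $y=x$, forces the sum to be $\le 0$ on the ``good'' set $\{d(s,\shat)\le\dbf\}$ and $\le P_S(s)$ otherwise. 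The natural choice is to let the channel flow carry the positive mass $\phi$ along the diagonal: set $\lambda_{\csf}(x,s,y):=\phi(s)\,\Ibb\{x=y\}$, which contributes $\sum_s\min_x\sum_y\lambda_{\csf}(x,s,y)=\sum_s\phi(s)$ to the objective. The source flow must then be negative enough to absorb $\phi(s)$ exactly on the good set: the candidate is $\lambda_{\ssf}(s,\shat,y):=-\phi(s)\,\Ibb\{d(s,\shat)\le\dbf\}$, whose contribution is $\sum_y\min_{\shat}\sum_s\lambda_{\ssf}(s,\shat,y)=-\sum_y\max_{\shat}\sum_s\phi(s)\Ibb\{d(s,\shat)\le\dbf\}=-M\max_{\shat}\sum_s\phi(s)\Ibb\{d(s,\shat)\le\dbf\}$, using $|\Yscr|=M$. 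Adding the two contributions gives exactly the bracketed quantity in~\eqref{eq:metalossy}, and taking the supremum over admissible $\phi$ finishes the argument, provided feasibility holds.

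The remaining work is to verify the single constraint $\lambda_{\csf}(x,s,y)+\lambda_{\ssf}(s,\shat,y)\le\Sigma(\zhat)$ for every $\zhat=(s,x,y,\shat)$. I would split into cases on whether $y=x$ and whether $d(s,\shat)\le\dbf$. If $y\ne x$: the channel flow term vanishes, the source flow term is $\le 0$, and $\Sigma(\zhat)=0$, so the inequality $-\phi(s)\Ibb\{d(s,\shat)\le\dbf\}\le 0$ holds since $\phi\ge 0$. If $y=x$ and $d(s,\shat)\le\dbf$: the left side is $\phi(s)-\phi(s)=0=\Sigma(\zhat)$, so equality holds. If $y=x$ and $d(s,\shat)>\dbf$: the left side is $\phi(s)-0=\phi(s)\le P_S(s)=\Sigma(\zhat)$ by the hypothesis on $\phi$. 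Thus all cases check out and the pair is feasible for~\eqref{eq:optDP}. One should also note the trivial direction: taking $\phi\equiv 0$ recovers the bound $0$, so the supremum is well-defined and nonnegative-or-better; and the chain $\Ebb[\Ibb\{d(S,\Shat)>\dbf\}]\ge\OPT(\SC)\ge\OPT(\LP)=\OPT(\DP)$ is exactly~\eqref{eq:ptpframework} established earlier.

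I do not expect a genuine obstacle here — the construction is essentially forced by the structure of $\Sigma$, and the only subtlety is bookkeeping the $\min$/$\max$ in the reduced objective~\eqref{eq:optDP} correctly (in particular that $\min_x\sum_y[\phi(s)\Ibb\{x=y\}]=\phi(s)$ for each $s$, since for any fixed $x$ exactly one $y$ hits the diagonal, and that the $y$-sum of the source-flow minimum produces the factor $M$). If anything, the work that deserves care in the full write-up is the \emph{second} claim implicit in the theorem statement — that this bound \emph{implies} the Kostina--Verd\'u hypothesis-testing and tilted-information converses of~\cite[Theorem 7, Theorem 8]{kostina2012fixed} and the Palzer--Timo converse — but since the theorem as stated only asserts the displayed inequality~\eqref{eq:metalossy}, the feasibility verification above suffices; the comparisons with prior converses would be handled in subsequent remarks by specializing $\phi$ (e.g.\ $\phi(s)=P_S(s)\Ibb\{s\in\mathcal{A}\}$ for suitable sets $\mathcal{A}$, or $\phi$ built from a threshold on tilted information).
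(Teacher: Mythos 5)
Your proposal is correct and follows essentially the same route as the paper: the same choice of flows $\lambda_{\csf}(x,s,y)=\phi(s)\Ibb\{x=y\}$ and $\lambda_{\ssf}(s,\shat,y)=-\phi(s)\Ibb\{d(s,\shat)\le\dbf\}$, the same case-by-case verification of the bottleneck constraint (P3), and the same invocation of the duality bound \eqref{eq:ptpframework}. Your explicit bookkeeping of the two objective contributions (yielding $\sum_s\phi(s)$ and the factor $M$ from the sum over $y$) is a detail the paper leaves implicit, but the argument is identical in substance.
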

\begin{proof}
Consider the following values of source and channel flows,
\begin{align}
\lambda_{\csf}(x,s,y)&\equiv \Ibb\{y=x\}\phi(s),\label{eq:jointfeasible}\\
\lambda_{\ssf}(s,\shat,y)&\equiv -\phi(s)\Ibb\{d(s,\shat)\leq \dbf\}.\non
\end{align} We now check if the above choice of flows satisfy constraint (P3). For this, consider the following two cases.\\
Case 1: $\Ibb\{d(s,\shat)> \dbf\}=1$.\\
In this case, $\lambda_{\ssf}(s,\shat,y)=0$ and $\lambda_{\csf}(x,s,y)=\Ibb\{y=x\}\phi(s)\leq P_S(s)\Ibb\{y=x\},$ which is the RHS of (P3).
\\Case 2: $\Ibb\{d(s,\shat)> \dbf\}=0$.\\In this case, the RHS of (P3) is zero and LHS becomes,
$\Ibb\{y=x\}\phi(s)-\phi(s) \leq 0,$ thereby satisfying (P3).
Hence, the considered choice of flows satisfy constraint (P3).
 Consequently, the required lower bound follows from \eqref{eq:ptpframework}
by taking supremum over $\phi$ such that $0\leq \phi(s)\leq P_S(s)$ $ \forall s \in \Sscr$.
\end{proof}
In particular, choosing $\phi(s)=\min\{P_{S}(s),z(s)\}$ in \eqref{eq:metalossy} where $z:\Sscr \rightarrow [0,\infty)$, and taking the  supremum over such $z$, we get the following bound,
\begin{align}
&\Ebb[\Ibb\{d(S,\Shat)>\dbf\}]\geq \OPT(\SC)\geq \OPT(\LP)\non\\& \geq \sup_{z\geq 0}\biggl\lbrace \sum_s \min\{P_S(s),z(s)\}\non\\&-M\max_{\shat}\sum_s \min\{P_S(s),z(s)\} \Ibb\{d(s,\shat)\leq \dbf\}\biggr \rbrace.\label{eq:metalossy1}
\end{align}
\begin{remarkc}[(Choice of Flows)] An easy way of motivating the choice of flows is as follows. 
Observe that if $0\leq \phi(s) \leq P_S(s)$ for all $s,$ we have that,
\begin{align*}
\Sigma(\zhat)&\geq \phi(s)\Ibb\{y=x\}\Ibb\{d(s,\shat)>\dbf\}\\&
=\phi(s)\Ibb\{y=x\}-\phi(s)\Ibb\{y=x\}\Ibb\{d(s,\shat)\leq \dbf\}. 
\end{align*}
An obvious choice of the flows would thus be $\lambda_{\csf}(x,s,y)=\phi(s)\Ibb\{y=x\}$ and $\lambda_{\ssf}(s,\shat,y)\leq -\phi(s)\Ibb\{y=x\}\Ibb\{d(s,\shat)\leq \dbf\}$, which results in our metaconverse in \eqref{eq:metalossy}.
\end{remarkc}

The following results are corollaries to the metaconverse. $j_S(\cdot,\dbf)$ below is the $d$-tilted information; we refer the reader to \cite{kostina2012fixed} for details.
\begin{corollary}
\textit{(Metaconverse Recovers Improvement on Kostina-Verd\'{u} Converse from \cite[Corollary 5.9]{jose2016linear})}\\
The following converse follows from \eqref{eq:metalossy1}:
\begin{align*}
&\Ebb[\Ibb\{d(S,\Shat)>\dbf\}]\geq \OPT(\SC)\geq \OPT(\LP)=\OPT(\DP)\non\\& \geq \sup_{\gamma} \biggl \lbrace \Pbb[j_S(S,\dbf) \geq \gamma+ \log M]+\frac{1}{M}\times\nonumber\\&\quad \sum_sP_S(s)\exp(j_S(s,\dbf)-\gamma)\Ibb{\{j_S(s,\dbf) < \log M+\gamma\}}\non\\&-\max_{\shat}\exp(-\gamma)\sum_s P_S(s)\exp(j_S(s,\dbf))\Ibb\{d(s,\shat)\leq \dbf\} \biggr \rbrace,
\end{align*}which is the improvement on the Kostina-Verd\'{u} tilted information based converse in \cite[Corollary 5.9]{jose2016linear}.
\end{corollary}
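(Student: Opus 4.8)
The plan is to derive the stated corollary as a direct specialization of the metaconverse bound~\eqref{eq:metalossy1}, choosing the free function $z$ in terms of a scalar parameter $\gamma$ via the $d$-tilted information. First I would recall the defining property of the $d$-tilted information $j_S(s,\dbf)$: for the optimal rate-distortion random variable achieving $\mathbb{R}_S(\dbf)$, one has $\mathbb{E}[\exp(j_S(s,\dbf) - \lambda^\star d(s,\Shat))] \le 1$ for the appropriate slope $\lambda^\star$, and more to the point, the quantity $\exp(-j_S(s,\dbf)) = \mathbb{E}[\exp(-\lambda^\star(d(s,\Shat)-\dbf))]$ (in the discrete case with the optimal output distribution $P_{\Shat}^\star$), so that $\sum_{\shat} P_{\Shat}^\star(\shat)\,\mathrm{whatever}$ bounds hold. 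The upshot I would use is the standard inequality that for any $\shat$,
\[
P_{\Shat}^\star(\shat) \le \exp(-j_S(s,\dbf))\quad\text{whenever } d(s,\shat)\le \dbf,
\]
or the cruder but sufficient fact that $\sum_{s} P_S(s)\exp(j_S(s,\dbf))\,\Ibb\{d(s,\shat)\le\dbf\}$ controls the "covering cost" of any reconstruction point.

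Next I would substitute $z(s) = P_S(s)\exp(j_S(s,\dbf) - \gamma)$ into~\eqref{eq:metalossy1}. Then $\min\{P_S(s), z(s)\} = P_S(s)\min\{1, \exp(j_S(s,\dbf)-\gamma)\}$, so the first sum splits according to whether $j_S(s,\dbf) \ge \gamma + \log M$ or not — wait, I need to be careful: the threshold in the claimed bound is $\gamma + \log M$, not $\gamma$, which suggests the substitution should actually be $z(s) = \frac{1}{M}P_S(s)\exp(j_S(s,\dbf)-\gamma)$, i.e. absorbing a factor $M$. With that choice, $\min\{P_S(s),z(s)\} = P_S(s)$ exactly when $\exp(j_S(s,\dbf)-\gamma)\ge M$, i.e. when $j_S(s,\dbf)\ge \gamma + \log M$, giving the term $\Pbb[j_S(S,\dbf)\ge \gamma+\log M]$; on the complementary event it equals $\frac1M P_S(s)\exp(j_S(s,\dbf)-\gamma)$, producing the second term $\frac1M\sum_s P_S(s)\exp(j_S(s,\dbf)-\gamma)\Ibb\{j_S(s,\dbf)<\log M + \gamma\}$. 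For the subtracted term, $M\max_{\shat}\sum_s \min\{P_S(s),z(s)\}\Ibb\{d(s,\shat)\le\dbf\} \le M\max_{\shat}\sum_s z(s)\Ibb\{d(s,\shat)\le\dbf\} = \max_{\shat}\exp(-\gamma)\sum_s P_S(s)\exp(j_S(s,\dbf))\Ibb\{d(s,\shat)\le\dbf\}$, using $\min\{P_S(s),z(s)\}\le z(s)$. Assembling the three pieces and taking the supremum over $\gamma$ yields exactly the displayed converse.

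The remaining point is to confirm that this matches the converse of \cite[Corollary 5.9]{jose2016linear} verbatim; here I would simply invoke that the expression obtained is, termwise, the one appearing there — this is a bookkeeping check rather than a new argument, since both are written in the same $j_S$-based form. I expect the main (though minor) obstacle to be getting the $M$ versus $1/M$ factors and the threshold $\gamma + \log M$ to line up consistently between the $\min\{P_S,z\}$ splitting and the subtracted covering term; once the scaling of $z$ is fixed correctly, everything else is immediate from~\eqref{eq:metalossy1}. No appeal to the LP duality is needed beyond what is already established in Theorem~\ref{thm:metaconverse}, since~\eqref{eq:metalossy1} is a stated consequence of it.
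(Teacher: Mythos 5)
Your proposal is correct and follows essentially the same route as the paper: the same substitution $z(s)=\frac{1}{M}P_S(s)\exp(j_S(s,\dbf)-\gamma)$ into \eqref{eq:metalossy1}, the same case split of $\min\{P_S(s),z(s)\}$ at the threshold $j_S(s,\dbf)\geq \gamma+\log M$, and the same weakening $\min\{P_S(s),z(s)\}\leq z(s)$ in the subtracted covering term, followed by the supremum over $\gamma$. The opening remarks about properties of the $d$-tilted information are not needed for the argument, but the core derivation matches the paper's proof exactly.
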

\begin{proof}
To see this, take $z(s)=P_S(s)\frac{1}{M}\exp(j_S(s,\dbf)-\gamma)$ for any scalar $\gamma$ and lower bound $-M\sup_{\shat}\sum_s \min\{P_S(s),z(s)\} \Ibb\{d(s,\shat)\leq \dbf\}$ with $-M\sup_{\shat}\sum_s z(s) \Ibb\{d(s,\shat)\leq \dbf\}$. Subsequently, take supremum over $\gamma$ to get the required bound.
\end{proof}

Consider a binary hypothesis testing problem between distributions $P_S $ and $ Q_S$. Let $\alpha_{\theta}(P_S,Q_S)$ represent the minimum type-I error, $\sum_s P_S(s)T(s)$ over all tests $T$ such that the type-II error, $\sum_s Q_S(s) (1-T(s))$ is at most $\theta$. The following corollary shows that the metaconverse in Theorem~\ref{thm:metaconverse} in fact recovers the hypothesis testing based converse of Kostina and Verd\'{u} \cite[Theorem 8]{kostina2012fixed}.
\begin{corollary}\textit{(Metaconverse Recovers Kostina-Verd\'{u} Hypothesis testing based Converse from \cite[Theorem 8]{kostina2012fixed})}\\
The following converse follows from \eqref{eq:metalossy1},
\begin{align}
&\Ebb[\Ibb\{d(S,\Shat)>\dbf\}]\geq \OPT(\SC)\geq \OPT(\LP)=\OPT(\DP)\non\\& \stackrel{(a)}{\geq} \sup_{Q_S \in \Pscr(\Sscr)}\sup_{\beta\geq 0}\biggl \lbrace \sum_s \min\{P_S(s), \beta Q_S(s)\}-\beta M^{*} \biggr \rbrace \non\\
&  \stackrel{(b)}{ =}\sup_{Q_S \in \Pscr(\Sscr)} \alpha_{M^{*}}(P_S,Q_S)\label{eq:lossyhypoth},
\end{align} which is equivalent to the hypothesis testing based converse of Kostina and Verd\'{u} in \cite[Theorem 8]{kostina2012fixed}. Here, $M^{*}=M\max_{\shat}\mathbb{Q}[d(S,\shat)\leq \dbf]$ and $\mathbb{Q}$ is the measure on $\Sscr$ induced by $Q_S$.
\end{corollary}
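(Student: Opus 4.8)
The plan is to obtain inequality (a) by substituting a one-parameter family of functions into the already-proved bound \eqref{eq:metalossy1}, and then to recognize the resulting supremum (b) as the Neyman--Pearson hypothesis-testing tradeoff via linear-programming duality. No machinery beyond \eqref{eq:metalossy1} and standard LP duality should be needed; the chain $\OPT(\SC)\ge\OPT(\LP)=\OPT(\DP)$ is already in hand, so only the last inequality and the final identity are at issue.

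First I would specialize \eqref{eq:metalossy1} to $z(s) = \beta\, Q_S(s)$, with $Q_S \in \Pscr(\Sscr)$ and $\beta \ge 0$. The leading term then becomes $\sum_s \min\{P_S(s), \beta Q_S(s)\}$, exactly as wanted. For the penalty term, since $\min\{P_S(s),\beta Q_S(s)\} \le \beta Q_S(s)$ for every $s$, one has for each $\shat$
\[
\sum_s \min\{P_S(s),\beta Q_S(s)\}\,\Ibb\{d(s,\shat)\le \dbf\} \ \le\ \beta \sum_s Q_S(s)\,\Ibb\{d(s,\shat)\le\dbf\} \ =\ \beta\, \mathbb{Q}[d(S,\shat)\le \dbf],
\]
so that $M\max_{\shat}\sum_s \min\{P_S(s),\beta Q_S(s)\}\Ibb\{d(s,\shat)\le\dbf\} \le \beta M^{*}$. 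Because this quantity enters \eqref{eq:metalossy1} with a minus sign, replacing it by the larger $\beta M^{*}$ only decreases the bracketed expression, so the lower bound is preserved; taking $\sup_{\beta\ge0}$ and then $\sup_{Q_S}$ gives (a).

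Next, for (b), I would identify $\sup_{\beta\ge 0}\{\sum_s \min\{P_S(s),\beta Q_S(s)\} - \beta M^{*}\}$ with the Lagrangian dual of the linear program defining $\alpha_{M^{*}}(P_S,Q_S)$. Writing that LP as $\min\{\langle P_S,T\rangle : \langle Q_S,T\rangle \ge 1 - M^{*},\ 0\le T \le 1\}$ (the constraint $\sum_s Q_S(s)(1-T(s))\le M^{*}$ being equivalent to $\langle Q_S,T\rangle \ge 1-M^{*}$), dualizing the single scalar inequality with multiplier $\beta\ge 0$ and minimizing the Lagrangian over the box $0\le T\le1$ yields the dual objective $\sum_s \min\{P_S(s),\beta Q_S(s)\} - \beta M^{*}$, using $\sum_s Q_S(s)=1$ and the elementary identity $\min\{P_S(s)-\beta Q_S(s),0\} + \beta Q_S(s) = \min\{P_S(s),\beta Q_S(s)\}$. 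Since this is a finite LP with an obviously feasible primal (e.g. $T\equiv 1$), strong duality applies and gives $\sup_{\beta\ge 0}\{\cdots\} = \alpha_{M^{*}}(P_S,Q_S)$; equivalently, the Neyman--Pearson lemma identifies the optimal $T$ as a likelihood-ratio threshold test. Taking $\sup_{Q_S}$ completes (b), and matching $M^{*}=M\max_{\shat}\mathbb{Q}[d(S,\shat)\le\dbf]$ to the constant in \cite[Theorem 8]{kostina2012fixed} gives the stated equivalence.

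There is no genuine obstacle here; the care points are bookkeeping. The hardest part, such as it is, is keeping the direction of the relaxation in (a) straight: we are \emph{lower}-bounding $\OPT(\LP)$, so it is legitimate to \emph{over}-estimate the subtracted penalty. One should also note that $M^{*}$ depends on $Q_S$ through $\mathbb{Q}$, so the outer supremum ranges over a family in which the test distribution and the type-II level move together; this is harmless, since (a) holds for each fixed $Q_S$ and $\beta$ and (b) is an identity for each fixed $Q_S$.
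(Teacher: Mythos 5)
Your proof is correct. Part (a) is exactly the paper's argument: substitute $z(s)=\beta Q_S(s)$ into \eqref{eq:metalossy1}, over-estimate the subtracted penalty $M\max_{\shat}\sum_s\min\{P_S(s),\beta Q_S(s)\}\Ibb\{d(s,\shat)\leq\dbf\}$ by $\beta M^{*}$, and take suprema; your remark about the direction of the relaxation is the right care point. For part (b), however, you take a genuinely different route. The paper (Corollary~\ref{cor:alphahypoth} in Appendix~A) argues forward from the Neyman--Pearson optimal threshold test $T^{*}(s)=\Ibb\{P_S(s)/Q_S(s)\leq\gamma^{*}\}$ achieving type-II error exactly $1-M^{*}$, computes $\alpha_{M^{*}}(P_S,Q_S)-\gamma^{*}(1-M^{*})$ to obtain $\alpha_{M^{*}}(P_S,Q_S)=\sum_s\min\{P_S(s),\gamma^{*}Q_S(s)\}-\gamma^{*}M^{*}$, and then invokes an external lemma (\cite[Lemma~1]{elkayam15calc}) to identify this with the supremum over $\beta\geq 0$. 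You instead write $\alpha_{M^{*}}(P_S,Q_S)$ as a finite LP, dualize the single type-II constraint with multiplier $\beta$, minimize the partial Lagrangian over the box $0\leq T\leq 1$ to get $\sum_s\min\{P_S(s),\beta Q_S(s)\}-\beta M^{*}$, and conclude by LP strong duality. Your version is self-contained (no external citation), and it quietly sidesteps the existence/attainment issues the NP-test argument glosses over (whether a deterministic threshold $\gamma^{*}$ achieves type-II error exactly $1-M^{*}$ without randomization); the paper's version, in exchange, exhibits the optimal test explicitly. Both are valid, and your observation that $M^{*}$ moves with $Q_S$ but this is harmless is a correct and worthwhile bookkeeping note.
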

\begin{proof}
 To recover the converse in $(a)$ from \eqref{eq:metalossy1}, take $z(s)=\beta Q_S(s)$ where $\beta\geq 0$ and lower bound $-M\sup_{\shat}\sum_s \min\{P_S(s),z(s)\} \Ibb\{d(s,\shat)\leq \dbf\}$ with $-M\sup_{\shat}\sum_s z(s) \Ibb\{d(s,\shat)\leq \dbf\}=-\beta M^*$. Subsequently, take the supremum over $\beta \geq 0$ and $Q_S \in \Pscr(\Sscr)$ to get the required bound. The proof of the relation in $(b)$ is included in Corollary~\ref{cor:alphahypoth} in Appendix~A.
\end{proof}

\begin{corollary}\textit{(Metaconverse Recovers Palzer-Timo Converse \cite[Theorem 1]{palzer2016converse})}\\
The following converse follows from \eqref{eq:metalossy1},
\begin{align*}
&\Ebb[\Ibb\{d(S,\Shat)>\dbf\}]\geq \OPT(\SC)\geq \OPT(\LP)=\OPT(\DP)\non\\& \geq \sup_{\beta \in \Real}\biggl \lbrace \Pbb[j_S(S,\dbf)\geq \beta]\\&-M\max_{\shat}\Pbb[j_S(S,\dbf)\geq \beta,d(S,\shat)\leq \dbf]\biggr \rbrace,
\end{align*} which is the converse of Palzer and Timo in \cite[Theorem 1]{palzer2016converse}.
\end{corollary}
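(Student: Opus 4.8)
The plan is to obtain the Palzer--Timo bound as a one-line specialization of \eqref{eq:metalossy1}, exactly in the spirit of the two preceding corollaries, by taking the free function $z$ to be an indicator truncation of $P_S$ at the super-level set $\{j_S(\cdot,\dbf)\geq\beta\}$.

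First I would recall that \eqref{eq:metalossy1} already provides, for every $z:\Sscr\to[0,\infty)$,
\[
\Ebb[\Ibb\{d(S,\Shat)>\dbf\}]\geq \sum_s \min\{P_S(s),z(s)\}-M\max_{\shat}\sum_s \min\{P_S(s),z(s)\}\,\Ibb\{d(s,\shat)\leq\dbf\},
\]
so it suffices to fix an arbitrary $\beta\in\Real$ and choose an admissible $z$ that turns the right-hand side into the $\beta$-term of the Palzer--Timo supremum. The choice I would make is $z(s):=P_S(s)\,\Ibb\{j_S(s,\dbf)\geq\beta\}$, which is nonnegative and hence feasible in \eqref{eq:metalossy1}.

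Next I would carry out the elementary simplification $\min\{P_S(s),z(s)\}=P_S(s)\,\Ibb\{j_S(s,\dbf)\geq\beta\}$: on $\{j_S(s,\dbf)\geq\beta\}$ this is $\min\{P_S(s),P_S(s)\}=P_S(s)$, and on its complement it is $\min\{P_S(s),0\}=0$. Substituting, the first sum becomes $\sum_s P_S(s)\Ibb\{j_S(s,\dbf)\geq\beta\}=\Pbb[j_S(S,\dbf)\geq\beta]$, and the inner sum indexed by $\shat$ becomes $\sum_s P_S(s)\Ibb\{j_S(s,\dbf)\geq\beta\}\Ibb\{d(s,\shat)\leq\dbf\}=\Pbb[j_S(S,\dbf)\geq\beta,\,d(S,\shat)\leq\dbf]$. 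Hence for this $z$ the right-hand side of \eqref{eq:metalossy1} equals $\Pbb[j_S(S,\dbf)\geq\beta]-M\max_{\shat}\Pbb[j_S(S,\dbf)\geq\beta,\,d(S,\shat)\leq\dbf]$; since $\beta$ was arbitrary, taking the supremum over $\beta\in\Real$ yields the stated converse, which is precisely \cite[Theorem 1]{palzer2016converse}.

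There is essentially no analytical obstacle here; the only point requiring care is the bookkeeping of the truncation, \ie verifying that $z(s)=P_S(s)\Ibb\{j_S(s,\dbf)\geq\beta\}$ is exactly the choice for which the generic ``$\min\{P_S,z\}$'' expression in \eqref{eq:metalossy1} collapses onto the event $\{j_S(S,\dbf)\geq\beta\}$, so that the two truncated sums become the probabilities $\Pbb[\,\cdot\,]$ under $P_S$ that appear in the Palzer--Timo statement.
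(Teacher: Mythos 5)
Your proposal is correct and follows essentially the same route as the paper: both take $z(s)=P_S(s)\,\Ibb\{j_S(s,\dbf)\geq\beta\}$ in \eqref{eq:metalossy1}, observe that $\min\{P_S(s),z(s)\}=z(s)$ since $z\leq P_S$ pointwise, and then take the supremum over $\beta$. Your bookkeeping of the two truncated sums into the probabilities $\Pbb[j_S(S,\dbf)\geq\beta]$ and $\Pbb[j_S(S,\dbf)\geq\beta,\,d(S,\shat)\leq\dbf]$ matches the paper's argument exactly.
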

\begin{proof}
To obtain this converse from \eqref{eq:metalossy1}, take $z(s)=P_S(s)\Ibb\{j_S(s,\dbf)\geq \beta\}$ where $\beta\geq 0$ in  \eqref{eq:metalossy} and take supremum over $\beta\geq 0$. Notice that in this case, $\min\{z(s),P_S(s)\}=z(s)$ since $z(s)=P_S(s)\Ibb\{j_S(s,\dbf)\geq \beta\}\leq P_S(s)$.
\end{proof}

The finite blocklength lossless data compression problem results from SC by setting $d(S,\Shat)=\Ibb\{S \neq \Shat\}$ and $\dbf=0$. The following corollary particularizes the metaconverse  in \eqref{eq:metalossy} to lossless data compression. In this case, the metaconverse takes a particularly simple form.
\begin{corollary}[Metaconverse for Lossless Source Coding]\label{cor:metalossless}
For lossless data compression, consider the setting of Theorem~\ref{thm:metaconverse} with $d(S,\Shat)=\Ibb\{S \neq \Shat\}$ and $\dbf=0$. Consequently, for any code, the following converse follows from Theorem~\ref{thm:metaconverse},
\begin{align}
&\Ebb[\Ibb\{S \neq \Shat\}]\geq \OPT(\SC)\geq \OPT(\LP)= \OPT(\DP)\non\\&\geq  \sup_{0 \leq \phi\leq P_S}\biggl\lbrace \norm{\phi}_1-M\norm{\phi}_\infty \biggr \rbrace. \label{eq:hyplossless}
\end{align}
\end{corollary}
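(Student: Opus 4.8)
The plan is to specialize Theorem~\ref{thm:metaconverse} directly to the lossless setting, since all the substantive work --- the LP relaxation of SC, the duality argument, and the construction of the feasible source and channel flows \eqref{eq:jointfeasible} --- has already been carried out there. What remains is only to evaluate the quantities in the bound \eqref{eq:metalossy} under the choice $d(s,\shat)=\Ibb\{s\neq\shat\}$ and $\dbf=0$, and to recognize that the resulting sums are norms.

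First I would note that with this distortion measure $\Ibb\{d(s,\shat)\leq\dbf\}=\Ibb\{\Ibb\{s\neq\shat\}\leq 0\}=\Ibb\{s=\shat\}$, because $\Ibb\{s\neq\shat\}$ takes only the values $0$ and $1$ and is at most $0$ precisely when $s=\shat$. Since $\Sscr=\Sscrhat$, this gives, for any fixed $\shat\in\Sscr$,
\[
\sum_s \phi(s)\Ibb\{d(s,\shat)\leq\dbf\}=\sum_s \phi(s)\Ibb\{s=\shat\}=\phi(\shat),
\]
so that $\max_{\shat}\sum_s\phi(s)\Ibb\{d(s,\shat)\leq\dbf\}=\max_{\shat\in\Sscr}\phi(\shat)=\norm{\phi}_\infty$, where the last equality uses $\phi\geq 0$. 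Likewise $\sum_s\phi(s)=\norm{\phi}_1$ since $\phi$ is nonnegative.

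Substituting these two evaluations into the right-hand side of \eqref{eq:metalossy} replaces the bracketed expression $\sum_s\phi(s)-M\max_{\shat}\sum_s\phi(s)\Ibb\{d(s,\shat)\leq\dbf\}$ by $\norm{\phi}_1-M\norm{\phi}_\infty$, while the feasible set $\{\,0\leq\phi(s)\leq P_S(s)\ \forall s\,\}$ is unchanged; taking the supremum over this set yields exactly \eqref{eq:hyplossless}. The chain $\Ebb[\Ibb\{S\neq\Shat\}]\geq\OPT(\SC)\geq\OPT(\LP)=\OPT(\DP)$ is inherited verbatim from Theorem~\ref{thm:metaconverse}. There is essentially no obstacle here: the only points requiring (minimal) care are the reduction of the indicator $\Ibb\{d(s,\shat)\leq\dbf\}$ to $\Ibb\{s=\shat\}$ and the use of $\Sscr=\Sscrhat$ together with $\phi\geq 0$ to identify the two sums with the $\ell_1$ and $\ell_\infty$ norms of $\phi$.
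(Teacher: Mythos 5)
Your proposal is correct and matches the paper's (implicit) argument exactly: the corollary is obtained by direct specialization of Theorem~\ref{thm:metaconverse}, noting that $\Ibb\{d(s,\shat)\leq\dbf\}=\Ibb\{s=\shat\}$ so the two sums become $\norm{\phi}_1$ and $\norm{\phi}_\infty$. No gap.
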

In the above converse we have viewed $\phi$ as a vector in $\Real^{|\Sscr|}$ that is nonnegative and dominated by $P_S \in \Real^{|\Sscr|}.$ The maximization in \eqref{eq:hyplossless} is a tradeoff between increasing the $\ell_1$ norm of $\phi$ on the one hand, and decreasing the $\ell_\infty$ norm of $\phi$ on the other. One plausible strategy for this tradeoff is to take $\phi(s) = P_S(s)$ for those $s \in \Sscr$ for which $P_S(s)$ is not too large, and zero otherwise. Specifically, one may take $\phi(s) = P_S(s) \I{P_S(s) \leq  \frac{\exp(-\gamma)}{M} }$ for some $\gamma\geq 0.$ Then the RHS of \eqref{eq:hyplossless} is lower bounded by
\[\sup_{\gamma\geq 0} \big\{\Pbb[ h(S) \geq \log M +\gamma] - \exp(-\gamma)\big \}, \]
where $h(S) = -\log P_S(S).$ The above converse is~\cite[Theorem~7]{kostina2012fixed} specialized to the lossless case.
 

Having outlined the LP based framework for point-to-point lossless source coding, we now consider three problems that will serve as sub-problems for analysing problem SW.
\subsubsection{Lossless Coding of Jointly Encoded Correlated Sources $(S_1,S_2)$}
In this sub-problem of Slepian-Wolf coding problem, the correlated sources
$S_1,S_2$ are \textit{jointly encoded} by $f :\Sscr_1\times \Sscr_2 \rightarrow \Xscr_1 \times \Xscr_2$ to get $(X_1,X_2)$. $(X_1,X_2)$ is sent through the channel $P_{Y_1,Y_2|X_1,X_2}=\Ibb\{(Y_1,Y_2)=(X_1,X_2)\}$ to get $(Y_1,Y_2)$ which is then decoded according to $g :\Yscr_1 \times \Yscr_2 \rightarrow \Sscrhat_1\times\Sscrhat_2$.	
The objective, as for SW problem, is to losslessly recover $(S_1,S_2)$ at the destination. 

It is easy to see that the above joint encoding problem is equivalent to the point-to-point lossless source coding problem SC  with $S:=(S_1,S_2)$, $X:=(X_1,X_2)$, $Y:=(Y_1,Y_2)$, $\Shat:=(\Shat_1,\Shat_2)$ and $d(S,\Shat)=\Ibb\{S \neq \Shat\}$ with $\dbf=0$. 
Consequently, to obtain finite blocklength converses for the joint encoding problem of correlated sources,
%
 we resort to the following generalized version of DP for lossless source coding problem,
$$
\begin{small}
\maxproblemsmall{DPJE}
	{\gammah^a,\gammah^b,\lambdah_{\ssf},\lambdah_{\csf}}
	{\displaystyle \sum_{s_1,s_2}\gammah^a(s_1,s_2)+\sum_{y_1,y_2}\gammah^b(y_1,y_2)}
				 {\begin{array}{r@{\ }c@{\ }l}
				 				 \gammah^a(s_1,s_2)- \sum_{y_1,y_2} \lambdah_{\csf}(s_1,s_2,x_1,x_2,y_1,y_2) &\leq& 0 \\   \forall x_1,x_2,s_1,s_2 &({\rm A1})&\\
\gammah^b(y_1,y_2)- \sum_{s_1,s_2} \lambdah_{\ssf}(s_1,s_2,\shat_1,\shat_2,y_1,y_2) &\leq& 0  \\  \forall \shat_1,\shat_2,y_1,y_2 &({\rm A2})&\\
\lambdah_{\ssf}(s_1,s_2,\shat_1,\shat_2,y_1,y_2)+\lambdah_{\csf}(s_1,s_2,x_1,x_2,y_1,y_2) &\leq& \hspace{-0.3cm}\Upsilon(\ztilde)\\ \forall \ztilde &({\rm A3})&\\
 	\end{array}}
\end{small}	
$$ where $\ztilde:=(s_1,s_2,x_1,x_2,y_1,y_2,\shat_1,\shat_2)$, $\Upsilon(\tilde{z})$ $= \Ibb\{(s_1,s_2)\neq (\shat_1,\shat_2)\}$ $ P_{S_1,S_2}(s_1,s_2) \Ibb\{(y_1,y_2)=(x_1,x_2)\}$ for all $\ztilde$. Though this is a straightforward generalization of DP, we will need this later and hence we have included it here.

As in the case of DP, taking $\gammah^a$ and $\gammah^b$ such that (A1) and (A2) hold with equality,
  $\OPT(\DPJE)$ can be written in terms of the channel flow $\lambdah_{\csf}(s_1,s_2,x_1,x_2,y_1,y_2)$ and source flow $\lambdah_{\ssf}(s_1,s_2,\shat_1,\shat_2,y_1,y_2).$ 
The metaconverse for lossless source coding problem  in Corollary~\ref{cor:metalossless} then readily implies the following corollary.
\begin{corollary}[Metaconverse for Jointly Encoded Sources]\label{cor:metaJE}
Consider problem SC  with $S:=(S_1,S_2)$, $X:=(X_1,X_2)$, $Y:=(Y_1,Y_2)$, $\Shat:=(\Shat_1,\Shat_2)$ and $d(S,\Shat)=\Ibb\{S \neq \Shat\}$ with $\dbf=0$. Consequently for any code, we have from Corollary~\ref{cor:metalossless},
\begin{align}
&\Ebb[\Ibb\{(S_1,S_2) \neq (\Shat_1,\Shat_2)\}]\geq  \OPT(\DPJE) \non\\& \geq\sup_{ 0\leq \phih(s_1,s_2)\leq P_{S_1,S_2}(s_1,s_2)}\biggl\lbrace \sum_{s_1,s_2} \phih(s_1,s_2)\non \\&\qquad
-M_1M_2\max_{\shat_1,\shat_2} \phih(\shat_1,\shat_2) \biggr \rbrace, \label{eq:JEmetaconverse}
\end{align}where the supremum is over  $\phih:\Sscr_1 \times \Sscr_2 \rightarrow [0,1]$ such that $0\leq \phih(s_1,s_2)\leq P_{S_1,S_2}(s_1,s_2)$ for all $s_1 \in \Sscr_1,s_2 \in \Sscr_2$.
\end{corollary}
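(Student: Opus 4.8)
The plan is to obtain Corollary~\ref{cor:metaJE} as a direct specialization of Corollary~\ref{cor:metalossless}, which is itself the lossless particularization of the Metaconverse for Lossy Source Coding (Theorem~\ref{thm:metaconverse}). First I would observe that the jointly encoded correlated sources problem is, as already noted in the text immediately preceding the statement, \emph{literally} an instance of problem SC under the identifications $S:=(S_1,S_2)$, $X:=(X_1,X_2)$, $Y:=(Y_1,Y_2)$, $\Shat:=(\Shat_1,\Shat_2)$, with distortion $d(S,\Shat)=\Ibb\{S\neq\Shat\}$ and $\dbf=0$. The only data to track are the cardinalities: since $\Xscr=\Xscr_1\times\Xscr_2=\Yscr=\Yscr_1\times\Yscr_2$ and $M_1=|\Xscr_1|$, $M_2=|\Xscr_2|$, the parameter $M$ appearing in Corollary~\ref{cor:metalossless} becomes $M=M_1M_2$.

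Next I would invoke Corollary~\ref{cor:metalossless} verbatim for this instance. That corollary states $\Ebb[\Ibb\{S\neq\Shat\}] \geq \OPT(\SC) \geq \OPT(\LP)=\OPT(\DP) \geq \sup_{0\leq\phi\leq P_S}\{\|\phi\|_1 - M\|\phi\|_\infty\}$. Under the substitution above, $P_S = P_{S_1,S_2}$, the variable $\phi$ becomes a function $\phih:\Sscr_1\times\Sscr_2\to[0,1]$ with $0\leq\phih(s_1,s_2)\leq P_{S_1,S_2}(s_1,s_2)$, and $\|\phih\|_1=\sum_{s_1,s_2}\phih(s_1,s_2)$, $\|\phih\|_\infty=\max_{\shat_1,\shat_2}\phih(\shat_1,\shat_2)$. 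The dual program $\DP$ for this instance is exactly $\DPJE$ (with the error density $\Sigma(\zhat)$ becoming $\Upsilon(\ztilde)$, matching the definitions given), so $\OPT(\DP)=\OPT(\DPJE)$, and chaining the inequalities gives precisely \eqref{eq:JEmetaconverse}.

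There is essentially no analytical obstacle here — the work is purely in matching notation and confirming that the deterministic-channel structure $P_{Y_1,Y_2|X_1,X_2}=\Ibb\{(Y_1,Y_2)=(X_1,X_2)\}$ coincides with the $P_{Y|X}=\Ibb\{Y=X\}$ assumed for SC, which it does since $(Y_1,Y_2)=(X_1,X_2)$ iff $Y=X$ under the product identification. The one point worth stating carefully is that the objective $\Ebb[\Ibb\{(S_1,S_2)\neq(\Shat_1,\Shat_2)\}]$ for the joint encoding problem equals $\Ebb[\Ibb\{S\neq\Shat\}]$ for SC under the identification, so the leftmost term of the chain is correct. I would therefore write the proof as: ``Apply Corollary~\ref{cor:metalossless} with the stated substitutions; since $\Xscr=\Yscr=\{1,\dots,M_1\}\times\{1,\dots,M_2\}$ we have $M=M_1M_2$, and the dual $\DP$ specializes to $\DPJE$; rewriting $\|\phih\|_1$ and $\|\phih\|_\infty$ as the sum and max over $\Sscr_1\times\Sscr_2$ yields \eqref{eq:JEmetaconverse}.'' The only ``hard part'' is bookkeeping, and even that is minimal given how the preceding text has already set up the equivalence.
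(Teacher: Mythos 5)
Your proposal is correct and matches the paper's intent: the paper also derives this corollary as the specialization of Corollary~\ref{cor:metalossless} to the product alphabets, with $M=M_1M_2$, and its written proof merely records the corresponding explicit dual-feasible flows $\lambdah_{\csf}\equiv\Ibb\{(y_1,y_2)=(x_1,x_2)\}\phih(s_1,s_2)$ and $\lambdah_{\ssf}\equiv-\phih(s_1,s_2)\Ibb\{(s_1,s_2)=(\shat_1,\shat_2)\}$ for DPJE (needed again later in Theorem~\ref{thm:feasDPSW}) rather than citing the specialization abstractly. Your bookkeeping of $M$, $\|\phih\|_1$, $\|\phih\|_\infty$, and the deterministic channel identification is all accurate, so there is no gap.
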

\begin{proof}
To obtain the required converse, we consider the following choice for the flows in DPJE, which generalizes  the ones adopted in \eqref{eq:jointfeasible}.
\begin{align}
\lambdah_{\csf}(x_1,x_2,s_1,s_2,y_1,y_2)\hspace{-0.03cm}&\hspace{-0.1cm}\equiv \Ibb\{(y_1,y_2)\hspace{-0.05cm}=\hspace{-0.05cm}(x_1,x_2)\}\phih(s_1,s_2)\hspace{-0.1cm}\label{eq:JEmetafeasible}\\
\lambdah_{\ssf}(s_1,s_2,\shat_1,\shat_2,y_1,y_2)&\equiv -\phih(s_1,s_2)\Ibb\{(s_1,s_2)= (\shat_1,\shat_2)\},\non
\end{align} The feasibility of these flows with respect to (A3) can be verified as in the proof of Theorem~\ref{thm:metaconverse} and we skip the proof here.
\end{proof}
\subsection{Lossless Source Coding of $S_1$ with $S_2$ as the Side-Information}
We now consider the following sub-problem of Slepian-Wolf coding: $S_1$ is to be recovered losslessly at the destination with $S_2$ available as  side-information at the decoder (Figure~\ref{Fig:sideinformation}).
  Towards this, $S_1$ is encoded according to $f_1 :\Sscr_1 \rightarrow \Xscr_1$ to get $X_1$, which is transmitted through the channel $P_{Y_1|X_1}=\Ibb\{Y_1=X_1\}$ to get $Y_1$. $S_2$ is the side information available at the decoder which decodes according to $g :\Sscr_2 \times \Yscr_1 \rightarrow \Sscrhat_1$ to get $\Shat_1$.
 The finite blocklength source coding of $S_1$ given $S_2$ as the side information can be then posed as the following optimization problem,
  $$\problemsmalla{$\SID_{1|2}$}
	{f_1,g}
	{\displaystyle \mathbb{E}[\Ibb\{S_1 \neq \Shat_1\}]}
				 {\begin{array}{r@{\ }c@{\ }l}
				 				 X_1&=&f_1(S_1),\quad
								 \Shat_1=g(S_2,Y_1).
	\end{array}}
	$$ Thus, $\SID_{1|2}$ seeks to obtain a code $(f_1,g)$ which minimizes $\mathbb{E}[\Ibb\{S_1 \neq \Shat_1\}]=\Pbb[S_1 \neq \Shat_1]$, the average probability of error. 
	

To obtain finite blocklength converses, we employ the LP relaxation approach in \cite{jose2016linear} to obtain the following LP relaxation of the problem $\SID_{1|2}$. 
$$
\begin{small}
 \problemsmalla{$\LPSI_{1|2}$}
	{ Q_{X_1|S_1},Q_{\widehat{S}_1|Y_1,S_2},W}
	{\displaystyle \sum_{\zbar}\Psi(\zbar)W(\zbar)}
				 {\begin{array}{r@{\ }c@{\ }l}
				 				 \sum_{x_1} Q_{X_1|S_1}(x_1|s_1)&\equiv& 1 \hspace{0.01cm} : \gammab^a(s_1)\\
\sum_{\shat_1} Q_{\Shat_1|Y_1,S_2}(\shat_1|y_1,s_2)&\equiv &1 \hspace{0.01cm} :\gammab^b(y_1,s_2) \\
 \sum_{x_1} W(\zbar)-Q_{\widehat{S}_1|Y_1,S_2}(\shat_1|y_1,s_2)&=&0   :\lambdab^{(1|2)}_{\ssf}(\sbf,\shat_1,y_1)
 \\ 
 \sum_{\shat_1}W(\zbar)-Q_{X_1|S_1}(x_1|s_1)&\equiv&0   :\lambdab^{(1|2)}_{\csf}(\sbf,x_1,y_1)\\
Q_{X_1|S_1},Q_{\Shat_1|Y_1,S_2},W &\geq& 0,
	\end{array}}
	 \end{small} 
	 $$ where $ \sbf:=(s_1,s_2),$ $\zbar:=(x_1,s_1,s_2,\shat_1,y_1)$ and
	$\Psi(\zbar)=P_{S_1,S_2}(s_1,s_2)\Ibb\{s_1\neq \shat_1\}\Ibb\{y_1=x_1\}.$
	
	Employing the Lagrange multipliers $\gammab^a,\gammab^b,\lambdab^{(1|2)}_{\ssf}$ and $\lambdab^{(1|2)}_{\csf}$ corresponding to the constraints of $\LPSI_{1|2}$, we obtain the following dual of $\LPSI_{1|2}$.
	$$\maxproblemsmall{$\DPSI_{1|2}$\hspace{0.17cm}}
	{ \gammab^a,\gammab^b,\lambdab^{(1|2)}_{\ssf},\lambdab^{(1|2)}_{\csf}b}
	{\displaystyle \sum_{s_1}\gammab^a(s_1)+\sum_{y_1,s_2} \gammab^b(y_1,s_2)}
				 {\begin{array}{r@{\ }c@{\ }l}
				 				 				\gammab^a(s_1)-\sum_{y_1,s_2} \lambdab^{(1|2)}_{\csf}(s_1,s_2,x_1,y_1) &\leq & 0 \\ \forall x_1,s_1\quad ({\rm B1})&  &\\
				 				 				\gammab^b(s_2,y_1)-\sum_{s_1} \lambdab^{(1|2)}_{\ssf}(s_1,s_2,\shat_1,y_1) &\leq & 0\\ \forall s_2,\shat_1,y_1\quad ({\rm B2})&  &\\	\lambdab^{(1|2)}_{\ssf}(s_1,s_2,\shat_1,y_1)+\lambdab^{(1|2)}_{\csf}(s_1,s_2,x_1,y_1)&\leq & \hspace{-0.15cm} \Psi(\zbar)   \\  \quad \forall \zbar \quad ({\rm B3})& &
\end{array}} $$ 
Choosing $\gammab^a(s_1)$ and $\gammab^b(s_2,y_1)$ such that (B1) and (B2) hold with equality, 
$\OPT(\DPSI_{1|2})$ can be written in terms of $\lambdab^{(1|2)}_{\csf}(s_1,s_2,x_1,y_1)$ and $\lambdab^{(1|2)}_{\ssf}(s_1,s_2,\shat_1,y_1).$
Notice that $\lambdab^{(1|2)}_{\csf}$ is a function of $x_1,s_1,s_2,y_1$. Thus, for each $s_2 \in \Sscr_2$, it is akin to a channel flow of the point-to-point source coding problem with $S_1$ as the source. Likewise, for each $s_2 \in \Sscr_2$, $\lambdab^{(1|2)}_{\ssf}(s_1,s_2,\shat_1,y_1)$ is akin to a \textit{source flow} for this problem.
Following these observations, we now show that an appropriate construction of these source and channel flows results in the following finite blocklength converse for $\SID_{1|2}$.
\begin{theorem}\label{thm:metaside12}
Consider the problem $\SID_{1|2}$. For any code, the following lower bound holds, 
\begin{align} 
&\Ebb[\Ibb\{S_1\neq \Shat_1\}]\geq \OPT(\SID_{1|2})\geq \non\\
& \OPT(\LPSI_{1|2}) \geq \sup_{0\leq \phi^{(1|2)} \leq P_{S_1,S_2}}\hspace{-0.1cm}\biggl \lbrace \hspace{-0.05cm}\sum_{s_1,s_2}\phi^{(1|2)}(s_1,s_2)\non \\& \qquad -M_1 \sum_{s_2}\max_{\shat_1}\phi^{(1|2)}(\shat_1,s_2) \biggr\rbrace, \label{eq:SIDmetaconverse}
\end{align}where the supremum is over  $\phi^{(1|2)}:\Sscr_1 \times \Sscr_2 \rightarrow [0,1]$ such that $\phi^{(1|2)}(s_1,s_2)\leq P_{S_1,S_2}(s_1,s_2)$ for all $s_1 \in \Sscr_1, s_2 \in \Sscr_2$.
\end{theorem}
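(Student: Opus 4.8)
The plan is to follow the template of Theorem~\ref{thm:metaconverse}: exhibit an explicit feasible point of the dual program $\DPSI_{1|2}$ indexed by a function $\phi^{(1|2)}$, evaluate the dual objective there, and then invoke the validity of the LP relaxation (detailed in \cite{jose2016linear}) together with linear programming strong duality to obtain $\Ebb[\Ibb\{S_1\neq\Shat_1\}]\ge\OPT(\SID_{1|2})\ge\OPT(\LPSI_{1|2})=\OPT(\DPSI_{1|2})$, which is in turn at least the value of the constructed dual point. The only substantive choice is that of the flows, and I would take the natural lift of \eqref{eq:jointfeasible} which carries the side-information symbol $s_2$ along as a passive index:
\begin{align*}
\lambdab^{(1|2)}_{\csf}(s_1,s_2,x_1,y_1)&\equiv \Ibb\{y_1=x_1\}\,\phi^{(1|2)}(s_1,s_2),\\
\lambdab^{(1|2)}_{\ssf}(s_1,s_2,\shat_1,y_1)&\equiv -\phi^{(1|2)}(s_1,s_2)\,\Ibb\{s_1=\shat_1\},
\end{align*}
with $\phi^{(1|2)}$ ranging over functions on $\Sscr_1\times\Sscr_2$ with $0\le\phi^{(1|2)}\le P_{S_1,S_2}$.

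First I would verify feasibility with respect to (B3) by the same two-case split as in Theorem~\ref{thm:metaconverse}. When $s_1\neq\shat_1$ the source flow vanishes and the left side of (B3) equals $\Ibb\{y_1=x_1\}\phi^{(1|2)}(s_1,s_2)\le P_{S_1,S_2}(s_1,s_2)\Ibb\{y_1=x_1\}=\Psi(\zbar)$, using $\phi^{(1|2)}\le P_{S_1,S_2}$ and $P_{Y_1|X_1}(y_1|x_1)=\Ibb\{y_1=x_1\}$. When $s_1=\shat_1$ the right side $\Psi(\zbar)$ vanishes while the left side is $\Ibb\{y_1=x_1\}\phi^{(1|2)}(s_1,s_2)-\phi^{(1|2)}(s_1,s_2)\le 0$. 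As in \eqref{eq:optDP}, constraints (B1) and (B2) can be made tight, i.e.\ $\gammab^a(s_1)\equiv\min_{x_1}\sum_{y_1,s_2}\lambdab^{(1|2)}_{\csf}(s_1,s_2,x_1,y_1)$ and $\gammab^b(y_1,s_2)\equiv\min_{\shat_1}\sum_{s_1}\lambdab^{(1|2)}_{\ssf}(s_1,s_2,\shat_1,y_1)$, after which the dual value is $\sum_{s_1}\gammab^a(s_1)+\sum_{y_1,s_2}\gammab^b(y_1,s_2)$.

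Next I would substitute the chosen flows and simplify. For the channel term, $\sum_{y_1}\Ibb\{y_1=x_1\}=1$ gives $\sum_{y_1,s_2}\lambdab^{(1|2)}_{\csf}(s_1,s_2,x_1,y_1)=\sum_{s_2}\phi^{(1|2)}(s_1,s_2)$, independent of $x_1$, so $\sum_{s_1}\gammab^a(s_1)=\sum_{s_1,s_2}\phi^{(1|2)}(s_1,s_2)$. For the source term, $\sum_{s_1}\lambdab^{(1|2)}_{\ssf}(s_1,s_2,\shat_1,y_1)=-\phi^{(1|2)}(\shat_1,s_2)$, hence $\gammab^b(y_1,s_2)=-\max_{\shat_1}\phi^{(1|2)}(\shat_1,s_2)$, independent of $y_1$, and summing over $y_1\in\Yscr_1=\{1,\dots,M_1\}$ and over $s_2$ yields $\sum_{y_1,s_2}\gammab^b(y_1,s_2)=-M_1\sum_{s_2}\max_{\shat_1}\phi^{(1|2)}(\shat_1,s_2)$. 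Adding the two pieces and taking the supremum over all admissible $\phi^{(1|2)}$ produces exactly \eqref{eq:SIDmetaconverse}.

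I do not expect a genuine obstacle: the argument is structurally parallel to Theorem~\ref{thm:metaconverse} and Corollary~\ref{cor:metaJE}, and the feasibility check is a routine two-case verification. The one place that merits care is the bookkeeping in the last step: the factor $M_1$ arises solely from the free summation over the channel-output alphabet $\Yscr_1$ in $\gammab^b$, whereas the side-information symbol $s_2$ is passed through unaltered and therefore enters only through an ordinary sum over $\Sscr_2$ with the maximization over $\shat_1$ kept \emph{inside} that sum, contributing no multiplicative penalty. It is worth re-reading the dual constraints (B1)--(B3) to confirm that $s_2$ is genuinely a free argument of both flows and of $\gammab^b$, since this is precisely what keeps the side-information alphabet out of the constant multiplying the penalty term, and hence what lets \eqref{eq:SIDmetaconverse} capture the benefit of decoder side information.
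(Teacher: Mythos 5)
Your proposal is correct and follows essentially the same route as the paper: you use the identical flows $\lambdab^{(1|2)}_{\csf}=\Ibb\{y_1=x_1\}\phi^{(1|2)}(s_1,s_2)$ and $\lambdab^{(1|2)}_{\ssf}=-\phi^{(1|2)}(s_1,s_2)\Ibb\{s_1=\shat_1\}$, the same two-case feasibility check for (B3), and the same duality argument, merely writing out the evaluation of $\gammab^a$ and $\gammab^b$ (including the origin of the factor $M_1$ from the sum over $\Yscr_1$) that the paper leaves implicit.
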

\begin{proof}
To obtain the required converse, we consider the following values for the source flow and channel flow,
\begin{align}
\lambdab^{(1|2)}_{\csf}(s_1,s_2,x_1,y_1)&=\Ibb\{x_1=y_1\} \phi^{(1|2)}(s_1,s_2),\non\\
\lambdab^{(1|2)}_{\ssf}(s_1,s_2,\shat_1,y_1)&=-\phi^{(1|2)}(s_1,s_2)\Ibb\{s_1=\shat_1\}\label{eq:SIDfeasible}.
\end{align} The feasibility of these flows with respect to (B3) can be verified as in the proof of Theorem~\ref{thm:metaconverse}. Consequently, employing linear programming duality and taking supremum over $\phi^{(1|2)}$ gives the required bound.
\end{proof}
Notice that choosing $\phi^{(1|2)}(s_1,s_2)=\min\{P_{S_1,S_2}(s_1,s_2),\eta_2(s_1,s_2)\}$, $\eta_2 :\Sscr_1 \times \Sscr_2 \rightarrow [0,1]$, yields the following bound,
$\Ebb[\Ibb\{S_1\neq \Shat_1\}]\geq \OPT(\SID_{1|2})\geq $
\begin{align}
& \OPT(\LPSI_{1|2})\geq \sup_{\eta_2 \geq 0}\biggl \lbrace \sum_{s_1,s_2}\min\{P_{S_1,S_2}(s_1,s_2),\eta_2(s_1,s_2)\}\non\\&-M_1 \sum_{s_2}\max_{\shat_1}\min\{P_{S_1,S_2}(s_1,s_2),\eta_2(s_1,s_2)\} \biggr\rbrace. \label{eq:SIDmetaconverse1}
\end{align}
When particularized to $\eta_2(s_1,s_2)=P_{S_2}(s_2)\frac{\exp(-\beta)}{M_1}$ for $\beta\geq0$, where $P_{S_2}(s_2)=\sum_{s_1}P_{S_1,S_2}(s_1,s_2)$ and taking supremum over $\beta$, the converse in \eqref{eq:SIDmetaconverse1} becomes,
\begin{align}
&\Ebb[\Ibb\{S_1\neq \Shat_1\}]\geq  \OPT(\LPSI_{1|2})\geq \OPT(\DPSI_{1|2})\non\\
&\stackrel{(a)}{\geq}\sup_{\beta\geq 0}\biggl \lbrace \Pbb[h_{S_1|S_2}(S_1|S_2)\geq \log M_1+\beta]\non\\&+\frac{\exp(-\beta)}{M_1}\sum_{s_1,s_2}P_{S_2}(s_2)\Ibb\{ h_{S_1|S_2}(s_1|s_2)<\log M_1+\beta\}\non\\&-M_1 \sum_{s_2}\sup_{\shat_1}\min \biggl \lbrace P_{S_1,S_2}(\shat_1,s_2),P(s_2)\frac{\exp(-\beta)}{M_1}\biggr \rbrace \biggr\rbrace \label{eq:improvedSIDconverse}\\
&\stackrel{(b)}{\geq}\sup_{\beta\geq 0}\biggl \lbrace \Pbb[h_{S_1|S_2}(S_1|S_2)\geq \log M_1+\beta]-\exp(-\beta)\biggr \rbrace. \hspace{-0.3cm}\label{eq:SIDconverse}
\end{align}
Here $h_{A|B}(a|b)\equiv -\log P_{A|B}(a|b)$ is the conditional entropy density. 
The inequality in $(a)$ follows from the definition of conditional entropy density. The inequality in $(b)$ follows by lower bounding the non-negative term corresponding to $\Ibb\{ h_{S_1|S_2}(s_1|s_2)<\log M_1+\beta\}$ in \eqref{eq:improvedSIDconverse} by zero and upper bounding $\min\{P_{S_1,S_2}(\shat_1,s_2),P_{S_2}(s_2)\frac{\exp(-\beta)}{M_1}\}$ with $P_{S_2}(s_2)\frac{\exp(-\beta)}{M_1}$.
Notice that the converse in \eqref{eq:SIDconverse} is the well-known converse for lossless source-coding problem with side-information at the decoder. The converse in \eqref{eq:improvedSIDconverse} provides a new improvement on the standard converse.
\begin{figure}
\centering
\includegraphics[scale=0.45,clip=true,trim=0in 6.5in 0in 3.3in]{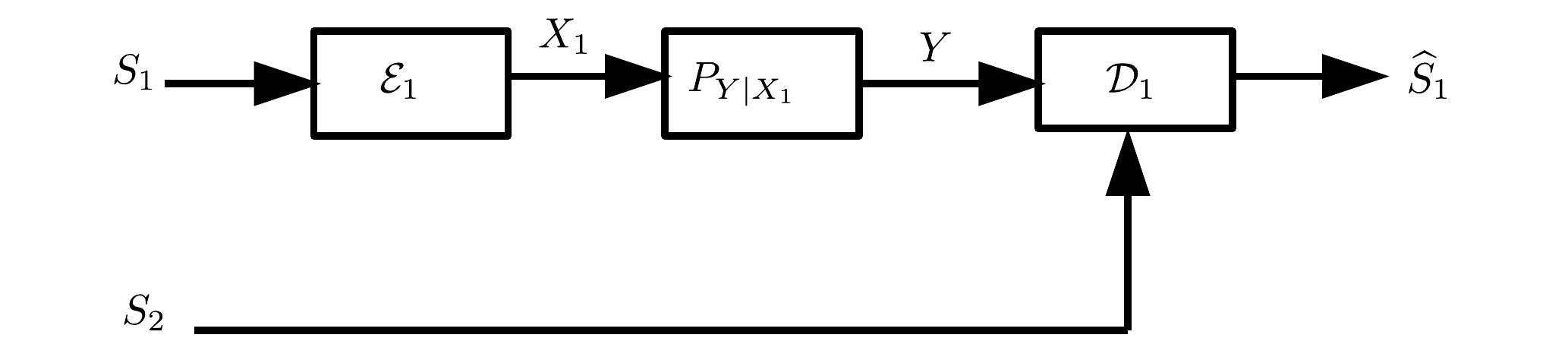} 
\caption{Lossless source coding with side information available only at the decoder}\label{Fig:sideinformation}
\end{figure}
\subsection{ Lossless Source Coding of $S_2$ with $S_1$ as the Side-Information}
Analogous to $\SID_{1|2}$, the finite blocklength lossless source coding problem of $S_2$ with $S_1$ as the side information can be posed as the following optimization problem,
$$\problemsmalla{$\SID_{2|1}$}
	{f,g}
	{\displaystyle \mathbb{E}[\Ibb\{S_2 \neq \Shat_2\}]}
				 {\begin{array}{r@{\ }c@{\ }l}
				 				 X_2&=&f_2(S_2)\\
								 \Shat_2&=&g(S_1,Y_2).
	\end{array}}
	$$
	The dual of the corresponding LP relaxation of $\SID_{2|1}$ then becomes,
	$$\maxproblemsmall{ $\DPSI_{2|1}$\hspace{0.17cm}}
	{ \gammat^a,\gammat^b,\lambdat^{(2|1)}_{\ssf},\lambdat^{(2|1)}_{\csf}}
	{\displaystyle \sum_{s_2}\gammat^a(s_2)+\sum_{y_2,s_1} \gammat^b(y_2,s_1)}
				 {\begin{array}{r@{\ }c@{\ }l}
				 				 				\gammat^a(s_2)-\sum_{y_2,s_1} \lambdat^{(2|1)}_{\csf}(s_1,s_2,x_2,y_2) &\leq & 0 \\ \forall x_2,s_2\quad  (C1) & &\\
				 				 				\gammat^b(s_1,y_2)-\sum_{s_2} \lambdat^{(2|1)}_{\ssf}(s_1,s_2,\shat_2,y_2) &\leq & 0\\		 				 				\forall s_1,\shat_2,y_2\quad (C2)& &\\				 				 				\lambdat^{(2|1)}_{\ssf}(s_1,s_2,\shat_2,y_2)+\lambdat^{(2|1)}_{\csf}(s_1,s_2,x_2,y_2)\hspace{-0.15cm}&\leq &\hspace{-0.15cm} \Delta(z') \\
				 				 				\forall z' \quad (C3),& &
\end{array}} $$ where $z':=(s_1,s_2,x_2,y_2,\shat_2)$, and $\Delta(z')=P_{S_1,S_2}(s_1,s_2)\Ibb\{y_2=x_2 \}\Ibb\{s_2\neq \shat_2\}.$
%
The following choice of source flow and channel flow results in a converse similar to the one in Theorem~\ref{thm:metaside12} for the problem $\SID_{2|1}$.
\begin{align}
\lambdat^{(2|1)}_{\csf}(s_1,s_2,x_2,y_2)&=\phi^{(2|1)}(s_1,s_2) \Ibb\{x_2=y_2\} \non\\
\lambdat^{(2|1)}_{\ssf}(s_1,s_2,\shat_2,y_2)&=-\phi^{(2|1)}(s_1,s_2)\Ibb\{s_2=\shat_2\},
\label{eq:constr4}
\end{align}where $\phi^{(2|1)}:\Sscr_1 \times \Sscr_2 \rightarrow [0,1]$ is such that $0\leq \phi^{(2|1)}(s_1,s_2) \leq P_{S_1,S_2}(s_1,s_2)$ for all $s_1\in \Sscr_1, s_2 \in \Sscr_2$.
\begin{theorem} \label{thm:metaside21} 
Consider the problem $\SID_{2|1}$. Consequently, 
for any code, the following lower bound holds,
\begin{align}
&\Ebb[\Ibb\{S_2\neq \Shat_2\}]\geq \OPT(\SID_{2|1})\geq  \OPT(\DPSI_{2|1})\non\\& \geq \sup_{0\leq \phi^{(2|1)}(s_1,s_2) \leq P_{S_1,S_2}(s_1,s_2)}\biggl \lbrace \sum_{s_1,s_2}\phi^{(2|1)}(s_1,s_2)\non \\& \qquad -M_2 \sum_{s_1}\max_{\shat_2}\phi^{(2|1)}(s_1,\shat_2) \biggr\rbrace. \label{eq:SID2converse}
\end{align}
\end{theorem}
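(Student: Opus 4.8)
The plan is to follow, nearly verbatim, the argument used for Theorem~\ref{thm:metaside12}, with the roles of $S_1$ and $S_2$ interchanged, and with the source and channel flows already exhibited in \eqref{eq:constr4}. First I would check that the candidate flows $\lambdat^{(2|1)}_{\csf}(s_1,s_2,x_2,y_2)=\phi^{(2|1)}(s_1,s_2)\Ibb\{x_2=y_2\}$ and $\lambdat^{(2|1)}_{\ssf}(s_1,s_2,\shat_2,y_2)=-\phi^{(2|1)}(s_1,s_2)\Ibb\{s_2=\shat_2\}$ satisfy the bottleneck constraint (C3) for every $z'=(s_1,s_2,x_2,y_2,\shat_2)$. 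As in the proof of Theorem~\ref{thm:metaconverse}, split into two cases. If $\Ibb\{s_2\neq\shat_2\}=1$, then $\lambdat^{(2|1)}_{\ssf}=0$ and the left side of (C3) is $\phi^{(2|1)}(s_1,s_2)\Ibb\{x_2=y_2\}\leq P_{S_1,S_2}(s_1,s_2)\Ibb\{x_2=y_2\}=\Delta(z')$, using $0\leq\phi^{(2|1)}\leq P_{S_1,S_2}$. If $\Ibb\{s_2\neq\shat_2\}=0$, then $\Delta(z')=0$, while the left side equals $\phi^{(2|1)}(s_1,s_2)\Ibb\{x_2=y_2\}-\phi^{(2|1)}(s_1,s_2)\leq0$. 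Hence (C3) holds.

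Next, exactly as for DP, it is optimal to choose $\gammat^a$ and $\gammat^b$ so that (C1) and (C2) hold with equality, i.e.\ $\gammat^a(s_2)=\min_{x_2}\sum_{y_2,s_1}\lambdat^{(2|1)}_{\csf}(s_1,s_2,x_2,y_2)$ and $\gammat^b(s_1,y_2)=\min_{\shat_2}\sum_{s_2}\lambdat^{(2|1)}_{\ssf}(s_1,s_2,\shat_2,y_2)$. Substituting the chosen flows, the inner sum over $y_2\in\Yscr_2$ collapses $\Ibb\{x_2=y_2\}$ to $1$, giving $\gammat^a(s_2)=\sum_{s_1}\phi^{(2|1)}(s_1,s_2)$ independently of $x_2$; and $\sum_{s_2}\Ibb\{s_2=\shat_2\}\phi^{(2|1)}(s_1,s_2)=\phi^{(2|1)}(s_1,\shat_2)$, so $\gammat^b(s_1,y_2)=-\max_{\shat_2}\phi^{(2|1)}(s_1,\shat_2)$. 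Therefore the dual objective at this feasible point is
\[ \sum_{s_2}\gammat^a(s_2)+\sum_{y_2,s_1}\gammat^b(y_2,s_1)=\sum_{s_1,s_2}\phi^{(2|1)}(s_1,s_2)-M_2\sum_{s_1}\max_{\shat_2}\phi^{(2|1)}(s_1,\shat_2), \]
where the factor $M_2$ appears because $y_2$ ranges over $\Yscr_2=\{1,\dots,M_2\}$ while $\gammat^b(y_2,s_1)$ is independent of $y_2$.

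To conclude, I would invoke linear programming duality together with the fact that the LP is a relaxation of $\SID_{2|1}$ to obtain $\Ebb[\Ibb\{S_2\neq\Shat_2\}]\geq\OPT(\SID_{2|1})\geq\OPT(\DPSI_{2|1})\geq\sum_{s_1,s_2}\phi^{(2|1)}(s_1,s_2)-M_2\sum_{s_1}\max_{\shat_2}\phi^{(2|1)}(s_1,\shat_2)$ for every admissible $\phi^{(2|1)}$, and then take the supremum over all $\phi^{(2|1)}:\Sscr_1\times\Sscr_2\to[0,1]$ with $\phi^{(2|1)}(s_1,s_2)\leq P_{S_1,S_2}(s_1,s_2)$, yielding \eqref{eq:SID2converse}. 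I do not expect a genuine obstacle: the argument is the mirror image of Theorem~\ref{thm:metaside12}, and the only delicate points are bookkeeping — keeping the swapped indices $(1,2)$ straight and recognizing that the penalty term is multiplied by $|\Yscr_2|=M_2$ rather than $M_1$.
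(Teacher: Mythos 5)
Your proposal is correct and follows exactly the route the paper intends: the paper states Theorem~\ref{thm:metaside21} by exhibiting the flows in \eqref{eq:constr4} and deferring to the mirror-image argument of Theorem~\ref{thm:metaside12} (which in turn defers the feasibility check to Theorem~\ref{thm:metaconverse}). Your two-case verification of (C3), the evaluation of $\gammat^a,\gammat^b$ with (C1)--(C2) tight, and the resulting $M_2=|\Yscr_2|$ factor in the dual objective are all the same computations, just written out explicitly.
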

In the next section, we extend the LP based framework to finite blocklength Slepian-Wolf coding problem and establish the duality based framework.
  \section{Linear Programming Based Framework for the Slepian-Wolf  Problem}\label{sec:LP}
In this section, we discuss the extension of the linear programming (LP) based framework in Section~\ref{sec:point-to-point} to the finite blocklength Slepian-Wolf coding problem SW. 
Towards this, consider the joint probability distribution $Q: \Sscr_1\times \Sscr_2 \times \Xscr_1\times \Xscr_2 \times \Yscr_1\times \Yscr_2 \times \Sscrhat_1 \times \Sscrhat_2 \rightarrow [0,1]$ which can be factored as,
 $$Q (z)\equiv P_{S_1,S_2}Q_{X_1|S_1}Q_{X_2|S_2}P_{Y_1Y_2|X_1X_2}Q_{\Shat_1,\Shat_2|Y_1,Y_2}(z),$$
 where $\Zscr:= \Sscr_1\times \Sscr_2 \times \Xscr_1\times \Xscr_2 \times \Yscr_1\times \Yscr_2 \times \Sscrhat_1 \times \Sscrhat_2$ and $z:=(s_1,s_2,x_1,x_2,y_1,y_2,\shat_1,\shat_2) \in \Zscr$.  Employing $Q$, we obtain the following optimization problem over joint probability distributions,
 $$\singlecolproblemsmalla{${\rm SW}$}
	{ \substack{Q_{X_1|S_1},Q_{X_2|S_2},\\Q,Q_{\Shat_1,\Shat_2|Y_1,Y_2}}}
	{\displaystyle \sum_{z \in \Zscr}\Ibb\{(s_1,s_2)\neq (\shat_1,\shat_2)\}Q(z)}
				 {\begin{array}{r@{\ }c@{\ }l}
				 P_{S_1S_2} Q_{X_1|S_1} Q_{X_2|S_2}P_{Y_1,Y_2|X_1,X_2}Q_{\Shat_1,\Shat_2|Y_1,Y_2}\hspace{-0.07cm}(z) \hspace{-0.085cm}&\equiv& \hspace{-0.085cm}Q(z),\\
Q_{X_1|S_1}\in\mathcal{P}(\Xscr_1|\Sscr_1),\\ Q_{X_2|S_2}\in \mathcal{P}(\Xscr_2|\Sscr_2), \\Q_{\Shat_1,\Shat_2|Y_1,Y_2}\in \mathcal{P}(\Sscrhat_1,\Sscrhat_2|\Yscr_1,\Yscr_2),
	\end{array}}$$
	 where, $\mathcal{P}(A|B)\hspace{-0.1cm}:=\hspace{-0.1cm}\{Q_{A|B}\hspace{-0.03cm}| \hspace{-0.03cm}\sum_{a}Q_{A|B}(a|b)=1, Q_{A|B}(a|b)\geq 0, \hspace{0.05cm}\forall\hspace{0.1cm} a,b\}.$	 
Here, $Q_{X_1|S_1}$ and $Q_{X_2|S_2}$ represent the two randomized encoders, and $Q_{\Shat_1,\Shat|Y_1,Y_2}$ represents a randomized decoder. It is easy to argue as in~\cite{jose2016linear} that the above formulation is in fact equivalent to problem SW stated in the introduction.


As in the case of the point-to-point problems, the presence of the multilinear constraint renders the feasible region of SW nonconvex. Notice that the degree of the multilinear term is three since there are three decision makers, whereas in the point-to-point problems the degree was two. To obtain converses or lower bounds on the optimal value of SW, we will again derive a linear programming (LP) relaxation of the nonconvex feasible region of SW, as shown in the next section. 
\subsection{LP Relaxation}
For obtaining a linear programming relaxation of SW, we  
resort to the ``lift-and-project'' technique in integer programming.  Towards this, we define the following new variables, \begin{align}
	W(z) &\equiv Q_{X_1|S_1} Q_{X_2|S_2} Q_{\Shat_1,\Shat_2|Y_1,Y_2}(z),\label{eq:W}\\
U(z_1)&\equiv Q_{X_1|S_1}(x_1|s_1)Q_{\Shat_1,\Shat_2|Y_1,Y_2}(\shat_1,\shat_2|y_1,y_2),\label{eq:U}\\
	V(z_2)&\equiv Q_{X_2|S_2}(x_2|s_2)Q_{\Shat_1,\Shat_2|Y_1,Y_2}(\shat_1,\shat_2|y_1,y_2),\label{eq:V}\\
	T(z_3)&\equiv Q_{X_1|S_1}(x_1|s_1)Q_{X_2|S_2}(x_2|s_2),\label{eq:T}
\end{align} where recall that $z:=(s_1,s_2,x_1,x_2,y_1,y_2,\shat_1,\shat_2)$, $z_1:=(s_1,x_1,y_1,y_2,\shat_1,\shat_2)$, $z_2:=(s_2,x_2,\shat_1,\shat_2,y_1,y_2)$ and $z_3:=(s_1,s_2,x_1,x_2)$. Using these variables, we first lift the problem SW to a higher dimensional space and then impose additional valid constraints involving these new variables. 
To obtain these constraints, we adopt the following procedure.
 
 \begin{figure*}[!t]
	$$\problemsmallLP{LPSW}
	{\substack{W,T,V,U,\\Q_{X_1|S_1},Q_{X_2|S_2},\\Q_{\Shat_1,\Shat_2|Y_1,Y_2}}}
	{\displaystyle \sum_{z \in \Zscr}\Ibb\{(s_1,s_2)\neq (\shat_1,\shat_2)\}$ $P_{S_1,S_2}(s_1,s_2)$ $\Ibb\{(y_1,y_2)=(x_1,x_2)\} W(z)}
				 {\begin{array}{r@{\ }c@{\ }l}
				 				 				 \sum_{x_1} Q_{X_1|S_1}(x_1|s_1)&=& 1 :\hspace{0.1cm} \gamma^a(s_1) \hspace{3.42cm} \forall s_1 \\
\sum_{x_2}Q_{X_2|S_2}(x_2|s_2)&=&1 : \hspace{0.1cm}\gamma^b(s_2) \hspace{3.45cm}\forall s_2 \\
\sum_{\shat_1,\shat_2}Q_{\Shat_1,\Shat_2|Y_1,Y_2}(\shat_1,\shat_2|y_1,y_2)&=&1 :\hspace{0.1cm} \gamma^c(y_1,y_2) \hspace{3cm}\forall y_1,y_2 \\
\sum_{x_1}T(z_3)-Q_{X_2|S_2}(x_2|s_2) &= & 0: \hspace{0.1cm}\mu^{(2|1)}_{\csf}(x_2,s_1,s_2) \hspace{2.01cm} \forall x_2,s_1,s_2 \\
\sum_{x_1}U(z_1)-Q_{\Shat_1,\Shat_2|Y_1,Y_2}(\shat_1,\shat_2|y_1,y_2)&=&0 :\hspace{0.1cm} \mu^{(1)}_{\ssf}(s_1,\shat_1,\shat_2,y_1,y_2)\hspace{1.34cm}\forall s_1,\shat_1,\shat_2,y_1,y_2\\
\sum_{x_1} W(z)-V(z_2)&=& 0: \hspace{0.1cm}\lambda^{(2|1)}_{\ssf}(s_1,s_2,x_2,y_1,y_2,\shat_1,\shat_2) \hspace{0.1cm} \forall s_1,s_2,x_2,y_1,y_2,\shat_1,\shat_2\\ 
\sum_{x_2}T(z_3)-Q_{X_1|S_1}(x_1|s_1) &=&0 :\hspace{0.1cm} \mu^{(1|2)}_{\csf}(x_1,s_1,s_2) \hspace{2.01cm} \forall x_1,s_1,s_2  \\
\sum_{x_2} V(z_2)-Q_{\Shat_1,\Shat_2|Y_1,Y_2}(\shat_1,\shat_2|y_1,y_2) &=&0 :\hspace{0.1cm} \mu^{(2)}_{\ssf}(s_2,\shat_1,\shat_2,y_1,y_2)\hspace{1.3cm} \forall s_1,\shat_1,\shat_2,y_1,y_2 \\ 
\sum_{x_2}W(z)-U(z_1)&=& 0 : \hspace{0.1cm}\lambda^{(1|2)}_{\ssf}(s_1,s_2,x_1,y_1,y_2,\shat_1,\shat_2) \hspace{0.1cm} \forall s_1,s_2,x_1,y_1,y_2,\shat_1,\shat_2 \\
\sum_{\shat_2,\shat_1}W(z)-T(z_3) &=&0 :\hspace{0.1cm} \lambda_{\csf}(s_1,s_2,x_1,x_2,y_1,y_2) \hspace{0.99cm} \forall s_1,s_2,x_1,x_2,y_1,y_2\\
\sum_{\shat_2,\shat_1}V(z_2)-Q_{X_2|S_2}(x_2|s_2) &=&0: \hspace{0.1cm}\mu^{(2)}_{\csf}(x_2,s_2,y_1,y_2) \hspace{1.75cm} \forall  x_2,s_2,y_1,y_2\\
\sum_{\shat_1,\shat_2} U(z_1)-Q_{X_1|S_1}(x_1|s_1) &=&0:\hspace{0.1cm} \mu^{(1)}_{\csf}(x_1,s_1,y_1,y_2) \hspace{1.755cm} \forall x_1,s_1,y_1,y_2\\
 Q_{X_1|S_1},Q_{X_2|S_2},Q_{\Shat_1,\Shat_2|Y_1,Y_2}, V,U,W,T &\geq & 0.
	\end{array}}$$ 
	\end{figure*}
 For each $s_1 \in \Sscr_1$, we multiply both sides of the constraint  $\sum_{x_1} Q_{X_1|S_1}(x_1|s_1)= 1$ by  $Q_{X_2|S_2}(x_2|s_2)$ for all $x_2,s_2$, by $Q_{\Shat_1,\Shat_2|Y_1,Y_2}(\shat_1,\shat_2|y_1,y_2)$ for all $\shat_1,\shat_2,y_1,y_2$ and by $Q_{X_2|S_2}(x_2|s_2)Q_{\Shat_1,\Shat_2|Y_1,Y_2}(\shat_1,\shat_2|y_1,y_2)$ for all $x_2,s_2,\shat_1,\shat_2,y_1,y_2$, to obtain three new sets of multilinear equality constraints. Replace the resulting multilinear product terms by the newly defined variables in \eqref{eq:W}--\eqref{eq:T}. This results in the following new valid linear constraints in the lifted space,
%
\begin{align*}
\sum_{x_1}T(x_1,x_2,s_1,s_2) &\equiv Q_{X_2|S_2}(x_2|s_2)\\
\sum_{x_1}U(x_1,s_1,\shat_1,\shat_2,y_1,y_2)&\equiv Q_{\Shat_1,\Shat_2|Y_1,Y_2}(\shat_1,\shat_2|y_1,y_2)\\
\sum_{x_1} W(z)&\equiv V(x_2,s_2,\shat_1,\shat_2,y_1,y_2).
\end{align*} 
Similar set of linear constraints can be obtained corresponding to $\sum_{x_2}Q_{X_2|S_2}(x_2|s_2)=1$ for all $s_2$ and $\sum_{\shat_1,\shat_2}Q_{\Shat_1,\Shat_2|Y_1,Y_2}(\shat_1,\shat_2|y_1,y_2)=1$ for all $ y_1,y_2$.
 Subsequently, add these new sets of linear constraints to the original constraints of SW. Further, replace $Q$ in the objective function of SW with the first constraint written in terms of $W$, drop the multilinear equalities in \eqref{eq:W}-\eqref{eq:T} and we have the LP relaxation of SW, LPSW as given in the next page. Notice that  the constraints of LPSW are implied by the constraints of SW whereby LPSW is a relaxation of SW.
 
Here, $\gamma^a(s_1)$, $\gamma^b(s_2)$, $\gamma^c(y_1,y_2)$, $\mu^{(2|1)}_{\csf}(x_2,s_1,s_2)$, $\mu^{(1)}_{\ssf}(s_1,\shat_1,\shat_2,y_1,y_2)$, $\lambda^{(2|1)}_{\ssf}(s_1,s_2,x_2,y_1,y_2,\shat_1,\shat_2)$, $\mu^{(1|2)}_{\csf}(x_1,s_1,s_2)$, $\mu^{(2)}_{\ssf}(s_2,\shat_1,\shat_2,y_1,y_2)$, $\lambda^{(1|2)}_{\ssf}(s_1,s_2,x_1,y_1,y_2,\shat_1,\shat_2)$,  $\lambda_{\csf}(s_1,s_2,x_1,x_2,y_1,y_2)$, $\mu^{(2)}_{\csf}(s_2,x_2,y_1,y_2)$ and $\mu^{(1)}_{\csf}(s_1,x_1,y_1,y_2)$ represent the Lagrange multipliers corresponding to the constraints of LPSW in that order.
\subsubsection{An stronger optimal transport interpretation}\label{sec:opttransSW} 
The LP relaxation of SW, LPSW, also admits an interpretation via a ``multiterminal'' optimal transport problem. 
As in the point-to-point LP relaxation, we note that for each $s_1 \in \Sscr_1,s_2\in \Sscr_2,y_1 \in \Yscr_1, y_2 \in \Yscr_2$, $W(s_1,s_2,\cdot,\cdot,y_1,y_2,\cdot,\cdot)$ is  a coupling on $\Xscr_1 \times \Xscr_2 \times (\Sscrhat_1 \times \Sscrhat_2)$ between all three marginals $Q_{X_1|S_1}(\cdot|s_1)$, $Q_{X_2|S_2}(\cdot|s_2)$ and $Q_{\Shat_1,\Shat_2|Y_1,Y_2}(\cdot,\cdot|y_1,y_2)$; denote the set of all such $W$ by $\Xi'$. However, note that $W \in \Xi'$ does not automatically imply that, for instance, 
\begin{equation}
\sum_{x_1}W(z) \in \Xi(Q_{X_2|S_2},Q_{\Shat_1,\Shat_2|Y_1,Y_2}), \label{eq:opttrans1} 
\end{equation} and likewise, that 
\begin{align}
\sum_{x_2}W(z)&\in \Xi(Q_{X_1|S_1},Q_{\Shat_1,\Shat_2|Y_1,Y_2}), \label{eq:opttrans2} \\ 
\sum_{\shat_1,\shat_2}W(z) &\in \Xi(Q_{X_1|S_1},Q_{X_2|S_2}). \label{eq:opttrans3} 
\end{align}
 LPSW is obtained by imposing not only that $W \in \Xi'$, but also \eqref{eq:opttrans1}-\eqref{eq:opttrans3}. Skipping the latter requirements would evidently lead to a looser relaxation which would perhaps not suffice for our purpose of obtaining tight converses. As in the point-to-point problem, LPSW is a nested minimization where the relaxation arises from replacing the product of kernels of randomized codes by any coupling in $\Xi'$ that is constrained by \eqref{eq:opttrans1}-\eqref{eq:opttrans3}, and then minimizing over all codes.

We note that the variables $U,V,T$ in \eqref{eq:U}-\eqref{eq:T} are introduced in LPSW only to express the constraints \eqref{eq:opttrans1}-\eqref{eq:opttrans3} on $W$ in a clearer manner. They could be eliminated in a straightforward manner and the entire problem could be expressed only in terms of $W$ and the randomized code  $(Q_{X_1|S_1},Q_{X_2|S_2},Q_{\Shat_1,\Shat_2|Y_1,Y_2}).$

%

\subsection{Duality and Converses}
Employing the Lagrange multipliers corresponding to the constraints of LPSW, we now obtain the dual of LPSW, denoted DPSW, and shown on the next page. Here, $\Pi(z)\equiv$ $ \Ibb\{(s_1,s_2)\neq (\shat_1,\shat_2)\}\times$ $P_{S_1,S_2}(s_1,s_2)$ $\Ibb\{(y_1,y_2)=(x_1,x_2)\}.$
\begin{figure*}[!t]
\begin{small}
	 $$\maxproblemlarge{DPSW}
	{ \Theta}
	{\displaystyle \sum_{s_1}\gamma^a(s_1)+\sum_{s_2}\gamma^b(s_2)+\sum_{y_1,y_2}\gamma^c(y_1,y_2)}
				 {\hspace{-1cm}\begin{array}{r@{\ }c@{\ }l}
				 				 				\gamma^a(s_1)-\sum_{y_1,y_2} \mu^{(1)}_{\csf}(s_1,x_1,y_1,y_2)-\sum_{s_2}\mu^{(1|2)}_{\csf}(x_1,s_1,s_2) \leq  0 \quad \quad \forall x_1,s_1 \hspace{1.17cm}\quad \hspace{0.3cm}  (D1)\\
				 				 				\gamma^b(s_2)-\sum_{y_1,y_2} \mu^{(2)}_{\csf}(s_2,x_2,y_1,y_2)-\sum_{s_1}\mu^{(2|1)}_{\csf}(x_2,s_1,s_2) \leq  0\quad \quad \forall x_2,s_2\hspace{1.17cm}\hspace{0.3cm}  \quad(D2)\\
				 				 				\gamma^c(y_1,y_2)-\sum_{s_2}\mu^{(2)}_{\ssf}(s_2,\shat_1,\shat_2,y_1,y_2)-\sum_{s_1}\mu^{(1)}_{\ssf}(s_1,\shat_1,\shat_2,y_1,y_2) \leq  0\quad \quad\forall \shat_1,\shat_2,y_1,y_2 \hspace{0.6cm}\hspace{0.3cm} (D3)\\
				 				 				\lambda^{(1|2)}_{\ssf}(s_1,s_2,x_2,y_1,y_2,\shat_1,\shat_2)+\lambda^{(2|1)}_{\ssf}(s_1,s_2,x_1,y_1,y_2,\shat_1,\shat_2)+\lambda_{\csf}(s_1,s_2,x_1,x_2,y_1,y_2) \leq  \Pi(z) \hspace{0.2cm} \forall z\hspace{1.27cm}\quad \quad\hspace{0.45cm}  (D4)\\
				 				 				\mu^{(2)}_{\ssf}(s_2,\shat_1,\shat_2,y_1,y_2)+\mu^{(2)}_{\csf}(s_2,x_2,y_1,y_2)-\sum_{s_1}\lambda^{(1|2)}_{\ssf}(s_1,s_2,x_2,y_1,y_2,\shat_1,\shat_2) \leq  0 \quad \quad \forall s_2,x_2,y_1,y_2,\shat_1,\shat_2\hspace{0.03cm}(D5)\\
	\mu^{(1)}_{\ssf}(s_1,\shat_1,\shat_2,y_1,y_2)+\mu^{(1)}_{\csf}(s_1,x_1,y_1,y_2)-\sum_{s_2}		 \lambda^{(2|1)}_{\ssf}(s_1,s_2,x_1,y_1,y_2,\shat_1,\shat_2)\leq  0 \quad \quad \forall s_1,x_1,y_1,y_2,\shat_1,\shat_2\hspace{0.03cm}(D6)\\
	\mu^{(2|1)}_{\csf}(x_2,s_1,s_2)+\mu^{(1|2)}_{\csf}(x_1,s_1,s_2)-\sum_{y_1,y_2}	\lambda_{\csf}(s_1,s_2,x_1,x_2,y_1,y_2)\leq 0 \quad \quad \forall x_1,x_2,s_1,s_2	\quad\hspace{0.58cm} (D7)		
\end{array}}$$ \end{small}\vspace{-0.7cm}\end{figure*} 
Let $\Theta:=(\lambda^{(1|2)}_{\ssf},\lambda^{(2|1)}_{\ssf},\lambda_{\csf}, \gamma^a,\gamma^b,\gamma^c,\mu^{(1)}_{\ssf},\mu^{(1)}_{\csf},\mu^{(2)}_{\ssf},\mu^{(2)}_{\csf},\mu^{(1|2)}_{\csf},\mu^{(2|1)}_{\csf})$ represent the collection of all these Lagrange multipliers or dual variables.

To evaluate the optimal value of DPSW, it suffices to take $\gamma^a(s_1)$, $\gamma^b(s_2)$ and $\gamma^c(y_1,y_2)$ such that the constraints (D1), (D2), (D3) hold with equality. Thus, at optimality,
\begin{small}
\begin{align*}
\gamma^a(s_1) &\equiv \min_{x_1}\biggl \lbrace\sum_{y_1,y_2} \mu^{(1)}_{\csf}(s_1,x_1,y_1,y_2)+\sum_{s_2}\mu^{(1|2)}_{\csf}(x_1,s_1,s_2)\biggr \rbrace \\
\gamma^b(s_2) & \equiv \min_{x_2} \biggl \lbrace \sum_{y_1,y_2} \mu^{(2)}_{\csf}(s_2,x_2,y_1,y_2) +\sum_{s_1}\mu^{(2|1)}_{\csf}(x_2,s_1,s_2)\biggr \rbrace  \\
\gamma^c(y_1,y_2) &\equiv \min_{\shat_1,\shat_2}\biggl \lbrace \sum_{s_2}\mu^{(2)}_{\ssf}(s_2,\shat_1,\shat_2,y_1,y_2) \hspace{-0.1cm} +\hspace{-0.1cm} \sum_{s_1}\mu^{(1)}_{\ssf}(s_1,\shat_1,\shat_2,y)\biggr \rbrace.
\end{align*}
\end{small}
Let $\bar{\Theta}:=(\lambda^{(1|2)}_{\ssf},\lambda^{(2|1)}_{\ssf}, \lambda_{\csf} ,\mu^{(1)}_{\ssf},\mu^{(1)}_{\csf},  \mu^{(2)}_{\ssf},\mu^{(2)}_{\csf},\mu^{(1|2)}_{\csf},\mu^{(2|1)}_{\csf})$ represent the collection of remaining dual variables. 
From the duality of linear programming, the following lemma then outlines our framework for obtaining lower bounds.
\begin{lemma}\label{lemma:dualbound}
Any collection of functions $\bar{\Theta}$ 
 satisfying constraints (D4)-(D7) yields the following lower bound on the optimal value of SW, 
\ie,
\begin{align}&\OPT({\rm SW})\stackrel{(a)}{\geq} \OPT({\rm LPSW})\stackrel{(b)}{=}\OPT({\rm DPSW})\non\\&\geq 
\sum_{s_1}\min_{x_1}\biggl \lbrace\sum_{y_1,y_2} \mu^{(1)}_{\csf}(s_1,x_1,y_1,y_2)+\sum_{s_2}\mu^{(1|2)}_{\csf}(x_1,s_1,s_2)\biggr \rbrace \non \non\\& + \sum_{s_2} \min_{x_2} \biggl \lbrace \sum_{y_1,y_2} \mu^{(2)}_{\csf}(s_2,x_2,y_1,y_2) +\sum_{s_1}\mu^{(2|1)}_{\csf}(x_2,s_1,s_2)\biggr \rbrace  \non \\&+\sum_{y_1,y_2} \min_{\shat_1,\shat_2}\biggl \lbrace \hspace{-0.1cm}\sum_{s_2}\mu^{(2)}_{\ssf}(s_2,\shat_1,\shat_2,y_1,y_2) \non\\&+\sum_{s_1}\mu^{(1)}_{\ssf}(s_1,\shat_1,\shat_2,y)\biggr \rbrace. \label{eq:SWframework}
	\end{align}
	\end{lemma}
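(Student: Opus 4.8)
The claim follows by chaining three standard facts about the LP relaxation and its dual. First I would establish inequality $(a)$: since every constraint of LPSW was obtained by multiplying a valid equality constraint of SW by a nonnegative kernel and substituting the lifted variables $W,T,U,V$ of \eqref{eq:W}--\eqref{eq:T}, every feasible point of SW yields (after the lift) a feasible point of LPSW with the same objective value; hence $\OPT({\rm SW})\geq\OPT({\rm LPSW})$. The objective of LPSW equals that of SW because $Q(z)$ was replaced by its factorization written in terms of $W(z)$, which is exactly the first constraint of SW. Second, equality $(b)$ is LP strong duality: LPSW is a finite-dimensional linear program with a nonempty feasible region (it contains the lift of any deterministic code) and a finite optimal value (the objective lies in $[0,1]$), so $\OPT({\rm LPSW})=\OPT({\rm DPSW})$, with the dual DPSW formed using the Lagrange multipliers $\Theta$ listed after LPSW.

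Third, I would lower-bound $\OPT({\rm DPSW})$ by the displayed expression. Given any collection $\bar{\Theta}=(\lambda^{(1|2)}_{\ssf},\lambda^{(2|1)}_{\ssf},\lambda_{\csf},\mu^{(1)}_{\ssf},\mu^{(1)}_{\csf},\mu^{(2)}_{\ssf},\mu^{(2)}_{\csf},\mu^{(1|2)}_{\csf},\mu^{(2|1)}_{\csf})$ satisfying (D4)--(D7), I extend it to a full dual-feasible point by defining $\gamma^a,\gamma^b,\gamma^c$ as the pointwise minima displayed just before the lemma, namely $\gamma^a(s_1)\equiv\min_{x_1}\{\sum_{y_1,y_2}\mu^{(1)}_{\csf}(s_1,x_1,y_1,y_2)+\sum_{s_2}\mu^{(1|2)}_{\csf}(x_1,s_1,s_2)\}$ and analogously for $\gamma^b,\gamma^c$. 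These choices make (D1), (D2), (D3) hold (indeed with equality), while (D4)--(D7) do not involve the $\gamma$'s, so the extended point is feasible for DPSW. By weak duality, its objective value $\sum_{s_1}\gamma^a(s_1)+\sum_{s_2}\gamma^b(s_2)+\sum_{y_1,y_2}\gamma^c(y_1,y_2)$ — which is precisely the right-hand side of \eqref{eq:SWframework} — is a lower bound on $\OPT({\rm DPSW})$. Finally, $\OPT({\rm SW})$ equals $\Ebb[\Ibb\{(S_1,S_2)\neq(\Shat_1,\Shat_2)\}]$ by the equivalence of the distributional formulation of SW with the code formulation in the introduction (argued as in \cite{jose2016linear}), which completes the chain.

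None of the steps is genuinely hard; the only point requiring care is bookkeeping. The main obstacle is verifying that DPSW is indeed the correct Lagrangian dual of LPSW — that is, checking that each primal variable ($W,T,U,V,Q_{X_1|S_1},Q_{X_2|S_2},Q_{\Shat_1,\Shat_2|Y_1,Y_2}$) generates exactly the corresponding dual constraint among (D1)--(D7) with the stated coefficients, and that the objective of the dual is the sum of the right-hand sides of the three normalization constraints. This is a mechanical but lengthy transposition of the constraint matrix; once it is in place, the rest is immediate from weak duality and the observation that (D1)--(D3) can always be tightened to equality by the displayed choice of $\gamma^a,\gamma^b,\gamma^c$ without affecting feasibility of (D4)--(D7).
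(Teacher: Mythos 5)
Your proposal is correct and follows essentially the same route as the paper, which justifies (a) by LPSW being a relaxation of SW, (b) by LP duality, and the final inequality by extending $\bar{\Theta}$ to a DPSW-feasible point with $\gamma^a,\gamma^b,\gamma^c$ chosen so that (D1)--(D3) hold with equality. The added detail on why the lift preserves the objective and why strong duality applies is sound but not a different argument.
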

	The inequality in (a) follows since LPSW is a relaxation of SW and (b) results from the duality of linear programming. The last inequality follows from $\bar{\Theta}$ being feasible for DPSW and using that (D1)-(D3) hold with equality.
		
Thus, to obtain finite blocklength lower bounds on SW, it suffices to construct functions,
\begin{align} \lambda^{(2|1)}_{\ssf} &: \Sscr_1\times \Sscr_2 \times \Xscr_2\times \Yscr_1\times \Yscr_2 \times \Sscrhat_1\times \Sscrhat_2 \rightarrow \Real, \non 
\\ 
\lambda^{(1|2)}_{\ssf} &: \Sscr_1\times \Sscr_2 \times \Xscr_1\times \Yscr_1\times \Yscr_2 \times \Sscrhat_1\times \Sscrhat_2 \rightarrow \Real,\non\\ 
 	\lambda_{\csf} &: \Sscr_1\times \Sscr_2 \times \Xscr_1\times \Xscr_2 \times \Yscr_1\times \Yscr_2  \rightarrow \Real, \non
 	\\
   \mu^{(1)}_{\ssf} &:\Sscr_1 \times \Sscrhat_1\times \Sscrhat_2 \times \Yscr_1\times \Yscr_2 \rightarrow \Real,\non \\
   \mu^{(1)}_{\csf} &:\Sscr_1 \times \Xscr_1 \times \Yscr_1\times \Yscr_2 \rightarrow \Real, \label{eq:dpflows} \\
 		\mu^{(2)}_{\ssf}& :\Sscr_2 \times \Sscrhat_1\times \Sscrhat_2 \times \Yscr_1\times \Yscr_2 \rightarrow \Real,\non\\
   \mu^{(2)}_{\csf} &:\Sscr_2 \times \Xscr_2 \times \Yscr_1\times \Yscr_2 \rightarrow \Real, \non \\
 	  \mu^{(2|1)}_{\csf}&:\Xscr_2 \times \Sscr_1 \times \Sscr_2 \rightarrow \Real,    \non \\
\mu^{(1|2)}_{\csf}&:\Xscr_1 \times \Sscr_1 \times \Sscr_2 \rightarrow \Real, \non 
   \end{align}
such that the point-wise inequalities in (D4)-(D7) are satisfied. 

We call the above collection of functions $\bar{\Theta}$, a feasible point of DPSW.  As is evident, construction of such a feasible point of DPSW is challenging and probably cumbersome at first glance. Another hindrance is the difficulty in interpreting these variables so as to develop any intuitions on construction of these variables.

Consequently, in this paper, we present a systematic method to construct feasible points of DPSW and thereby, obtain finite blocklength converses for SW coding.  
 We show that a combination of the source and channel flows of the problems $\DPJE$, $\DPSI_{1|2}$ and $\DPSI_{2|1}$, yields a new feasible point of DPSW and thereby, a new finite blocklength converse. We discuss this in the next section.
 
\section{From Point-to-Point Converses to Slepian-Wolf Converses}\label{sec:sythesize}
In this section, we present a systematic synthesis of finite blocklength converses for the Slepian-Wolf coding problem from the source and channel flows of the dual programs $\DPJE$, $\DPSI_{1|2}$ and $\DPSI_{2|1}$ discussed in Section~\ref{sec:point-to-point}.

We begin by discussing the structure of DPSW. Since constraints (D1)-(D3) can be assumed to hold with equality, our main concern is with the variables $\lambda_\ssf^{(1|2)},\lambda_\ssf^{(2|1)},\lambda_\csf$ and $\mu_\ssf^{(1)},\mu_\ssf^{(2)},\mu_\csf^{(1)},\mu_\csf^{(2)}, \mu^{(2|1)}_\csf,\mu_\csf^{(1|2)}.$ We will refer to these variables (recall that these are functions, as stated in \eqref{eq:dpflows}) also as \textit{flows}. Our approach for interpreting and classifying these flows is based on relating these flows to flows of problems $\DPJE$, $\DPSI_{1|2}$ and $\DPSI_{2|1}$. We remark that there may be other approaches that would yield a more refined understanding.

We begin with the $\lambda$'s. Consider the flow $\lambda^{(1|2)}_{\ssf}$. 
Observe that  $\lambda^{(1|2)}_{\ssf}$ is a function of $s_1,s_2,x_2,y_1,y_2,\shat_1,\shat_2$ but is independent of $x_1$. Hence, for each fixed $s_2,x_2,y_2$ and $\shat_2$, $\lambda^{(1|2)}_{\ssf}$ may be likened to a source flow from $S_1$ to $\Shat_1$ (recall that the source flow in the point-to-point problem was a function of the source, the channel output and the destination, but \textit{not} of the channel input). The dependence of $\lambda^{(1|2)}_{\ssf}$ on $s_2,x_2,y_2,\shat_2$ hints at (coded or uncoded) side-information about $S_2$ through the path $S_2 \rightarrow X_2 \rightarrow Y_2 \rightarrow \Shat_2$. This leads one to surmise that the flow $\lambda^{(1|2)}_\ssf$ would have a close relation to the source flow of the problem $\DPSI_{1|2}.$ Thus we refer to $\lambda^{(1|2)}_\ssf$ as a source flow for DPSW.
Note though, that this is not the only heuristic one can apply. If $\lambda^{(1|2)}_{\ssf} $ is assumed to be also independent of $x_2$, then $\lambda^{(1|2)}_{\ssf}(s_1,s_2,x_2,y_1,y_2,\shat_1,\shat_2)$ can be also interpreted to be the source flow in the problem $\DPJE$ where $S_1,S_2$ are jointly encoded. Thus one surmises that a value for $\lambda_\ssf^{(1|2)}$ could probably be arrived at by a combination of the source flows of $\DPJE$ and $\DPSI_{1|2}.$
A similar heuristic can be applied to surmise that $\lambda^{(2|1)}_{\ssf}$ could be arrived at through the source flows of $\DPJE$ and $\DPSI_{2|1}$. The flow $\lambda_{\csf}(s_1,s_2,x_1,x_2,y_1,y_2)$ which depends on correlated sources and the channel inputs and outputs, appears to be related to the channel flows of all three problems $\DPJE,\DPSI_{1|2}$ and $\DPSI_{2|1}$, and should therefore be a function of the latter flows. We refer to it as the channel flow. Thus in problem DPSW, there are \textit{two} source flows and one channel flow that satisfy an error density bottleneck (D4).

We now come to the $\mu$'s. Notice that the flows of problem DPSW fall into a hierarchy wherein the $\lambda$'s are constrained by the error density bottleneck (constraint (D4)), whereas the $\mu$'s are constrained by a bottleneck determined by the $\lambda$'s. Arguing as in the case of the $\lambda$'s we see that $\mu_\ssf^{(1)}$ and $\mu^{(1)}_\csf$ are akin to source and channel flows of a coding problem along the path $S_1\rightarrow X_1 \rightarrow Y_1 \rightarrow \Shat_1$. Note though the \textit{objective} of the problem (source coding, or something else) would depend on $\lambda_\ssf^{(2|1)}$, the RHS of constraint (D6). Likewise $\mu_\ssf^{(2)}$ and $\mu^{(2)}_\csf$ resemble source channel flows for a coding problem along $S_2 \rightarrow X_2 \rightarrow Y_2 \rightarrow \Shat_2$ whose objective is determined by $\lambda_\ssf^{(1|2)}$. The final set of dual variables $\mu^{(1|2)}_\csf$ and $\mu^{(2|1)}_\csf$ are somewhat distinct from the rest, since they do not seem to be analogous to any flows from point-to-point problems. We will interpret these later.

The following two propositions distill these heuristics into a formal relationship between the feasible regions of problems DPSW and problems $\DPJE$, $\DPSI_{1|2}$ and $\DPSI_{2|1}$.

 \begin{proposition}\label{prop:sideconverse}
Let $\bar{\Theta}_{1|2}:=(\bar{\gamma}^a,\bar{\gamma}^b,\bar{\lambda}^{(1|2)}_{\ssf},\bar{\lambda}^{(1|2)}_{\csf})\in \FEA(\DPSI_{1|2})$ with its corresponding objective value, $\OBJ(\DPSI_{1|2})=\sum_{s_1}\gammab^a(s_1)+\sum_{s_2,y_1}\gammab^b(s_2,y_1)$. Then the following choice of values for the variables of DPSW is feasible.
\begin{align}\hspace{-0.1cm} \lambda^{(1|2)}_{\ssf}\hspace{-0.05cm}(s_1,s_2,x_2,y_1,y_2,\shat_1,\shat_2)\hspace{-0.05cm}&\equiv \hspace{-0.05cm}\bar{\lambda}^{(1|2)}_{\ssf}\hspace{-0.05cm}(s_1,s_2,\shat_1,y_1)\Ibb\{x_2=y_2\}\non\\
 \lambda_{\csf} (s_1,s_2,x_1,x_2,y_1,y_2)&\equiv\hspace{-0.05cm} \bar{\lambda}^{(1|2)}_{\csf}\hspace{-0.05cm}(s_1,s_2,x_1,y_1)\hspace{-0.05cm}\Ibb\{x_2=y_2\} , \non\\ \mu^{(2)}_{\csf}(s_2,x_2,y_1,y_2)&\equiv \bar{\gamma}^b(s_2,y_1)\Ibb\{x_2=y_2\},\hspace{-0.3cm}\label{eq:constr1}\\  \mu^{(1|2)}_{\csf}(x_1,s_1,s_2)&\equiv \sum_{y_1}\bar{\lambda}^{(1|2)}_{\csf}(s_1,s_2,x_1,y_1)\non,\\
\hspace{-0.1cm}\lambda^{(2|1)}_{\ssf}(s_1,s_2,x_1,y_1,y_2,\shat_1,\shat_2) &\equiv 0,\quad \mu^{(2|1)}_{\csf} (x_2,s_1,s_2)\equiv 0 \non, \\
\mu^{(1)}_{\ssf}(s_1,\shat_1,\shat_2,y_1,y_2) &\equiv 0, \quad \mu^{(1)}_{\csf}(s_1,x_1,y_1,y_2)\equiv 0,\non \\
\mu^{(2)}_{\ssf}(s_2,\shat_1,\shat_2,y_1,y_2)& \equiv 0,\quad \gamma^c(y_1,y_2)\equiv 0, \non
\end{align}
\begin{align}
\gamma^b(s_2)&\equiv \sum_{y_1} \bar{\gamma}^b(s_2,y_1)\non, \qquad
\gamma^a(s_1) \equiv \bar{\gamma}^a(s_1). \hspace{0.02cm}\non
\end{align}
Consequently, $\OPT(\DPSW)\geq \sum_{s_1}\gamma^a(s_1)+\sum_{s_2}\gamma^b(s_2)+\sum_{y_1,y_2}\gamma^c(y_1,y_2) $ which is equal to the objective of $\DPSI_{1|2}$ under $\bar{\Theta}_{1|2}$. In particular, considering $\bar{\Theta}_{1|2}$ as the optimal solution of $\DPSI_{1|2}$ gives $\OPT(\DPSW)\geq \OPT(\DPSI_{1|2})$.
\end{proposition}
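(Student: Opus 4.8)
The plan is straightforward in outline: substitute the prescribed values of the DPSW variables into the seven families of constraints (D1)--(D7), verify feasibility using only the feasibility of $\bar{\Theta}_{1|2}$ in $\DPSI_{1|2}$ (\ie (B1)--(B3)), read off the DPSW objective at this point, and conclude by weak duality as recorded in Lemma~\ref{lemma:dualbound}.

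First I would dispose of the easy constraints. Because $\mu^{(1)}_{\ssf}$, $\mu^{(2)}_{\ssf}$, $\mu^{(1)}_{\csf}$, $\gamma^c$, $\lambda^{(2|1)}_{\ssf}$ and $\mu^{(2|1)}_{\csf}$ are all set to zero, constraints (D3) and (D6) reduce to $0\le 0$. In (D1), with $\mu^{(1)}_{\csf}\equiv 0$, what remains is $\gammab^a(s_1)\le\sum_{s_2,y_1}\bar{\lambda}^{(1|2)}_{\csf}(s_1,s_2,x_1,y_1)$ for every $x_1,s_1$, which is precisely (B1). Constraints (D2) and (D7) turn into equalities: the sum over $y_2$ of the indicator $\Ibb\{x_2=y_2\}$ carried by $\mu^{(2)}_{\csf}$ (resp.\ $\lambda_{\csf}$) collapses to $1$, leaving exactly $\sum_{y_1}\bar{\gamma}^b(s_2,y_1)=\gamma^b(s_2)$ (resp.\ $\sum_{y_1}\bar{\lambda}^{(1|2)}_{\csf}(s_1,s_2,x_1,y_1)=\mu^{(1|2)}_{\csf}(x_1,s_1,s_2)$), so the two sides cancel. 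In (D5), $\mu^{(2)}_{\ssf}\equiv 0$ and the remaining expression factors as $\Ibb\{x_2=y_2\}\bigl(\bar{\gamma}^b(s_2,y_1)-\sum_{s_1}\bar{\lambda}^{(1|2)}_{\ssf}(s_1,s_2,\shat_1,y_1)\bigr)$; the bracket is $\le 0$ by (B2) and the indicator is nonnegative, so (D5) holds.

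The one step that needs care — and which I expect to be the crux — is the error-density bottleneck (D4). With $\lambda^{(2|1)}_{\ssf}\equiv 0$, the left side of (D4) equals $\Ibb\{x_2=y_2\}\bigl(\bar{\lambda}^{(1|2)}_{\ssf}(s_1,s_2,\shat_1,y_1)+\bar{\lambda}^{(1|2)}_{\csf}(s_1,s_2,x_1,y_1)\bigr)$. I would split on whether $x_2=y_2$. If $x_2\ne y_2$, the left side is $0$ and so is $\Pi(z)$, since the factor $\Ibb\{(y_1,y_2)=(x_1,x_2)\}$ in $\Pi$ vanishes. If $x_2=y_2$, the left side is $\bar{\lambda}^{(1|2)}_{\ssf}(s_1,s_2,\shat_1,y_1)+\bar{\lambda}^{(1|2)}_{\csf}(s_1,s_2,x_1,y_1)$, which by (B3) is at most $\Psi(\zbar)=P_{S_1,S_2}(s_1,s_2)\Ibb\{s_1\ne\shat_1\}\Ibb\{y_1=x_1\}$, whereas $\Pi(z)=\Ibb\{(s_1,s_2)\ne(\shat_1,\shat_2)\}P_{S_1,S_2}(s_1,s_2)\Ibb\{y_1=x_1\}$ once $x_2=y_2$ is used. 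Since $\Ibb\{s_1\ne\shat_1\}\le\Ibb\{(s_1,s_2)\ne(\shat_1,\shat_2)\}$ pointwise, $\Psi(\zbar)\le\Pi(z)$ and (D4) follows. Conceptually this is the observation that the side-information bottleneck of $\DPSI_{1|2}$ sits inside the Slepian--Wolf bottleneck; everything else is bookkeeping of indices and of the ``pass-through'' structure $\Ibb\{x_2=y_2\}$ used to embed the second source into the network.

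Finally I would read off the objective: $\sum_{s_1}\gamma^a(s_1)+\sum_{s_2}\gamma^b(s_2)+\sum_{y_1,y_2}\gamma^c(y_1,y_2)=\sum_{s_1}\gammab^a(s_1)+\sum_{s_2,y_1}\gammab^b(s_2,y_1)$, which is $\OBJ(\DPSI_{1|2})$ at $\bar{\Theta}_{1|2}$. Because DPSW is a maximization and we have exhibited a feasible point with this value (equivalently, by Lemma~\ref{lemma:dualbound} applied to the $\mu$'s and $\lambda$'s constructed above), $\OPT(\DPSW)\ge\OBJ(\DPSI_{1|2})$; taking $\bar{\Theta}_{1|2}$ to be an optimal solution of $\DPSI_{1|2}$ then yields $\OPT(\DPSW)\ge\OPT(\DPSI_{1|2})$, as claimed.
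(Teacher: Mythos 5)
Your proposal is correct and follows essentially the same route as the paper's proof in Appendix~B: set the zeroed variables aside, reduce (D1), (D2), (D5) and (D7) to (B1), (B2) and identities via the $\Ibb\{x_2=y_2\}$ pass-through, and verify the bottleneck (D4) by combining (B3) with $\Ibb\{s_1\neq\shat_1\}\le\Ibb\{(s_1,s_2)\neq(\shat_1,\shat_2)\}$. The only cosmetic difference is that you make the case split on $x_2=y_2$ explicit in (D4), where the paper carries the indicator through the inequality chain.
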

\begin{proof}
The proof is included in Appendix~\ref{sec:appB}.
\end{proof}
It thus becomes clear that given any feasible point $(\bar{\gamma}^a,\bar{\gamma}^b,\bar{\lambda}^{(1|2)}_{\ssf},\bar{\lambda}^{(1|2)}_{\csf})\in \FEA(\DPSI_{1|2})$, one can construct a feasible point of DPSW as given in \eqref{eq:constr1}. Moreover, the objective value of the resulting feasible point gives a lower bound on $\OPT(\DPSW)$. Similarly, it can be shown that given any feasible point of $\DPSI_{2|1}$, one can construct a feasible point of $\DPSW$ with the same cost, thereby implying $$ \OPT(\DPSW) \geq \OPT(\DPSI_{2|1}).$$

As with the problems with side-information, the following proposition illustrates a relation between the feasible regions of DPJE and DPSW.
\begin{proposition}\label{prop:JEnDPSW}
The following relationship between the feasible region of DPJE and DPSW holds.
Let $\hat{\Theta}:=(\hat{\gamma}^a,\hat{\gamma}^b,\hat{\lambda}_{\ssf},\hat{\lambda}_{\csf})\in \FEA(\DPJE)$ with its corresponding objective value, $\OBJ(\DPJE)=\sum_{s_1,s_2}\gammah^a(s_1,s_2)+\sum_{y_1,y_2}\gammah^b(y_1,y_2)$. Then the following choice of values of the variables of DPSW are feasible.
\begin{align*}
\hspace{-0.1cm} \lambda^{(1|2)}_{\ssf}\hspace{-0.05cm}(s_1,s_2,x_2,y_1,y_2,\shat_1,\shat_2)\hspace{-0.05cm}&\equiv \hspace{-0.05cm}\lambdah_{\ssf}(s_1,s_2,\shat_1,\shat_2,y_1,y_2)\\ \lambda_{\csf} (s_1,s_2,x_1,x_2,y_1,y_2)&\equiv \lambdah_{\csf}(s_1,s_2,x_1,x_2,y_1,y_2), \non\\ \mu^{(2)}_{\csf}(s_2,x_2,y_1,y_2)&\equiv 0, \quad  \mu^{(1|2)}_{\csf}(x_1,s_1,s_2)\equiv 0 \non,\\
\hspace{-0.1cm}\lambda^{(2|1)}_{\ssf}(s_1,s_2,x_1,y_1,y_2,\shat_1,\shat_2) &\equiv 0,\\\mu^{(2|1)}_{\csf} (x_2,s_1,s_2)&\equiv \gammah^a(s_1,s_2) \non,\\
\mu^{(1)}_{\ssf}(s_1,\shat_1,\shat_2,y_1,y_2) &\equiv 0, \quad \mu^{(1)}_{\csf}(s_1,x_1,y_1,y_2)\equiv 0,\non
\end{align*}
\begin{align*}
\gamma^a(s_1)&\equiv 0, \quad\gamma^b(s_2)\equiv \sum_{s_1} \gammah^a(s_1,s_2)\non,\\
\gamma^c(y_1,y_2)&\equiv \gammah^b(y_1,y_2),\\   \mu^{(2)}_{\ssf}(s_2,\shat_1,\shat_2,y_1,y_2) &\equiv \sum_{s_1}\lambdah_{\ssf}(s_1,s_2,\shat_1,\shat_2,y_1,y_2).
\end{align*}
 Consequently, $\OPT(\DPSW)\geq \sum_{s_1}\gamma^a(s_1)+\sum_{s_2}\gamma^b(s_2)+\sum_{y_1,y_2}\gamma^c(y_1,y_2)$ which is the objective of $\DPJE$ under $\hat{\Theta}.$  Moreover, considering $\hat{\Theta}$ to be the optimizing feasible point of DPJE yields that $\OPT(\DPSW)\geq \OPT(\DPJE)$.
\end{proposition}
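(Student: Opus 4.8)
The plan is to verify directly that the displayed assignment is a feasible point of DPSW whose objective equals the objective of DPJE evaluated at $\hat{\Theta}$, and then to appeal to Lemma~\ref{lemma:dualbound}. Because that lemma already shows that in DPSW one may take (D1)--(D3) with equality and expresses $\OPT(\DPSW)$ through the reduced collection $\bar{\Theta}$, the whole proof collapses to (i) checking the seven inequality families (D1)--(D7) under the substitution, and (ii) a one-line objective computation.

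First I would substitute the assignment into the three $\gamma$-constraints. With $\gamma^a\equiv 0$ and $\mu^{(1)}_\csf\equiv\mu^{(1|2)}_\csf\equiv 0$, constraint (D1) is $0\le 0$. With $\gamma^b(s_2)\equiv\sum_{s_1}\gammah^a(s_1,s_2)$, $\mu^{(2)}_\csf\equiv 0$ and $\mu^{(2|1)}_\csf(x_2,s_1,s_2)\equiv\gammah^a(s_1,s_2)$, the left side of (D2) is $\sum_{s_1}\gammah^a(s_1,s_2)-\sum_{s_1}\gammah^a(s_1,s_2)=0$, so (D2) holds with equality. With $\gamma^c\equiv\gammah^b$, $\mu^{(1)}_\ssf\equiv 0$ and $\mu^{(2)}_\ssf(s_2,\shat_1,\shat_2,y_1,y_2)\equiv\sum_{s_1}\lambdah_\ssf(s_1,s_2,\shat_1,\shat_2,y_1,y_2)$, constraint (D3) becomes $\gammah^b(y_1,y_2)-\sum_{s_1,s_2}\lambdah_\ssf(s_1,s_2,\shat_1,\shat_2,y_1,y_2)\le 0$, which is exactly constraint (A2) of DPJE and therefore holds.

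Next I would treat the flow-level constraints. For (D4), since $\lambda^{(2|1)}_\ssf\equiv 0$ and $\Pi(z)$ coincides with $\Upsilon(\ztilde)$ on the shared coordinates, the inequality reads $\lambdah_\ssf(s_1,s_2,\shat_1,\shat_2,y_1,y_2)+\lambdah_\csf(s_1,s_2,x_1,x_2,y_1,y_2)\le\Upsilon(\ztilde)$, which is constraint (A3) of DPJE. For (D5) I would use that $\lambda^{(1|2)}_\ssf$ is chosen independent of $x_2$, so $\sum_{s_1}\lambda^{(1|2)}_\ssf=\mu^{(2)}_\ssf$, and with $\mu^{(2)}_\csf\equiv 0$ the constraint holds with equality. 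Constraint (D6) is $0\le 0$, as all three of its terms vanish. Finally, (D7) becomes $\gammah^a(s_1,s_2)-\sum_{y_1,y_2}\lambdah_\csf(s_1,s_2,x_1,x_2,y_1,y_2)\le 0$, which is constraint (A1) of DPJE. This establishes feasibility, and the objective then evaluates to $\sum_{s_1}\gamma^a(s_1)+\sum_{s_2}\gamma^b(s_2)+\sum_{y_1,y_2}\gamma^c(y_1,y_2)=\sum_{s_1,s_2}\gammah^a(s_1,s_2)+\sum_{y_1,y_2}\gammah^b(y_1,y_2)=\OBJ(\DPJE)$ at $\hat{\Theta}$. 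Lemma~\ref{lemma:dualbound} (weak LP duality together with the chain $\OPT(\SW)\ge\OPT({\rm LPSW})=\OPT(\DPSW)$) then gives $\OPT(\DPSW)\ge\OBJ(\DPJE)$ at $\hat{\Theta}$; taking $\hat{\Theta}$ to be an optimizer of DPJE yields $\OPT(\DPSW)\ge\OPT(\DPJE)$.

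I do not expect a genuine obstacle: the argument is essentially bookkeeping. The only points that require care are matching the index orderings and variable names between DPSW and DPJE --- in particular identifying $\Pi(z)$ with $\Upsilon(\ztilde)$ --- and observing that the choice of $\lambda^{(1|2)}_\ssf$ as a function that does not depend on $x_2$ is precisely what makes the aggregation over $s_1$ in (D5) close. Everything else reduces to checking that (D1) and (D6) trivialize and that (D2)--(D4) and (D7) reproduce (A1)--(A3) of DPJE after substitution. The symmetric statement $\OPT(\DPSW)\ge\OPT(\DPSI_{2|1})$ mentioned just before would be handled by the mirror-image construction, exactly as in Proposition~\ref{prop:sideconverse}.
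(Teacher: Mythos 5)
Your verification is correct and follows exactly the route the paper intends: the paper omits this proof, stating it is analogous to that of Proposition~\ref{prop:sideconverse}, which is precisely the constraint-by-constraint substitution into (D1)--(D7) that you carry out, with (D2), (D5), (D6) closing trivially or with equality and (D3), (D4), (D7) reducing to (A2), (A3), (A1) of DPJE respectively. The objective computation and the appeal to weak duality are likewise as in the paper, so no gap remains.
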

\begin{proof}
The proof is similar to the proof of Proposition~\ref{prop:sideconverse} and we skip the proof here.
\end{proof}

The relationships between the feasible regions of DPSW with that of DPJE, $\DPSI_{1|2}$ and $\DPSI_{2|1}$ established through Propositions~\ref{prop:sideconverse} and \ref{prop:JEnDPSW} help in establishing a formal interpretation for the roles of the dual variables of DPSW. From Proposition~\ref{prop:sideconverse}, we see that the dual variable $\lambda^{(1|2)}_{\ssf}$ that we considered as akin to a source flow for $\DPSI_{1|2}$  and also as a source flow for the $\DPJE$, has a somewhat more complex interpretation. Specifically, while the latter interpretation  holds thanks to Proposition~\ref{prop:JEnDPSW}, the former holds only along the diagonal $x_2=y_2$, as seen in \eqref{eq:constr1}. A similar caveat holds for the channel flow $\lambda_{\csf}$ and the other source flow $\lambda^{(2|1)}_{\ssf}$.


We remark here that the choices of the $\mu$'s in Propositions~\ref{prop:JEnDPSW} and \ref{prop:sideconverse} are not necessarily optimal and hence, one may not obtain a sharp interpretation for these flows. Nonetheless, it can be seen that when the sum $\sum_{s_1}\lambda^{(1|2)}_{\ssf}$ is considered as in constraint (D5), the source flow from $S_1$ to destination $\Shat_1$ is averaged out, and what is left, is the influence of side-information of $S_2$. From Proposition~\ref{prop:JEnDPSW}  it can be seen that $\mu^{(2)}_{\ssf}$ accounts for the point-to-point like source flow through the path from $S_2$ to the destination node $(\Shat_1,\Shat_2)$. Further, \eqref{eq:constr1} implies that $\mu^{(2)}_{\csf}$ accounts for the channel flow through this path. 
 Similarly,  $\mu^{(1)}_{\ssf}$ represents the point-to-point like {source flow} through the path from $S_1$ to $(\Shat_1,\Shat_2)$ via the decoder and $\mu^{(1)}_{\csf}$ represents the corresponding {channel flow} for this path.
Finally, thanks to the relation $\mu^{(1|2)}_{\csf} = \sum_{y_1} \bar{\lambda}_\csf^{(1|2)}$ in \eqref{eq:constr1}, we can interpret that $\mu^{(1|2)}_\csf$ 
represents an average channel flow from $S_1$ to $(\Shat_1,\Shat_2)$ given the side-information about $S_2.$
Similarly, $\mu^{(2|1)}_{\csf}$ represents
 an average channel flow  from $S_2$ given  information about $S_1$. 
%
\subsection{Synthesizing Converses for SW from Point-to-Point Duals}
As an immediate consequence of  Proposition~\ref{prop:sideconverse} and Proposition~\ref{prop:JEnDPSW}, we get that the point-to-point metaconverses in \eqref{eq:SIDmetaconverse}, \eqref{eq:SID2converse} and \eqref{eq:JEmetaconverse} are all lower bounds on $\OPT(\DPSW)$.
Consequently, the following is a straightforward lower bound on $\OPT(\DPSW)$,
\begin{align}
&\OPT(\DPSW)\geq \max \biggl \lbrace \sup_{ 0\leq \phih(s_1,s_2)\leq P_{S_1,S_2}(s_1,s_2)}\biggl\lbrace \sum_{s_1,s_2} \phih(s_1,s_2)\non \\&\qquad-M_1M_2\max_{\shat_1,\shat_2} \phih(\shat_1,\shat_2) \biggr \rbrace,\non\\
&\sup_{0\leq \phi^{(1|2)}(s_1,s_2) \leq P_{S_1,S_2}(s_1,s_2)}\hspace{-0.1cm}\biggl \lbrace \hspace{-0.05cm}\sum_{s_1,s_2}\phi^{(1|2)}(s_1,s_2)\non \\& \qquad -M_1 \sum_{s_2}\max_{\shat_1}\phi^{(1|2)}(\shat_1,s_2) \biggr\rbrace\non,\\
& \sup_{0\leq \phi^{(2|1)}(s_1,s_2) \leq P_{S_1,S_2}(s_1,s_2)}\biggl \lbrace \sum_{s_1,s_2}\phi^{(2|1)}(s_1,s_2)\non \\& \qquad -M_2 \sum_{s_1}\max_{\shat_2}\phi^{(2|1)}(s_1,\shat_2) \biggr\rbrace \biggr \rbrace.
\label{eq:maxconverse}
\end{align}

Convex analytically speaking, the above bound considers a \textit{convex combination} of feasible points of DPSW obtained via Propositions~\ref{prop:sideconverse} and \ref{prop:JEnDPSW}. 
In the following theorem we synthesize a new feasible point for DPSW by a \textit{nonlinear} combination of the source and channel flows in the point-to-point dual programs $\DPJE$, $\DPSI_{1|2}$ and $\DPSI_{2|1}$. We will subsequently apply specific metaconverses from Corollary~\ref{cor:metaJE}, Theorem~\ref{thm:metaside12} and Theorem~\ref{thm:metaside21} to get our new metaconverse.
\begin{theorem}\label{thm:feasDPSW}
Let $(\gammab^a,\gammab^b,\lambdab^{(1|2)}_{\ssf},\lambdab^{(1|2)}_{\csf}) \in \FEA(\DPSI_{1|2})$, $(\gammat^a,\gammat^b,\lambdat^{(2|1)}_{\ssf},\lambdat^{(2|1)}_{\csf})\in \FEA(\DPSI_{2|1})$ and $(\gammah^a,\gammah^b,\lambdah_{\ssf},\lambdah_{\csf}) \in \FEA(\DPJE)$. Then, any choice of values for the variables of DPSW satisfying the following equations is feasible for DPSW.
\begin{align}
&\lambda^{(1|2)}_{\ssf}(s_1,s_2,x_2,y_1,y_2,\shat_1,\shat_2)=\biggl[\lambdab^{(1|2)}_{\ssf}(s_1,s_2,\shat_1,y_1)\times\non\\&\Ibb\{x_2=y_2\} +\alpha \lambdah_{\ssf}(s_1,s_2,\shat_1,\shat_2,y_1,y_2)\biggr]\Ibb\{(s_1,s_2)=(\shat_1,\shat_2)\},\non
\\
&\lambda^{(2|1)}_{\ssf}(s_1,s_2,x_1,y_1,y_2,\shat_1,\shat_2)=\biggl[\lambdat^{(2|1)}_{\ssf}(s_1,s_2,\shat_2,y_2)\times\non\\&\Ibb\{x_1=y_1\} +(1-\alpha)\lambdah_{\ssf}(s_1,s_2,\shat_1,\shat_2,y_1,y_2)\biggr]\times \non\\& \hspace{3cm}\Ibb\{(s_1,s_2)=(\shat_1,\shat_2)\},\non \\
&\lambda_{\csf}(s_1,s_2,x_1,x_2,y_1,y_2)\hspace{-0.1cm}= \min \biggl \lbrace \hspace{-0.1cm} P(s_1,s_2)\Ibb\{(y_1,y_2)=(x_1,x_2)\},\non\\&\qquad\lambdah_{\csf}(s_1,s_2,x_1,x_2,y_1,y_2)+\lambdab^{(1|2)}_{\csf}(s_1,s_2,x_1,y_1)\Ibb\{x_2=y_2\}\non\\& \qquad +\lambdat^{(2|1)}_{\csf}(s_1,s_2,x_2,y_2)\Ibb\{x_1=y_1\}\biggr \rbrace,\label{eq:DPSWfeas}
\\
&\mu^{(2)}_{\csf}(s_2,x_2,y_1,y_2)\leq \biggl[\gammab^b(y_1,s_2) -\sum_{s_1\neq \shat_1}\lambdab^{(1|2)}_{\ssf}(s_1,s_2,\shat_1,y_2)\biggr]\non\\& \hspace{3cm}  \times \Ibb\{x_2=y_2\}\Ibb\{s_2=\shat_2\},\non
\end{align}
\begin{align}
&\mu^{(2)}_{\ssf}(s_2,\shat_1,\shat_2,y_1,y_2)= \alpha \lambdah_{\ssf}(\shat_1,s_2,\shat_1,\shat_2,y_1,y_2)\Ibb\{s_2=\shat_2\},\non
\\
&\mu^{(1)}_{\csf}(s_1,x_1,y_1,y_2)\leq \biggl[\gammat^b(s_1,y_2) -\sum_{s_2 \neq \shat_2}\lambdat^{(2|1)}_{\ssf}(s_2,s_2,\shat_2,y_2)\biggr]\non\\& \hspace{3cm} \times\Ibb\{x_1=y_1\}\Ibb\{s_1=\shat_1\},\non \\
&\mu^{(1)}_{\ssf}(s_1,\shat_1,\shat_2,y_1,y_2)= (1-\alpha)\lambdah_{\ssf}(s_1,\shat_2,\shat_1,\shat_2,y_1,y_2)\non\\&\hspace{3cm} \times\Ibb\{s_1=\shat_1\},\non
\\
&\mu^{(2|1)}_{\csf}(x_2,s_1,s_2)+\mu^{(1|2)}_{\csf}(x_1,s_1,s_2)\non\\&\qquad \leq \sum_{y_1,y_2}\lambda_{\csf}(s_1,s_2,x_1,x_2,y_1,y_2),\non
\end{align}where $\alpha \in (0,1)$ and
 $\gamma^a(s_1),\gamma^b(s_2)$ and $\gamma^c(y_1,y_2)$ are chosen such that (D1), (D2) and (D3) hold with equality.
\end{theorem}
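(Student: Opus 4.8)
The plan is to verify directly that the stated assignment satisfies all of (D1)--(D7); since (D1)--(D3) are made tight by the choice of $\gamma^a,\gamma^b,\gamma^c$ (the right-hand sides of (D1)--(D3) are finite minima over the already-fixed $\mu$'s, so equality can always be arranged), and (D7) holds by construction because $\mu^{(2|1)}_{\csf}+\mu^{(1|2)}_{\csf}$ is required to be at most $\sum_{y_1,y_2}\lambda_{\csf}$ (such a split exists, e.g.\ $\mu^{(2|1)}_{\csf}\equiv 0$, $\mu^{(1|2)}_{\csf}(x_1,s_1,s_2)\equiv\min_{x_2}\sum_{y_1,y_2}\lambda_{\csf}(s_1,s_2,x_1,x_2,y_1,y_2)$), the real content is (D4)--(D6). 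These I would check by substituting the definitions in \eqref{eq:DPSWfeas} and splitting on the indicator functions, invoking feasibility of the three point-to-point dual points through constraints (A3), (B3), (C3) and (B2), (C2). The objective value claim then follows from Lemma~\ref{lemma:dualbound}.

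For (D4) I would split on whether $(s_1,s_2)=(\shat_1,\shat_2)$. If not, both source flows $\lambda^{(1|2)}_{\ssf},\lambda^{(2|1)}_{\ssf}$ vanish (each carries the factor $\Ibb\{(s_1,s_2)=(\shat_1,\shat_2)\}$), and the first branch of the $\min$ defining $\lambda_{\csf}$ gives $\lambda_{\csf}\le P_{S_1,S_2}(s_1,s_2)\Ibb\{(y_1,y_2)=(x_1,x_2)\}=\Pi(z)$. If $(s_1,s_2)=(\shat_1,\shat_2)$, then $\Pi(z)=0$; bounding $\lambda_{\csf}$ by the \emph{second} branch of its $\min$ and using $\alpha+(1-\alpha)=1$ to merge the two $\lambdah_{\ssf}$ terms, the left-hand side of (D4) is at most
\begin{align*}
&\Ibb\{x_2=y_2\}\big(\lambdab^{(1|2)}_{\ssf}(s_1,s_2,s_1,y_1)+\lambdab^{(1|2)}_{\csf}(s_1,s_2,x_1,y_1)\big)\\
&\quad+\Ibb\{x_1=y_1\}\big(\lambdat^{(2|1)}_{\ssf}(s_1,s_2,s_2,y_2)+\lambdat^{(2|1)}_{\csf}(s_1,s_2,x_2,y_2)\big)\\
&\quad+\lambdah_{\ssf}(s_1,s_2,s_1,s_2,y_1,y_2)+\lambdah_{\csf}(s_1,s_2,x_1,x_2,y_1,y_2).
\end{align*}
The three groups are bounded respectively by $\Psi$, $\Delta$, $\Upsilon$ via (B3), (C3), (A3), each of which is $0$ here because $\shat_1=s_1$, $\shat_2=s_2$ and $(\shat_1,\shat_2)=(s_1,s_2)$; hence the sum is $\le 0=\Pi(z)$.

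For (D5), the factor $\Ibb\{(s_1,s_2)=(\shat_1,\shat_2)\}$ in $\lambda^{(1|2)}_{\ssf}$ collapses $\sum_{s_1}\lambda^{(1|2)}_{\ssf}$: it is zero unless $s_2=\shat_2$, and then equals $\lambda^{(1|2)}_{\ssf}(\shat_1,s_2,x_2,y_1,y_2,\shat_1,\shat_2)$. After substitution the $\alpha\lambdah_{\ssf}$ piece cancels $\mu^{(2)}_{\ssf}$, so (D5) reduces, in the only nontrivial case $s_2=\shat_2$, $x_2=y_2$, to $\mu^{(2)}_{\csf}(s_2,y_2,y_1,y_2)\le\lambdab^{(1|2)}_{\ssf}(\shat_1,s_2,\shat_1,y_1)$, which is exactly what our choice of $\mu^{(2)}_{\csf}$ delivers, since (B2) for the $\DPSI_{1|2}$ point gives $\gammab^b(s_2,y_1)-\sum_{s_1\neq\shat_1}\lambdab^{(1|2)}_{\ssf}(s_1,s_2,\shat_1,y_1)\le\lambdab^{(1|2)}_{\ssf}(\shat_1,s_2,\shat_1,y_1)$; the remaining cases make the left side $\le 0$ trivially. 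Constraint (D6) is verified by the mirror-image computation, swapping users $1$ and $2$ and invoking (C2) from $\DPSI_{2|1}$ together with the definitions of $\mu^{(1)}_{\ssf},\mu^{(1)}_{\csf},\lambda^{(2|1)}_{\ssf}$.

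The main obstacle is purely bookkeeping: keeping straight how the four indicators $\Ibb\{x_1=y_1\}$, $\Ibb\{x_2=y_2\}$, $\Ibb\{(y_1,y_2)=(x_1,x_2)\}$ and $\Ibb\{(s_1,s_2)=(\shat_1,\shat_2)\}$ interact so that the reductions to (A3), (B3), (C3) and to (B2), (C2) are legitimate in every case, and checking that the several ``$\le$'' defining relations for the $\mu$'s can be met simultaneously --- a routine decoupling argument, since each $\mu$ depends on its own block of variables. No genuinely new idea beyond the synthesis formulas in \eqref{eq:DPSWfeas} appears to be required.
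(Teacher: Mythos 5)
Your proposal is correct and follows essentially the same route as the paper's proof in Appendix B: the case split on $\Ibb\{(s_1,s_2)=(\shat_1,\shat_2)\}$ for (D4) with the regrouping into the three point-to-point bottlenecks (B3), (C3), (A3), the collapse of $\sum_{s_1}\lambda^{(1|2)}_{\ssf}$ and use of (B2) for (D5), the mirror argument via (C2) for (D6), and the observation that (D7) and the equalities in (D1)--(D3) hold by construction. No substantive difference from the paper's argument.
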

\begin{proof}
The proof is included in Appendix \ref{sec:appB}.
\end{proof}

Theorem~\ref{thm:feasDPSW} generates a new feasible point for DPSW using an appropriate and nonlinear combination of the feasible points of the point-to-point source coding problems. 
Notice that the source flow
$\lambda^{(1|2)}_{\ssf}$ is taken as a superposition of the source flow for $\DPSI_{1|2}$ (which acts only when $x_2=y_2$) and a fraction of the source flow for DPJE. 
Similarly, $\lambda^{(2|1)}_{\ssf}$ is a superposition of the source flow for $\DPSI_{2|1}$ (which acts only when $x_1=y_1$) with the remaining fraction of the source flow for DPJE. 
While $\lambda_{\ssf}^{(1|2)}$ and $\lambda_{\ssf}^{(2|1)}$ are linear combinations of the point-to-point source flows, the channel flow $\lambda_{\csf}$ considers a nonlinear combination of the channel flows of $\DPSI_{1|2}$, $\DPSI_{2|1}$ and $\DPJE$,
so as to satisfy the bottleneck in constraint (D4). Since $\lambda_{\csf}$ dictates the choice of $\mu^{(1|2)}_{\csf}$ and $\mu_{\csf}^{(2|1)}$, the nonlinearity is also inherited in the relation of $\mu$'s. 

Moreover, the nonlinear relation between $\lambda_{\csf}$ and the channel flows of the point-to-point problems is one of the reasons for the
improvement on the classical converse of Miyake and Kanaya, as can be seen later in \eqref{eq:MKimprovement}. This improvement is otherwise hard to deduce from a feasible point of DPSW resulting from a convex combination of point-to-point feasible points.
 
 Thanks to Theorem~\ref{thm:feasDPSW}, to obtain finite blocklength converses for Slepian-Wolf coding, it suffices to consider the simpler  point-to-point source-coding problems and construct good feasible points for them.
In particular,
considering those feasible points of $\DPSI_{1|2}$, $\DPSI_{2|1}$ and $\DPJE$ which yield the metaconverses in \eqref{eq:SIDmetaconverse}, \eqref{eq:SID2converse} and \eqref{eq:JEmetaconverse} for the corresponding point-to-point sub-problems and subsequently employing Theorem~\ref{thm:feasDPSW}, we obtain the following new finite blocklength converse for SW.
 \begin{theorem}[Metaconverse for Slepian-Wolf Coding]\label{thm:SWmetaconverse}
 Consider the problem SW. Consequently, for any code, the following bound holds:
 \begin{align}
 &\Ebb[\Ibb\{(S_1,S_2)\neq (\Shat_1,\Shat_2)\}]\geq \OPT(\SW)\geq \OPT(\DPSW)\geq \non \\& \sup_{\phih,\phi^{(1|2)},\phi^{(2|1)}} \hspace{-0.1cm}\biggl \lbrace \sum_{s_1,s_2}\hspace{-0.1cm}\min\{P_{S_1,S_2}(s_1,s_2),\phih(s_1,s_2)\hspace{-0.05cm}+\hspace{-0.05cm}\phi^{(1|2)}(s_1,s_2)\non\\&+\phi^{(2|1)}(s_1,s_2)\}-M_1M_2\max_{\shat_1,\shat_2}\phih(\shat_1,\shat_2) \non\\&-M_2 \sum_{s_1}\max_{\shat_2}\phi^{(2|1)}(s_1,\shat_2) -M_1\sum_{s_2}\max_{\shat_1}\phi^{(1|2)}(\shat_1,s_2) \biggr \rbrace, \label{eq:SWmetaconverse}
 \end{align}
 where the supremum is over $\phih,\phi^{(1|2)},\phi^{(2|1)} :\Sscr_1\times \Sscr_2 \rightarrow [0,1]$ such that $0\leq \phih(s_1,s_2),\phi^{(1|2)}(s_1,s_2),\phi^{(2|1)}(s_1,s_2) \leq P_{S_1,S_2}(s_1,s_2)$ for all $s_1 \in \Sscr_1,s_2 \in \Sscr_2$.
 \end{theorem}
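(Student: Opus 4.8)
The plan is to obtain the bound purely by \emph{instantiating} Theorem~\ref{thm:feasDPSW}: I will feed it the three point-to-point feasible points that realize the metaconverses of Corollary~\ref{cor:metaJE}, Theorem~\ref{thm:metaside12} and Theorem~\ref{thm:metaside21}, read off the resulting feasible point of DPSW, and then compute its dual objective through Lemma~\ref{lemma:dualbound}. No further feasibility checking is needed, since Theorem~\ref{thm:feasDPSW} already guarantees that the constructed point satisfies (D4)--(D7); the work is entirely in the evaluation.

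Concretely, fix admissible $\phih,\phi^{(1|2)},\phi^{(2|1)}:\Sscr_1\times\Sscr_2\to[0,1]$ (each $\le P_{S_1,S_2}$), and take the flows of $\DPJE$, $\DPSI_{1|2}$, $\DPSI_{2|1}$ to be those of \eqref{eq:JEmetafeasible}, \eqref{eq:SIDfeasible}, \eqref{eq:constr4}; the associated $\gammah,\gammab,\gammat$ obtained by forcing (A1)--(A2), (B1)--(B2), (C1)--(C2) to hold with equality are the ones computed in the proofs of those results (in particular $\gammab^b(s_2,y_1)=-\max_{\shat_1}\phi^{(1|2)}(\shat_1,s_2)$, $\gammat^b(s_1,y_2)=-\max_{\shat_2}\phi^{(2|1)}(s_1,\shat_2)$, and $\gammah^b=-\max_{\shat_1,\shat_2}\phih$). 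The key structural observation is that all three candidate channel flows are supported on the diagonal $\{x_1=y_1,\ x_2=y_2\}$, so the superposition inside the $\min$ of \eqref{eq:DPSWfeas} collapses and \[\lambda_{\csf}(s_1,s_2,x_1,x_2,y_1,y_2)=\Ibb\{(y_1,y_2)=(x_1,x_2)\}\,\min\bigl\{P_{S_1,S_2}(s_1,s_2),\ \phih(s_1,s_2)+\phi^{(1|2)}(s_1,s_2)+\phi^{(2|1)}(s_1,s_2)\bigr\}.\] Hence $\sum_{y_1,y_2}\lambda_{\csf}$ equals that single $\min$, independent of $x_1,x_2$; this is precisely the mechanism behind the improvement over a convex combination of the point-to-point converses (and over Miyake--Kanaya). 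I then take $\mu^{(1|2)}_{\csf},\mu^{(2|1)}_{\csf}$ to saturate (D7), \ie $\mu^{(1|2)}_{\csf}(x_1,s_1,s_2)+\mu^{(2|1)}_{\csf}(x_2,s_1,s_2)\equiv\min\{P_{S_1,S_2}(s_1,s_2),\phih+\phi^{(1|2)}+\phi^{(2|1)}\}(s_1,s_2)$ (only their sum enters the bound, so the split is immaterial), and the remaining $\mu$'s at the values prescribed by Theorem~\ref{thm:feasDPSW}.

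It then remains to evaluate $\OPT(\DPSW)\ge\sum_{s_1}\gamma^a(s_1)+\sum_{s_2}\gamma^b(s_2)+\sum_{y_1,y_2}\gamma^c(y_1,y_2)$ with $\gamma^a,\gamma^b,\gamma^c$ chosen so (D1)--(D3) hold with equality. Summing the prescribed $\mu^{(1)}_{\ssf}$ over $s_1$ and $\mu^{(2)}_{\ssf}$ over $s_2$ gives $-(1-\alpha)\phih(\shat_1,\shat_2)$ and $-\alpha\phih(\shat_1,\shat_2)$, whose total is $-\phih(\shat_1,\shat_2)$ (the $\alpha$ cancels), so $\gamma^c(y_1,y_2)=-\max_{\shat_1,\shat_2}\phih(\shat_1,\shat_2)$ and $\sum_{y_1,y_2}\gamma^c=-M_1M_2\max_{\shat_1,\shat_2}\phih(\shat_1,\shat_2)$; summing $\mu^{(1)}_{\csf}$ over $(y_1,y_2)$ gives $-M_2\max_{\shat_2}\phi^{(2|1)}(s_1,\shat_2)$ and summing $\mu^{(2)}_{\csf}$ over $(y_1,y_2)$ gives $-M_1\max_{\shat_1}\phi^{(1|2)}(\shat_1,s_2)$; and $\sum_{s_1,s_2}\bigl(\mu^{(1|2)}_{\csf}+\mu^{(2|1)}_{\csf}\bigr)=\sum_{s_1,s_2}\min\{P_{S_1,S_2},\phih+\phi^{(1|2)}+\phi^{(2|1)}\}(s_1,s_2)$. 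Adding the three groups reproduces exactly the bracketed expression in \eqref{eq:SWmetaconverse}; taking the supremum over admissible $\phih,\phi^{(1|2)},\phi^{(2|1)}$, together with $\Ebb[\Ibb\{(S_1,S_2)\neq(\Shat_1,\Shat_2)\}]\ge\OPT(\SW)$ for any code and Lemma~\ref{lemma:dualbound}(a)--(b), completes the proof.

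I expect the main obstacle to be the nested-indicator bookkeeping in the last step: one must verify that the $\Ibb\{(s_1,s_2)=(\shat_1,\shat_2)\}$ restriction on the source flows in \eqref{eq:DPSWfeas} still lets $\mu^{(1)}_{\ssf},\mu^{(2)}_{\ssf}$ recombine to $-\phih$, that the off-diagonal pieces of $\mu^{(1)}_{\csf},\mu^{(2)}_{\csf}$ are nonpositive (so the equality choices of $\gamma^a,\gamma^b$ remain valid), and that $\gammab^b,\gammat^b,\gammah^b$ are indeed as claimed for the chosen flows. Conceptually the only genuinely new point is the collapse of the three diagonal channel flows under the outer $\min$ with $P_{S_1,S_2}$ in \eqref{eq:DPSWfeas}; everything else is a specialization of results already in hand.
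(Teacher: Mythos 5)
Your proposal is correct and takes essentially the same route as the paper: the paper's proof likewise instantiates Theorem~\ref{thm:feasDPSW} with the feasible points \eqref{eq:JEmetafeasible}, \eqref{eq:SIDfeasible}, \eqref{eq:constr4}, sets the remaining dual variables exactly as you do (in particular $\lambda_{\csf}=\Ibb\{(y_1,y_2)=(x_1,x_2)\}\min\{P_{S_1,S_2},\phih+\phi^{(1|2)}+\phi^{(2|1)}\}$, saturating $\mu^{(2|1)}_{\csf}$ with $\mu^{(1|2)}_{\csf}\equiv 0$), and evaluates the dual objective via Lemma~\ref{lemma:dualbound}. Your bookkeeping of the $\gamma$'s and the cancellation of $\alpha$ in $\sum_{s_1}\mu^{(1)}_{\ssf}+\sum_{s_2}\mu^{(2)}_{\ssf}$ matches the paper's computation.
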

 \begin{proof} The proof is included in Appendix~\ref{sec:appB}. 
 \end{proof} 
 
 In particular, choosing 
\begin{align*}
 \phih(s_1,s_2)&= \min\{P_{S_1,S_2}(s_1,s_2),\eta_1(s_1,s_2)\}\\ 
 \phi^{(1|2)}(s_1,s_2)&= \min\{P_{S_1,S_2}(s_1,s_2),\eta_2(s_1,s_2)\}\\ 
 \phi^{(2|1)}(s_1,s_2)&= \min\{P_{S_1,S_2}(s_1,s_2),\eta_3(s_1,s_2)\}
\end{align*} 
in \eqref{eq:SWmetaconverse}, where $\eta_1,\eta_2,\eta_3 : \Sscr_1 \times \Sscr_2 \rightarrow [0,\infty)$, we get the following bound.
 \begin{align}
 &\Ebb[\Ibb\{(S_1,S_2)\neq (\Shat_1,\Shat_2)\}]\geq \OPT(\SW)\geq \OPT(\DPSW)\geq \non 
 \\& \sup_{\eta_1,\eta_2,\eta_3\geq 0} \biggl \lbrace \sum_{s_1,s_2}\min\{P_{S_1,S_2}(s_1,s_2),\eta_1(s_1,s_2)+\eta_2(s_1,s_2)+\non
 \\&\quad \eta_3(s_1,s_2)\}-M_1M_2\max_{\shat_1,\shat_2}\min\{P_{S_1,S_2}(\shat_1,\shat_2),\eta_1(\shat_1,\shat_2)\}\non\\&\quad  -M_2 \sum_{s_1}\max_{\shat_2}\min\{P_{S_1,S_2}(s_1,\shat_2),\eta_3(s_1,\shat_2)\}\non\\&\quad -M_1\sum_{s_2}\max_{\shat_1}\min\{P_{S_1,S_2}(\shat_1,s_2),\eta_2(\shat_1,s_2)\} \biggr \rbrace \label{eq:SWmetaconverse1}
 \end{align}
Further, the new converse in \eqref{eq:SWmetaconverse1} improves on the information spectrum based converse of Miyake and Kanaya \cite{miyake1995coding} as shown in the following corollary. 
\begin{corollary}[Improvements on Miyake-Kanaya Converse]
The converse in \eqref{eq:SWmetaconverse1} implies the following improvement on the converse of Miyake and Kanaya,
\begin{align}
&\Ebb[\I{(S_1,S_2)\neq (\Shat_1,\Shat_2)}]\geq \OPT({\rm SC})\geq \OPT({\rm DP})\geq\non\\& \sup_{\beta>0}\biggl\lbrace \Pbb\biggl[h_{S_1,S_2}(S_1,S_2)\geq \log M_1M_2+\beta \hspace{0.2cm} \mbox{or}\hspace{0.2cm} h_{S_1|S_2}(S_1|S_2)\non\\& \geq \log M_1+\beta \hspace{0.1cm}\mbox{or}\hspace{0.2cm} h_{S_2|S_1}(S_2|S_1)\geq \log M_2+\beta \biggr]+\non\\
& \sum_{s_1,s_2}\max\biggl \lbrace \frac{\exp(-
 \beta)}{M_1M_2},\frac{\exp(-\beta)}{M_1}P_{S_2}(s_2),
 \frac{\exp(-\beta)}{M_2}P_{S_1}(s_1)\biggr \rbrace \non \\&\times \Ibb\biggl \lbrace P_{S_1|S_2}(s_1|s_2)>\frac{\exp(-\beta)}{M_1},P_{S_2|S_1}(s_2|s_1)>\frac{\exp(-\beta)}{M_2},\non\\&\quad P_{S_1,S_2}(s_1,s_2)>\frac{\exp(-\beta)}{M_1M_2}\biggr \rbrace
  -3\exp(-\beta)\biggr\rbrace, \label{eq:MKimprovement}
  \end{align} where  $h_{A|B}(a|b)\equiv -\log P_{A|B}(a|b)$ is the conditional entropy density and  $h_{A,B}(a,b)\equiv -\log P_{A,B}(a,b)$ is the joint entropy density.  \end{corollary}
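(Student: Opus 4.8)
The plan is to derive \eqref{eq:MKimprovement} from the metaconverse \eqref{eq:SWmetaconverse1} by plugging in a single explicit, $\beta$-dependent choice of the three free functions and then simplifying. For a fixed $\beta>0$, I would take
\[ \eta_1(s_1,s_2)\equiv \frac{\exp(-\beta)}{M_1M_2},\qquad \eta_2(s_1,s_2)\equiv \frac{\exp(-\beta)}{M_1}P_{S_2}(s_2),\qquad \eta_3(s_1,s_2)\equiv \frac{\exp(-\beta)}{M_2}P_{S_1}(s_1), \]
where $P_{S_1},P_{S_2}$ are the marginals of $P_{S_1,S_2}$. These are nonnegative, hence admissible in \eqref{eq:SWmetaconverse1}, and the structural feature that makes everything work is that $\eta_2$ depends only on $s_2$ and $\eta_3$ only on $s_1$, which matches the ``side-information'' penalty terms.

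First I would bound the three subtracted (penalty) terms. Since $\eta_1$ is constant, $M_1M_2\max_{\shat_1,\shat_2}\min\{P_{S_1,S_2}(\shat_1,\shat_2),\eta_1\}\le M_1M_2\,\eta_1=\exp(-\beta)$. Since $\eta_2(\shat_1,s_2)$ does not depend on $\shat_1$, we get $M_1\sum_{s_2}\max_{\shat_1}\min\{P_{S_1,S_2}(\shat_1,s_2),\eta_2(\shat_1,s_2)\}\le M_1\sum_{s_2}\eta_2(\cdot,s_2)=\exp(-\beta)\sum_{s_2}P_{S_2}(s_2)=\exp(-\beta)$, and symmetrically the $\eta_3$-penalty is at most $\exp(-\beta)$. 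Hence the penalty contributes at least $-3\exp(-\beta)$, which is the last term of \eqref{eq:MKimprovement}.

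The main step — and the one needing care — is the leading term $\sum_{s_1,s_2}\min\{P_{S_1,S_2}(s_1,s_2),\eta_1+\eta_2+\eta_3\}$. I would relax $\eta_1+\eta_2+\eta_3\ge\max\{\eta_1,\eta_2,\eta_3\}=:\mu(s_1,s_2)$ and use monotonicity of $\min(P,\cdot)$, so this term is at least $\sum_{s_1,s_2}\min\{P_{S_1,S_2}(s_1,s_2),\mu(s_1,s_2)\}$, where $\mu$ is exactly the $\max\{\cdot,\cdot,\cdot\}$ inside \eqref{eq:MKimprovement}. Splitting $\min\{P,\mu\}=P\,\Ibb\{P\le\mu\}+\mu\,\Ibb\{P>\mu\}$ and translating the events into entropy densities closes the identification: on the support of $P_{S_1,S_2}$, $P_{S_1,S_2}(s_1,s_2)\le\mu(s_1,s_2)$ holds iff at least one of $P_{S_1,S_2}(s_1,s_2)\le\exp(-\beta)/(M_1M_2)$, $P_{S_1|S_2}(s_1|s_2)\le\exp(-\beta)/M_1$, $P_{S_2|S_1}(s_2|s_1)\le\exp(-\beta)/M_2$ holds, equivalently $h_{S_1,S_2}\ge\log M_1M_2+\beta$ or $h_{S_1|S_2}\ge\log M_1+\beta$ or $h_{S_2|S_1}\ge\log M_2+\beta$, so $\sum_{s_1,s_2}P\,\Ibb\{P\le\mu\}$ is exactly the probability in the first line of \eqref{eq:MKimprovement}. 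Its complement $\{P>\mu\}$ is the conjunction of the three strict inequalities in the indicator of \eqref{eq:MKimprovement}, so $\sum_{s_1,s_2}\mu\,\Ibb\{P>\mu\}$ is exactly the remaining summation there. Combining with the penalty bound gives the claim for each fixed $\beta>0$, and taking $\sup_{\beta>0}$ yields \eqref{eq:MKimprovement}.

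I do not expect a serious obstacle: the only genuine relaxation is $\eta_1+\eta_2+\eta_3\ge\max\{\eta_1,\eta_2,\eta_3\}$ (everything else is an identity), which is precisely why the resulting bound keeps the ``or'' structure and the extra nonnegative correction term over the classical Miyake--Kanaya converse. The two points that need attention are purely bookkeeping: (i) handling $(s_1,s_2)$ with $P_{S_2}(s_2)=0$ or $P_{S_1}(s_1)=0$ before dividing to pass to conditional probabilities — such pairs automatically satisfy $h_{S_1,S_2}=\infty$ and contribute zero to every sum, hence are harmless; and (ii) keeping the inequality directions consistent, since the event $P\le c$ corresponds to $h\ge-\log c$.
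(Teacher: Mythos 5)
Your proposal is correct and follows essentially the same route as the paper's proof: the same choice of $\eta_1,\eta_2,\eta_3$, the same relaxation of the sum to the maximum inside the leading $\min$, the same bound $\min\{P,\eta_i\}\leq \eta_i$ on the three penalty terms, and the same translation into entropy densities before taking the supremum over $\beta$. The only difference is that you spell out the split $\min\{P,\mu\}=P\,\Ibb\{P\leq\mu\}+\mu\,\Ibb\{P>\mu\}$ and the zero-marginal edge cases explicitly, which the paper leaves implicit.
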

  \begin{proof}
  To obtain the above converse, weaken \eqref{eq:SWmetaconverse1} by choosing $\eta_1(s_1,s_2)\equiv \frac{\exp(-\beta)}{M_1M_2},$ $\eta_2(s_1,s_2)\equiv P_{S_2}(s_2)\frac{\exp(-\beta)}{M_1}$, $\eta_3(s_1,s_2)\equiv P_{S_1}(s_1)\frac{\exp(-\beta)}{M_2}$ and bound  $ \min\{P(s_1,s_2),\eta_1(s_1,s_2)+\eta_2(s_1,s_2)+\eta_3(s_1,s_2)\}$ by $ \min\{P(s_1,s_2),\max\{\eta_1(s_1,s_2),\eta_2(s_1,s_2),\eta_3(s_1,s_2)\}\}.$ Further, bound $\min\{P_{S_1,S_2}(\shat_1,\shat_2),\eta_1(\shat_1,\shat_2)\}$ by $\eta_1(s_1,s_2)$, $\min\{P_{S_1,S_2}(\shat_1,\shat_2),\eta_2(\shat_1,\shat_2)\}$ by $\eta_2(s_1,s_2)$ and $\min\{P_{S_1,S_2}(\shat_1,\shat_2),\eta_3(\shat_1,\shat_2)\}$ by $\eta_3(s_1,s_2)$.
  Subsequently, employing the definition of  $h_{A|B}(a|b)$,  $h_{A,B}(a,b)$ and taking supremum over $\beta>0$, we get the required converse.
  \end{proof}
  \begin{remarkc}[(Recovering the Converse of Miyake and Kanaya)]
  Lower bounding the non-negative term in \eqref{eq:MKimprovement} corresponding to $\Ibb\biggl \lbrace P_{S_1|S_2}(s_1|s_2)>\frac{\exp(-\beta)}{M_1},P_{S_2|S_1}(s_2|s_1)>\frac{\exp(-\beta)}{M_2},P_{S_1,S_2}(s_1,s_2)>\frac{\exp(-\beta)}{M_1M_2}\biggr \rbrace$ with zero, we recover the converse of Miyake and Kanaya given as,
\begin{align}
&\Ebb[\I{(S_1,S_2)\neq (\Shat_1,\Shat_2)}]\geq \sup_{\beta>0}\biggl\lbrace \Pbb\biggl[h_{S_1,S_2}(S_1,S_2)\non\\&\geq \log M_1M_2+\beta \hspace{0.2cm} \mbox{or} \hspace{0.2cm} h_{S_1|S_2}(S_1|S_2)\geq \log M_1+\beta \non\\& \hspace{0.1cm}\mbox{or}\hspace{0.2cm} h_{S_2|S_1}(S_2|S_1)\geq \log M_2+\beta \biggr]
  -3\exp(-\beta)\biggr\rbrace. \label{eq:MKconverse}
  \end{align}
  \end{remarkc}\\

Before we conclude, we note that the relevance of \eqref{eq:maxconverse}, particularly in the analysis of second-order asymptotics  is limited. As pointed out in \cite[Section 6.2]{tan2014asymptotic}, the second-order analysis centered at a corner point of the first order rate region of Slepian-Wolf problem, is determined by the multivariate Gaussian CDF with respect to jointly encoded and side-information  problems together. Consequently, the lower bound in \eqref{eq:SWmetaconverse} or the union bound in \eqref{eq:MKimprovement} are more relevant. In fact, with the flexibility of choosing $\eta_1,\eta_2,\eta_3$ which are functions of $(s_1,s_2)$, the converse in \eqref{eq:SWmetaconverse} may even yield refined third order terms in the asymptotic analysis.
	\subsection{Illustrative Example:Doubly Symmetric Binary Sources}	
	In this section, we consider the example of a Doubly Symmetric Binary Source (DSBS) with $\Sscr_1=\Sscr_2=\{0,1\}^n$ and the joint probability distribution given as,
	\begin{align}P_{S_1,S_2}(s_1,s_2)\equiv \frac{1}{2^n}p^{d(s_1,s_2)}(1-p)^{n-d(s_1,s_2)},\label{eq:DSBS}\end{align} where $p<0.5$ and $d(s_1,s_2)$ represents the Hamming distance between $s_1 \in \Sscr_1$ and $s_2 \in \Sscr_2$. Further,  $M_1=2^{nR_1}$, $M_2=2^{nR_2},$  $P_{S_1|S_2}(s_1|s_2)=P_{S_2|S_1}(s_2|s_1)\equiv p^{d(s_1,s_2)}(1-p)^{n-d(s_1,s_2)}$ and $P_{S_1}(s_1)=P_{S_2}(s_2)=\frac{1}{2^n}$. The optimal rate region for this DSBS is given by \cite{el2011network},
	\begin{align*}
	\mathcal{R}_{SW}=\{(R_1,R_2)\mid R_1,R_2 \geq H(p), R_1+R_2 \geq 1+H(p)\}.
	\end{align*}
	
Particularizing the converse in \eqref{eq:SWmetaconverse1} to the case of DSBS by choosing $\eta_2(s_1,s_2)\equiv P_{S_2}(s_2)\frac{2^{-\beta}}{M_1}$, $\eta_3(s_1,s_2)\equiv P_{S_1}(s_1)\frac{2^{-\beta}}{M_2}$ and $\eta_1(s_1,s_2)\equiv \frac{2^{-\beta}}{M_1M_2}$	results in the following converse.
\begin{align}
&\Ebb[\Ibb\{(S_1,S_2) \neq (\Shat_1,\Shat_2)\}]\geq  \OPT(\DPSW)\non\\&\geq \sup_{\beta>0}\biggl \lbrace \sum_{k=0}^n \Comb{n}{k} \min \biggl \lbrace p^k(1-p)^{n-k}, \frac{2^{-\beta}}{M_1}+\frac{2^{-\beta}}{ M_2}+\frac{2^{-\beta+n}}{M_2 M_1}\biggr \rbrace \non \\&\quad  -M_12^n \max_{k \in \{0,\hdots n\}}\min \biggl \lbrace \frac{p^k(1-p)^{n-k}}{2^n}, \frac{2^{-\beta}}{2^nM_1} \biggr \rbrace \non 
\\&\quad  -M_22^n \max_{k \in \{0,\hdots n\}}\min \biggl \lbrace \frac{p^k(1-p)^{n-k}}{2^n}, \frac{2^{-\beta}}{2^nM_2} \biggr \rbrace \non \\
& -M_1M_2 \max_{k \in \{0,\hdots n\}}\min \biggl \lbrace \frac{p^k(1-p)^{n-k}}{2^n}, \frac{2^{-\beta}}{M_1M_2} \biggr \rbrace \biggr \rbrace. \label{eq:DSBSmeta}
\end{align}The above bound follows from \eqref{eq:SWmetaconverse} since for any $s_1 \in \Sscr_1$, the number of $s_2$'s at a Hamming distance of $k \in \{0,\hdots,n\}$ is given by $\Comb{n}{k}$.

 Figure~\ref{fig:SWout} and Figure~\ref{fig:SWins} compare our improved converse \eqref{eq:DSBSmeta} with the Miyake-Kanaya converse in \eqref{eq:MKconverse}. It is seen see that the improved converse in \eqref{eq:DSBSmeta} shows a nontrivial improvement on the Miyake-Kanaya converse. Note the differences in scale in both Figure~\ref{fig:SWout} and Figure~\ref{fig:SWins}.\\
\begin{remarkc}
For the case of DSBS whose joint distribution depends only on the Hamming distance, choosing $\eta_1,\eta_2,\eta_3$ in \eqref{eq:SWmetaconverse1} to  be independent of $(s_1,s_2)$ results in  \eqref{eq:DSBSmeta} performing weaker than \eqref{eq:maxconverse}. To see this, lower bound \eqref{eq:maxconverse} with the converse from jointly encoded sources \ie, \eqref{eq:JEmetaconverse}, and  choose 
\begin{align}&\phih(s_1,s_2)=\min\biggl \lbrace p^{d(s_1,s_2)}(1-p)^{n-d(s_1,s_2)}\frac{1}{2^n},\non \\&\min\biggl \lbrace p^{d(s_1,s_2)}(1-p)^{n-d(s_1,s_2)}\frac{1}{2^n}, \frac{2^{-\beta}}{2^n M_1}\biggr \rbrace\non\\&+\min\biggl \lbrace p^{d(s_1,s_2)}(1-p)^{n-d(s_1,s_2)}\frac{1}{2^n}, \frac{2^{-\beta}}{2^n M_2}\biggr \rbrace \non\\&+\min\biggl \lbrace p^{d(s_1,s_2)}(1-p)^{n-d(s_1,s_2)}\frac{1}{2^n}, \frac{2^{-\beta}}{M_1M_2}\biggr \rbrace \biggr \rbrace\label{eq:phih}
\end{align}
and  subsequently, upper bound $\max_{\shat_1,\shat_2} \phih(\shat_1,\shat_2)$ with 
\begin{align*}&\max_{\shat_1,\shat_2}\biggl \lbrace \min\{p^{d(\shat_1,\shat_2)}(1-p)^{n-d(\shat_1,\shat_2)}\frac{1}{2^n}, \frac{2^{-\beta}}{2^nM_1}\}+\\& \min\{p^{d(\shat_1,\shat_2)}(1-p)^{n-d(\shat_1,\shat_2)}\frac{1}{2^n},\frac{2^{-\beta}}{2^nM_2}\}\\& +\min\{p^{d(\shat_1,\shat_2)}(1-p)^{n-d(\shat_1,\shat_2)}\frac{1}{2^n},\frac{2^{-\beta}}{M_2M_1}\}\biggr \rbrace.\end{align*} 
Further, noting that $\phih(s_1,s_2)$ is equivalent to,
$\min\{p^{d(s_1,s_2)}(1-p)^{n-d(s_1,s_2)}\frac{1}{2^n},\frac{2^{-\beta}}{M_12^n}+\frac{2^{-\beta}}{ 2^n M_2}+\frac{2^{-\beta}}{M_2 M_1}\}$ and $\sum_{s_1,s_2}\phih(s_1,s_2)= \sum_{k=0}^n\Comb{n}{k} \min\biggl \lbrace p^{k}(1-p)^{n-k},\frac{2^{-\beta}}{M_1}+\frac{2^{-\beta}}{  M_2}+\frac{2^{-\beta+n}}{M_2 M_1}\}$ the following lower bound follows from \eqref{eq:maxconverse},
\begin{align}
&\sum_{k=0}^n\Comb{n}{k} \min\biggl \lbrace p^{k}(1-p)^{n-k},\frac{2^{-\beta}}{M_1}+\frac{2^{-\beta}}{  M_2}+\frac{2^{-\beta+n}}{M_2 M_1}\biggr \rbrace \non\\&-
M_1M_2\max_{k \in \{0,\hdots,n\}}\biggl \lbrace \min\{p^{k}(1-p)^{n-k}\frac{1}{2^n}, \frac{2^{-\beta}}{2^nM_1}\}+ \non\\& \min\{p^{k}(1-p)^{n-k}\frac{1}{2^n},\frac{2^{-\beta}}{2^nM_2}\}+\min\{p^{k}(1-p)^{n-k}\frac{1}{2^n},\non\\&\qquad\frac{2^{-\beta}}{M_2M_1}\}\biggr \rbrace \non\\
&\geq \sum_{k=0}^n\Comb{n}{k} \min\biggl \lbrace p^{k}(1-p)^{n-k},\frac{2^{-\beta}}{M_1}+\frac{2^{-\beta}}{  M_2}+\frac{2^{-\beta+n}}{M_2 M_1}\biggr \rbrace \non\non\\&-
M_1M_2\max_{k \in \{0,\hdots,n\}} \min\{p^{k}(1-p)^{n-k}\frac{1}{2^n}, \frac{2^{-\beta}}{2^nM_1}\}\non\\& -M_1M_2\max_{k \in \{0,\hdots,n\}}\min\{p^{k}(1-p)^{n-k}\frac{1}{2^n},\frac{2^{-\beta}}{2^nM_2}\}\non\\&-M_1M_2\max_{k \in \{0,\hdots,n\}}\min\{p^{k}(1-p)^{n-k}\frac{1}{2^n},\frac{2^{-\beta}}{M_2M_1}\}\biggr \rbrace. \label{eq:bound2}
\end{align}
It is now easy to see that when $M_1,M_2 \leq 2^n$, \eqref{eq:bound2} outperforms \eqref{eq:DSBSmeta}. 
Note, however, that this outperformance relies on a particular choice of the $\eta_1,\eta_2,\eta_3$ in Theorem~\ref{thm:SWmetaconverse} which leads to \eqref{eq:DSBSmeta} and on the structure of the DSBS. In particular, it \textit{does not} imply that Theorem~\ref{thm:SWmetaconverse} is weaker than \eqref{eq:maxconverse}.
\end{remarkc}\\
	\begin{figure}[htb]
	\centering
\includegraphics[scale=0.843,clip=true,trim=2.15in 5.4in 0in 3.4in]{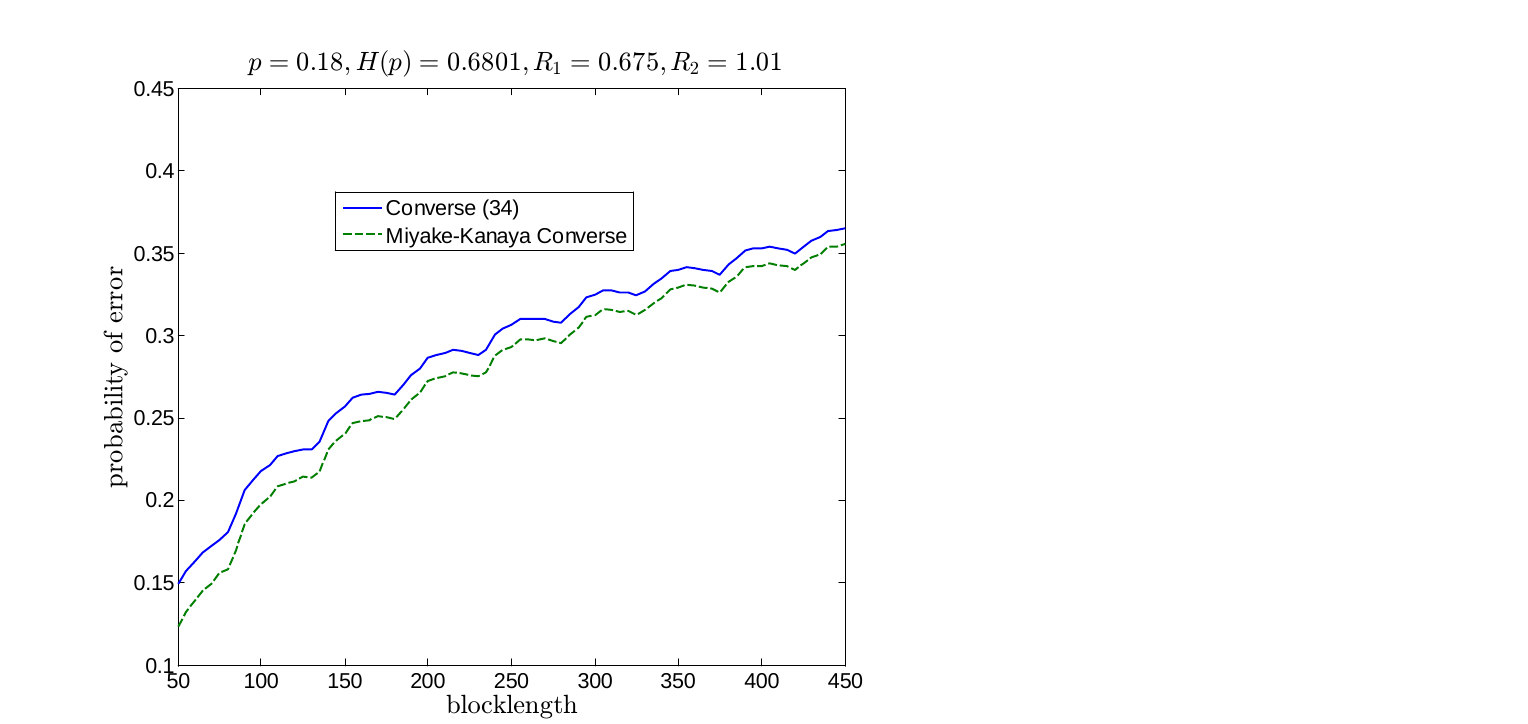}
\caption{DSBS with $(R_1,R_2) \not \in \mathcal{R}_{SW}$.}
\label{fig:SWout}
\vspace{0.1cm}
\includegraphics[scale=0.8,clip=true,trim=2.3in 5.1in 0in 3.5in]{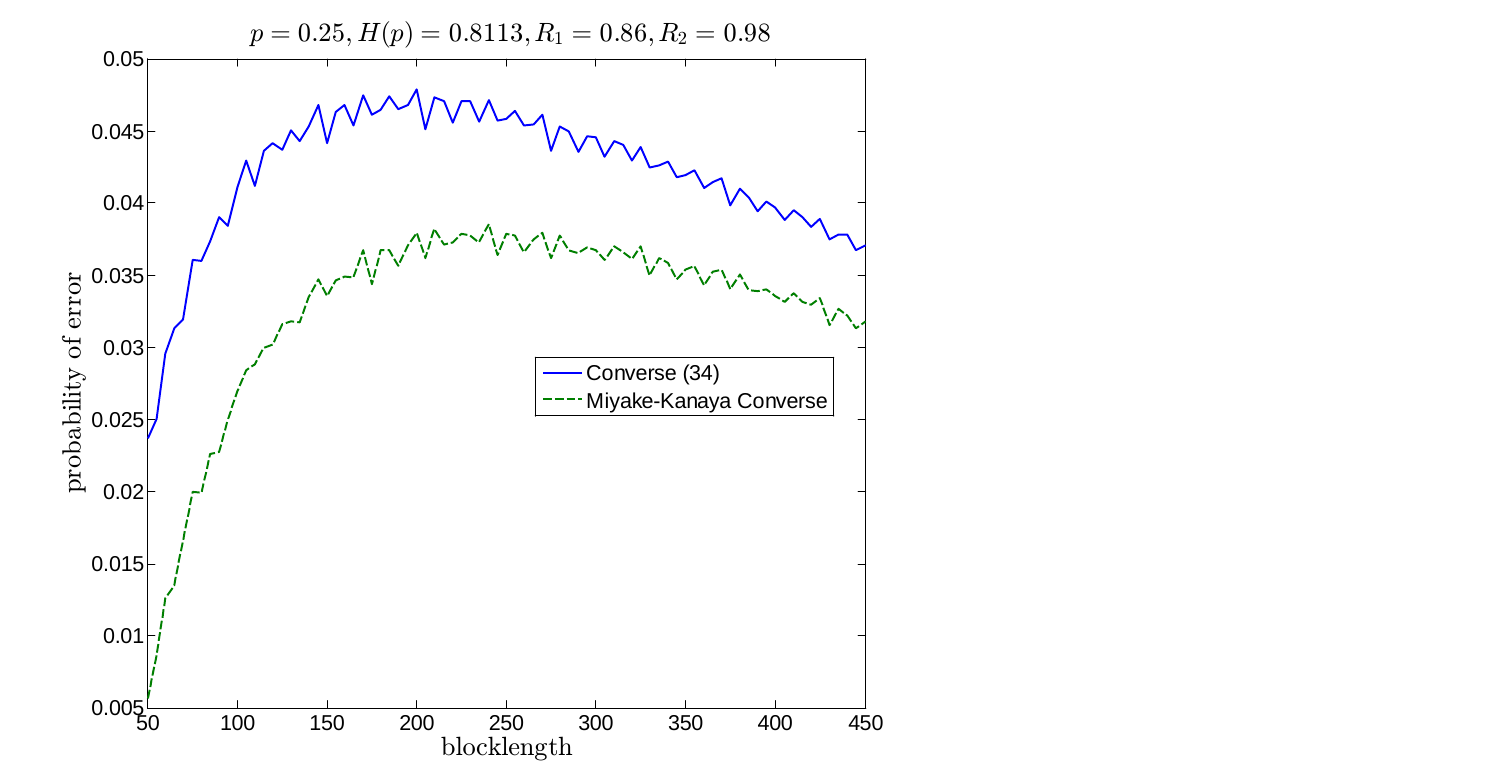}
\caption{DSBS with $(R_1,R_2) \in \mathcal{R}_{SW}$.}
\label{fig:SWins}
\end{figure}
\section{Discussion}\label{sec:interpretation}
\begin{figure}
\tikzstyle{start} = [rectangle,minimum width=0.5cm, minimum height=0.3cm,text centered, draw=black, fill=white!]
\tikzstyle{arrow} = [thick,->,>=stealth]
\begin{small}
\begin{center}
\begin{tikzpicture}[node distance=1.1cm]
\node (T1) [start] {$\OPT(\DPSW)$};
\node (T2)[start, below =of T1] {Converse \eqref{eq:avenue11}};
\node(T3) [start,below =of T2] {Converse \eqref{eq:avenue1}};
    \node(T4) [start,below =of T3] {Converse \eqref{eq:level3bound}};
    \node(T5) [start,right =of T3] {Metaconverse \eqref{eq:SWmetaconverse}};
    \node (T6) [start,right =of T4] {Miyake-Kanaya Converse \eqref{eq:MKconverse}};
 \draw [arrow] (T1) -- node[right] { {\small $\min (\rm{sum} 's)\geq \rm{sum}(\min 's)$ as in \eqref{eq:reason1} }} (T2);
\draw [arrow] (T2) -- node[right] { \small $\min (\rm{sum} 's)\geq \rm{sum}(\min 's)$ as in \eqref{eq:reason2}} (T3);
\draw [arrow] (T3) -- node[right] { $\substack{\sum_{s_1}\lambda_{\ssf}^{(1|2)} \independent \shat_1,\\ \sum_{s_2}\lambda_{\ssf}^{(2|1)}\independent \shat_2}$} (T4);
\draw [arrow] (T3)-- node[above] { \small $\substack{ \rm{dual variables} \\ \rm{in \hspace{0.1cm}\eqref{eq:constr6}}}$} (T5);
\draw [arrow] (T4) -- node[below] { $\substack{\lambda_{\ssf}^{(1|2)},\\ \lambda_{\ssf}^{(2|1)}\\  \rm{as \hspace{0.1cm} in \hspace{0.1cm}} \eqref{eq:mkvariables}}$} (T6);
              \draw [->,>=stealth,thick] (T5.south) to  (T6.north);
\end{tikzpicture}
\end{center}
\caption{Hierarchy of lower bounds derived. An arrow from $A \rightarrow B$ implies that $A \geq B$, the heading above the arrow indicate how $B$ is obtained from $A$. $\min (\rm{sum} 's)\geq \rm{sum} (\min 's)$ represent that minimum of sums is atleast equal to the sum of minimums.
\vspace{-1cm}}\label{fig:hierarchy}
\end{small} 
\end{figure}
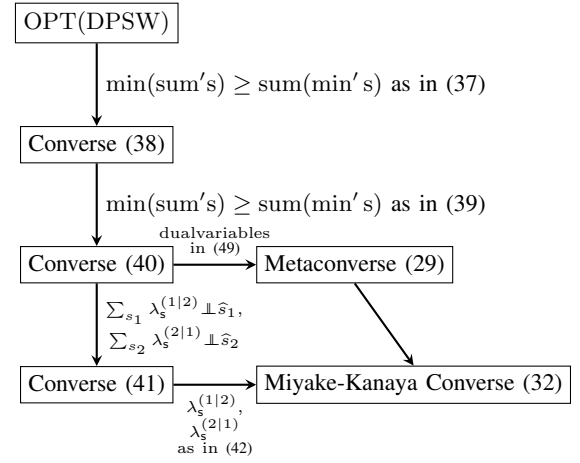
The tightest finite blocklength converse derivable for the SW problem employing the LP-based framework is $\OPT(\DPSW)$, the exact evaluation of which is difficult. However,  since the framework poses the tightest converse as an optimization problem, a hierarchy of lower bounds on it can be derived through a series of optimization problems. This  also helps us conceptually situate the metaconverse in \eqref{eq:SWmetaconverse} and the Miyake-Kanaya converse in the hierarchy, as discussed below.

Recall that $\OPT(\DPSW)$ evaluates to the following optimization problem,
\begin{align*}
&\max_{\bar{\Theta}}\biggl \lbrace \hspace{-0.1cm} \sum_{s_1}\min_{x_1}\biggl \lbrace\sum_{y_1,y_2}\hspace{-0.1cm} \mu^{(1)}_{\csf}(s_1,x_1,y_1,y_2)\hspace{-0.1cm}+\hspace{-0.1cm}\sum_{s_2}\mu^{(1|2)}_{\csf}(x_1,s_1,s_2)\biggr \rbrace \non \non\\& + \sum_{s_2} \min_{x_2} \biggl \lbrace \sum_{y_1,y_2} \mu^{(2)}_{\csf}(s_2,x_2,y_1,y_2) +\sum_{s_1}\mu^{(2|1)}_{\csf}(x_2,s_1,s_2)\biggr \rbrace  \non \\&+\sum_{y_1,y_2} \min_{\shat_1,\shat_2}\biggl \lbrace \hspace{-0.1cm}\sum_{s_2}\mu^{(2)}_{\ssf}(s_2,\shat_1,\shat_2,y_1,y_2) \non\\&+\sum_{s_1}\mu^{(1)}_{\ssf}(s_1,\shat_1,\shat_2,y)\biggr \rbrace \biggr \rbrace,\\
&\mbox{s.t.}\qquad \mbox{(D4)-(D7)} \hspace{0.2cm}\mbox{hold.}
\end{align*}Note that in the above equation, the source flow $\mu_{\ssf}^{(1)}$ and channel flow $\mu_{\csf}^{(1)}$, though along the same path are under two different minimum's. So are the pairs $(\mu_{\ssf}^{(2)},\mu_{\csf}^{(2)})$ and $(\mu_{\csf}^{(1|2)},\mu_{\csf}^{(2|1)})$. Consequently, we first try to bring the terms in each of the above pairs together. Towards this, we first separate out the terms inside the minimum's in $\OPT(\DPSW)$ by employing that
\begin{align}
\sum_{s_1}\min_{x_1}\biggl \lbrace\sum_{y_1,y_2} \mu^{(1)}_{\csf}+\sum_{s_2}\mu^{(1|2)}_{\csf}\biggr \rbrace &\geq \sum_{s_1}\biggl[\min_{x_1}\sum_{y_1,y_2} \mu^{(1)}_{\csf}\non\\&+\min_{x_1}\sum_{s_2}\mu^{(1|2)}_{\csf}\biggr ] \non,\\
\sum_{s_2} \min_{x_2} \biggl \lbrace \sum_{y_1,y_2} \mu^{(2)}_{\csf} +\sum_{s_1}\mu^{(2|1)}_{\csf}\biggr \rbrace& \geq \sum_{s_2} \biggl [ \min_{x_2}\sum_{y_1,y_2} \mu^{(2)}_{\csf}\non \\&+\min_{x_2}\sum_{s_1}\mu^{(2|1)}_{\csf} \biggr],\non\\
\sum_{y_1,y_2} \min_{\shat_1,\shat_2}\biggl \lbrace \sum_{s_2}\mu^{(2)}_{\ssf} +\sum_{s_1}\mu^{(1)}_{\ssf}\biggr \rbrace& \geq \sum_{y_1,y_2} \biggl[ \min_{\shat_1,\shat_2}\sum_{s_2}\mu^{(2)}_{\ssf}\non \\&+\min_{\shat_1,\shat_2}\sum_{s_1}\mu^{(1)}_{\ssf}\biggr ]\label{eq:reason1}.
\end{align}This results in the following optimization problem, the optimal value of which yields a lower bound on $\OPT(\DPSW)$,

\begin{align}
&\OPT(\DPSW)\geq \max_{\bar{\Theta}} \biggl \lbrace \sum_{s_1}\min_{x_1}\sum_{y_1,y_2} \mu^{(1)}_{\csf}(s_1,x_1,y_1,y_2)+\non \\&\sum_{s_1}\min_{x_1}\sum_{s_2}\mu^{(1|2)}_{\csf}(x_1,s_1,s_2)\hspace{-0.1cm} +\hspace{-0.1cm} \sum_{s_2} \min_{x_2}  \sum_{y_1,y_2} \mu^{(2)}_{\csf}(s_2,x_2,y_1,y_2)\non \\&\qquad +\sum_{s_2} \min_{x_2}\sum_{s_1}\mu^{(2|1)}_{\csf}(x_2,s_1,s_2)\non\\&\qquad +\sum_{y_1,y_2} \min_{\shat_1,\shat_2} \sum_{s_2}\mu^{(2)}_{\ssf}(s_2,\shat_1,\shat_2,y_1,y_2)\non\\& \qquad +\sum_{y_1,y_2} \min_{\shat_1,\shat_2}\sum_{s_1}\mu^{(1)}_{\ssf}(s_1,\shat_1,\shat_2,y)\biggr \rbrace \biggr \rbrace\non \\& \mbox{s.t.}\qquad \mbox{(D4)-(D7)} \hspace{0.2cm}\mbox{hold.} \label{eq:avenue11}
\end{align}
We now further lower bound \eqref{eq:avenue11} by using,
\begin{align}
\min_{x_1}\sum_{y_1,y_2}\mu_{\csf}^{(1)} &\geq \sum_{y_2}\min_{x_1} \sum_{y_1}\mu_{\csf}^{(1)} \non\\
\min_{x_2}\sum_{y_1,y_2}\mu_{\csf}^{(2)} &\geq \sum_{y_1}\min_{x_2} \sum_{y_2}\mu_{\csf}^{(2)}, \label{eq:reason2}
\end{align}to get the following optimization problem, whose optimal value is a lower bound on \eqref{eq:avenue11},
\begin{align}
&\max_{\substack{\lambda^{(2|1)}_{\ssf},\lambda^{(1|2)}_{\ssf},\lambda_{\csf} \\ \tiny \mbox{s.t (D4) holds}} }\Biggl \lbrace \max_{\substack{\mu^{(2)}_{\ssf},\mu^{(2)}_{\csf}\\ \tiny \mbox{s.t (D5) holds}}}\biggl \lbrace \sum_{y_1}\biggl[ \sum_{s_2} \min_{x_2}  \sum_{y_2} \mu^{(2)}_{\csf}(s_2,x_2,y_1,y_2)\non \\& \quad +\sum_{y_2} \min_{\shat_1,\shat_2}\sum_{s_2}\mu^{(2)}_{\ssf}(s_2,\shat_1,\shat_2,y_1,y_2)\biggr] \biggr \rbrace+ \label{eq:avenue1} \\
&  \max_{\substack{\mu^{(1)}_{\ssf},\mu^{(1)}_{\csf} \\ \tiny \mbox{s.t (D6) holds}}} \biggl \lbrace \sum_{y_2}\biggl[ \sum_{s_1}\min_{x_1}\sum_{y_1} \mu^{(1)}_{\csf}(s_1,x_1,y_1,y_2)\non \\&\quad +\sum_{y_1} \min_{\shat_1,\shat_2}\sum_{s_1}\mu^{(1)}_{\ssf}(s_1,\shat_1,\shat_2,y_1,y_2)\biggr]\biggr \rbrace +\non \\
&\max_{\substack{\mu^{(2|1)}_{\csf},\mu^{(1|2)}_{\csf}\\ \tiny \mbox{s.t (D7) holds}} }\biggl \lbrace \sum_{s_1}\min_{x_1}\sum_{s_2}\mu^{(1|2)}_{\csf}(x_1,s_1,s_2)\non
\end{align}
\begin{align}
&\qquad+\sum_{s_2} \min_{x_2} \sum_{s_1}\mu^{(2|1)}_{\csf}(x_2,s_1,s_2)\ \biggr \rbrace \Biggr \rbrace. \non
\end{align} 
Note that \eqref{eq:avenue1} now has an outer optimization over $\lambda_{\ssf}^{(1|2)}, \lambda_{\ssf}^{(2|1)},\lambda_{\csf}$ satisfying the error density bottleneck (D4) and three inner optimization problems over each of the pairs, $(\mu_{\ssf}^{(1)},\mu_{\csf}^{(1)})$, $(\mu_{\ssf}^{(2)},\mu_{\csf}^{(2)})$ and $(\mu_{\csf}^{(1|2)},\mu_{\csf}^{(2|1)})$ with bottlenecks imposed by constraints (D6), (D5) and (D7) respectively.

We further lower bound  \eqref{eq:avenue1} by restricting the choice of $\lambda_{\ssf}^{(1|2)}$ and $\lambda_{\ssf}^{(2|1)}$ such that $\sum_{s_1}\lambda_{\ssf}^{(1|2)}$ is independent of $\shat_1$ and $\sum_{s_2}\lambda_{\ssf}^{(2|1)}$ is independent of $\shat_2$. Under this assumption, constraints (D5) and (D6)  imply that $\mu_{\ssf}^
{(2)}$ and $\mu_{\ssf}^{(1)}$  are independent of $\shat_1$ and $\shat_2$, respectively. Hence, for each $y_1 \in \Yscr_1$, $\sum_{s_2} \min_{x_2}  \sum_{y_2} \mu^{(2)}_{\csf}(s_2,x_2,y_1,y_2)+\sum_{y_2} \min_{\shat_1,\shat_2}\sum_{s_2}\mu^{(2)}_{\ssf}(s_2,\shat_1,\shat_2,y_1,y_2)$ represents the objective corresponding to the packing of source flow $\mu^{(2)}_{\ssf}$ and channel flow $\mu^{(2)}_{\csf}$ through the path $S_2\rightarrow X_2 \rightarrow Y_2 \rightarrow \Shat_2$ satisfying the bottleneck, $\mu^{(2)}_{\ssf}(s_2,\shat_1,\shat_2,y_1,y_2)+\mu^{(2)}_{\csf}(s_2,x_2,y_1,y_2) \leq \sum_{s_1} \lambda_{\ssf}^{(1|2)}(s_1,s_2,x_2,y_1,y_2,\shat_1,\shat_2)$ for all $s_2,x_2,y_2,\shat_2$.  Taking the maximum over $\mu_{\ssf}^{(2)}, \mu_{\csf}^{(2)}$ inside the summation over $y_1$, we can express the optimal packing of these flows as $\OPT(\DP, \sum_{s_1} \lambda_{\ssf}^{(1|2)}(s_1,s_2,x_2,y_1,y_2,\shat_1,\shat_2))$, defined as in \eqref{eq:optDP}. Note that here, the RHS of the bottleneck is not necessarily an error density, but a function of $(s_2,x_2,y_1,y_2,\shat_2)$ and $\OPT(\DP, \sum_{s_1} \lambda_{\ssf}^{(1|2)}(s_1,s_2,x_2,y_1,y_2,\shat_1,\shat_2))$ is not necessarily the relaxation of a source coding problem.

Similarly, for each $y_2 \in \Yscr_2$, $\sum_{s_1}\min_{x_1}\hspace{-0.1cm}\sum_{y_1} \mu^{(1)}_{\csf}(s_1,x_1,y_1,y_2)+\sum_{y_1} \min_{\shat_1,\shat_2}\sum_{s_1}\hspace{-0.1cm}\mu^{(1)}_{\ssf}(s_1,\shat_1,\shat_2,y_1,y_2)$ represents the objective corresponding to the packing of source flow $\mu^{(1)}_{\ssf}$ and the channel flow $\mu^{(1)}_{\csf}$ through the path $S_1 \rightarrow X_1 \rightarrow Y_1 \rightarrow \Sscrhat_1$ satisfying the bottleneck imposed by (D6). The resultant optimal packing can be expressed as $\OPT(\DP, \sum_{s_2} \lambda_{\ssf}^{(2|1)}(s_1,s_2,x_1,y_1,y_2,\shat_1,\shat_2))$.
Employing these yields the following lower bound on \eqref{eq:avenue1},

\begin{align}
&\max_{\substack{\lambda^{(1|2)}_{\ssf},\lambda^{(2|1)}_{\ssf},\lambda_{\csf}\\ \tiny \mbox{s.t (D4) holds} \\ \sum_{s_1} \lambda^{(1|2)}_{\ssf} \independent \shat_1, \\ \sum_{s_2}\lambda^{(2|1)}_{\ssf} \independent \shat_2 }}\hspace{-0.12cm}\Biggl \lbrace \hspace{-0.12cm}\sum_{y_1}\hspace{-0.12cm}\OPT(\DP,\sum_{s_1}\lambda^{(1|2)}_{\ssf}(s_1,s_2,x_2,y_1,y_2,\shat_1,\shat_2))\non\\ &
 + \sum_{y_2}\OPT(\DP,\sum_{s_2}\lambda^{(2|1)}_{\ssf}(s_1,s_2, x_1,y_1,y_2,\shat_1,\shat_2))\non\\& +\max_{\substack{\mu^{(1|2)}_{\csf},\mu^{(2|1)}_{\csf} \\ \tiny \mbox{s.t (D7) holds}} }\biggl \lbrace \sum_{s_1}\min_{x_1}\sum_{s_2}\mu^{(1|2)}_{\csf}(x_1,s_1,s_2)\non \\&\hspace{1cm}+\sum_{s_2} \min_{x_2} \sum_{s_1}\mu^{(2|1)}_{\csf}(x_2,s_1,s_2)\ \biggr \rbrace \Biggr \rbrace \label{eq:level3bound}.
\end{align} 
Note that for a given choice of $\lambda_{\ssf}^{(1|2)}, \lambda_{\ssf}^{(2|1)}$ the bound in \eqref{eq:level3bound} comprises of 
optimal value of the duals of point-to-point problems, $\OPT(\DP,\sum_{s_1}\lambda_{\ssf}^{(1|2)})$ and $\OPT(\DP,\sum_{s_2}\lambda_{\ssf}^{(2|1)})$.
However, the objective of these problems is necessarily source coding since RHS of (D5) (or (D6)) is not the source coding error density.
Thus, the bounds in \eqref{eq:avenue11}, \eqref{eq:avenue1} and \eqref{eq:level3bound} illustrate a hierarchy of lower bounds on the optimal value of DPSW. We now show that the Miyake and Kanaya converse falls lower in this hierarchy than our converse. Considering the choice of flows as in the proof of Theorem~\ref{thm:SWmetaconverse} with
\begin{align*}
&\lambda_{\ssf}^{(1|2)}(s_1,s_2,x_2,y_1,y_2,\shat_1,\shat_2)=-\biggl[ \phi^{(1|2)}(s_1,s_2)\Ibb\{s_1= \shat_1\}\times \\&\Ibb\{x_2=y_2\} +\alpha \phih(s_1,s_2)\biggr] \Ibb\{(s_1,s_2)=(\shat_1,\shat_2)\},\\
&\lambda_{\ssf}^{(2|1)}(s_1,s_2,x_1,y_1,y_2,\shat_1,\shat_2)=-\biggl[ \phi^{(2|1)}(s_1,s_2)\Ibb\{s_2= \shat_2\}\times \\& \Ibb\{x_1=y_1\} +(1-\alpha) \phih(s_1,s_2)\biggr] \Ibb\{(s_1,s_2)=(\shat_1,\shat_2)\},
\end{align*} for $\alpha \in (0,1)$,
 it is easy to verify that our metaconverse in \eqref{eq:SWmetaconverse} follows via the construction of a feasible solution to the optimization problem in \eqref{eq:avenue1}. Note that here, $\sum_{s_1}\lambda_{\ssf}^{(1|2)}$ in general \textit{depends} on $\shat_1$ and $\sum_{s_2}\lambda_{\ssf}^{(2|1)}$ \textit{depends} on $\shat_2$, whereby this construction is not feasible for \eqref{eq:level3bound}.  
 On the other hand, we find that the derivation of the Miyake and Kanaya converse from \eqref{eq:MKimprovement} corresponds to the following choice of variables,
\begin{align}
&\lambda_{\ssf}^{(1|2)}(s_1,s_2,x_2,y_1,y_2,\shat_1,\shat_2)=-\Ibb\{(s_1,s_2)=(\shat_1,\shat_2)\}\non\\&\qquad \biggl[\Ibb\{y_2=x_2\} P_{S_2}(s_2)\frac{\exp(-\beta)}{M_1}\biggr],\label{eq:mkvariables}\\
&\lambda_{\ssf}^{(2|1)}(s_1,s_2,x_1,y_1,y_2,\shat_1,\shat_2)=-\Ibb\{(s_1,s_2)=(\shat_1,\shat_2)\}\non\\&\qquad \biggl[\Ibb\{y_1=x_1\} P_{S_1}(s_1)\frac{\exp(-\beta)}{M_2}+\frac{\exp(-\beta)}{M_1M_2}\biggr],\non
\end{align}
for $\beta>0$. In this case, $\sum_{s_1}\lambda_{\ssf}^{(1|2)}$ is independent of $\shat_1$ and $\sum_{s_2}\lambda_{\ssf}^{(2|1)}$ is independent of $\shat_2$. Hence, with the following choice for the remaining dual variables satsifying (D4)--(D7),
\begin{align*}
&\lambda_{\csf}(s_1,s_2,x_1,x_2,y_1,y_2)\hspace{-0.05cm}=\hspace{-0.05cm}\Ibb\{(y_1,y_2)=(x_1,x_2)\}\hspace{-0.08cm} P_{S_1,S_2}(s_1,s_2)\times\\
&\Ibb\biggl \lbrace P_{S_1,S_2}(s_1,s_2) \leq \max\biggl \lbrace P_{S_2}(s_2)\frac{\exp(-\beta)}{M_1},\\&\hspace{4cm}P_{S_1}(s_1)\frac{\exp(-\beta)}{M_2}, \frac{\exp(-\beta)}{M_1M_2}\biggr \rbrace \biggr \rbrace,
\end{align*} 
\begin{align*}
\mu_{\csf}^{(2)}(s_2,x_2,y_1,y_2)&=-\frac{\exp(-\beta)}{M_1}P_{S_2}(s_2)\Ibb\{y_2=x_2\},\\
\mu_{\csf}^{(1)}(s_1,x_1,y_1,y_2)&=-\frac{\exp(-\beta)}{M_2}P_{S_1}(s_1)\Ibb\{y_1=x_1\},\\
\mu_{\ssf}^{(1)}(s_1,\shat_1,\shat_2,y_1,y_2)&=-\frac{\exp(-\beta)}{M_1M_2}\Ibb\{s_1=\shat_1\},\\
\mu_{\csf}^{(2|1)}(s_1,s_2,x_2)&=\sum_{y_1,y_2}\lambda_{\csf}(s_1,s_2,x_1,x_2,y_1,y_2),
\end{align*}
and $ \mu_{\ssf}^{(2)}, \mu_{\csf}^{(1|2)}\equiv 0$,
 it becomes clear that the resulting Miyake-Kanaya converse follows as a lower bound on the lower level optimization problem in  \eqref{eq:level3bound}. This also implies that the converse of Miyake and Kanaya can be thought to be obtained by the $\lambda$'s in DPSW as inducing source-coding like problems  in the DP's in \eqref{eq:level3bound}. On the other hand our metaconverse in \eqref{eq:SWmetaconverse} follows from a more complicated bound.
 
 In summary, our metaconverse in \eqref{eq:SWmetaconverse} and the Miyake-Kanaya converse in \eqref{eq:MKconverse} can be placed in the hierarchy of lower bounds as illustrated in Fig~\ref{fig:hierarchy}. Moreover, this hierarchy also provides structured avenues for obtaining tighter bounds on the finite blocklength Slepian-Wolf coding problem --  by appropriately bounding optimization problems lying higher in the hierarchy in \eqref{eq:avenue11}.
\section{Conclusion}
We presented a new finite blocklength converse for the Slepian-Wolf coding problem which improves on the converse of Miyake and Kanaya. The converse was derived by employing the linear programming based framework discussed in \cite{jose2016linear}. The proposed framework was shown to imply new metaconverses for lossy source coding and lossless source coding with side information problems, and recover the tightest hypothesis testing based converse of Kostina and Verd\'{u} \cite{kostina2012fixed}. For finite blocklength Slepian-Wolf coding, a systematic approach was developed to synthesize new LP-based converses from those of lossless source coding problems with  side information. By appropriately combining the metaconverses for these point-to-point problems, our metaconverse for Slepian-Wolf coding was derived.
\section{Appendices}\label{sec:appendices}
\appendices
\section{Hypothesis Testing Based Converse for Lossy Source Coding}
For a source $S$ with distribution $P_S$, distortion function $d:\Sscr \times \Sscrhat \rightarrow [0,+\infty]$ and distortion level $\dbf$, the rate-distortion function is defined as \begin{align}R_S(\dbf)=\inf_{P_{\Shat|S}:\Ebb[d(S,\Shat)]\leq \dbf}I(S;\Shat),\label{eq:ratedistortion}
\end{align}where the infimum is over $P_{\Shat|S}\in \Pscr(\Sscrhat|\Sscr)$. Assume that the infimum in \eqref{eq:ratedistortion} is achieved by a unique $P_{\Shat^{*}|S}$ and $\dbf_{\rm min}=\inf\{\dbf:R_S(\dbf)< \infty\}$. 
The hypothesis testing based converse of Kostina and Verd\'{u} \cite[Theorem 8]{kostina2012fixed} is then obtained as below.
\begin{converse}[KV-hypothesis testing]\label{conv:KVhypoth}
Consider problem SC with $\Xscr=\Yscr=\{1,\hdots,M\}$. Any code $(f,g)$ such that $\Ebb[\Ibb\{d(S,\Shat)>\dbf\}]\leq \epsilon$ and $\dbf>\dbf_{\min}$ must satisfy,
\begin{align}
M \geq \sup_{Q \in \Pscr(\Sscr)}\inf_{\shat \in \Sscrhat} \frac{\beta_{1-\epsilon}(P_S,Q)}{M\mathbb{Q}[d(S,\shat)\leq \dbf]}, \label{eq:Mhypoth}
\end{align} where $\beta_{1-\epsilon}(P_S,Q)$ is the minimum type-II error $\sum_s Q(s) T(s)$ over all tests $T$ such that the type-I error, $\sum_s P(s)(1-T(s)) \leq \epsilon$.
Moreover, the converse in \eqref{eq:Mhypoth} is equivalent to the following lower bound on the probability of error (see \cite[Equation 72]{vazquez2016bayesian}),
\begin{align}
\epsilon \geq \sup_{Q_S \in \Pscr(\Sscr)} \alpha_{M^{*}}(P_S,Q_S),
\end{align}with $M^{*}=M \max_{\shat}\Qbb[d(S,\shat)\leq \dbf]$.
\end{converse}
\begin{corollary}\label{cor:alphahypoth}
The following relationship holds,
\begin{align}
&\sup_{Q_S \in \Pscr(\Sscr)}\biggl \lbrace \alpha_{M^{*}}(P_S,Q_S)\biggr \rbrace \non\\& =\sup_{Q_S \in \Pscr(\Sscr)}\sup_{\beta\geq 0}\biggl \lbrace \sum_s \min\{P_S(s), \beta Q_S(s)\}-\beta M^{*}\biggr \rbrace.\label{eq:alphahypothconverse}
\end{align}
\end{corollary}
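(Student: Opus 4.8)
The plan is to reduce \eqref{eq:alphahypothconverse} to a pointwise statement: for every fixed $Q_S \in \Pscr(\Sscr)$ and every $\theta \geq 0$,
\begin{align*}
\alpha_{\theta}(P_S,Q_S) = \sup_{\beta \geq 0}\biggl\{\sum_s \min\{P_S(s),\beta Q_S(s)\} - \beta\theta\biggr\},
\end{align*}
and then take $\sup_{Q_S}$ of both sides with $\theta$ specialized to $M^{*} = M\max_{\shat}\Qbb[d(S,\shat)\leq \dbf]$. Note that $M^{*}$ is itself a function of $Q_S$, but this causes no difficulty because the displayed identity holds for \emph{every} $\theta$, uniformly.

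To prove the pointwise identity, I would recognize $\alpha_\theta(P_S,Q_S)$ as the optimal value of a linear program. Writing a (possibly randomized) test as $T:\Sscr\to[0,1]$ and using $\sum_s Q_S(s)=1$, the type-II error constraint $\sum_s Q_S(s)(1-T(s))\leq\theta$ becomes $\sum_s Q_S(s)T(s)\geq 1-\theta$, so
\begin{align*}
\alpha_\theta(P_S,Q_S) = \min_{0\leq T\leq 1}\ \sum_s P_S(s)T(s) \quad \text{s.t.}\quad \sum_s Q_S(s)T(s)\geq 1-\theta.
\end{align*}
This LP is feasible ($T\equiv 1$) and bounded below by $0$, so strong duality applies. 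Introducing a multiplier $\beta\geq 0$ for the inequality constraint and $\delta(s)\geq 0$ for each constraint $T(s)\leq 1$, minimizing the Lagrangian over $T\geq 0$ forces $P_S(s)-\beta Q_S(s)+\delta(s)\geq 0$ for all $s$, hence at optimality $\delta(s)=(\beta Q_S(s)-P_S(s))^{+}$, and the dual objective is $\beta(1-\theta)-\sum_s(\beta Q_S(s)-P_S(s))^{+}$. Using $\beta Q_S(s)-(\beta Q_S(s)-P_S(s))^{+}=\min\{P_S(s),\beta Q_S(s)\}$ together with $\beta\sum_s Q_S(s)=\beta$, this collapses exactly to $\sum_s\min\{P_S(s),\beta Q_S(s)\}-\beta\theta$, giving the identity.

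Taking $\sup_{Q_S\in\Pscr(\Sscr)}$ of both sides and setting $\theta=M^{*}$ then yields \eqref{eq:alphahypothconverse}. I do not expect a serious obstacle; the only points needing a little care are (i) verifying LP feasibility and boundedness so that strong duality is legitimate — or, as a self-contained alternative, invoking the Neyman--Pearson lemma to exhibit the optimal likelihood-ratio test $T^{\star}(s)=\I{\beta Q_S(s)>P_S(s)}$ with appropriate boundary randomization and checking its type-I error equals $\sum_s\min\{P_S(s),\beta Q_S(s)\}-\beta\theta$ at the threshold $\beta$ matching the type-II level $\theta$ — and (ii) remembering that $M^{*}$ varies with $Q_S$, which is harmless by the uniform-in-$\theta$ nature of the inner identity. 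The remaining manipulations are routine.
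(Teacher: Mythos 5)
Your proof is correct, but it takes a genuinely different route from the paper's. The paper argues via the Neyman--Pearson lemma: it exhibits the optimal likelihood-ratio test $T^{*}(s)=\Ibb\{P_S(s)/Q_S(s)\leq \gamma^{*}\}$ with threshold $\gamma^{*}$ chosen so that the type-II constraint is tight, algebraically rewrites the resulting type-I error as $\sum_s \min\{P_S(s),\gamma^{*}Q_S(s)\}-\gamma^{*}M^{*}$, and then invokes an external result (\cite[Lemma~1]{elkayam15calc}) to conclude that $\gamma^{*}$ in fact attains the supremum over $\beta\geq 0$. You instead pose $\alpha_\theta(P_S,Q_S)$ directly as a finite LP over randomized tests $T\in[0,1]^{|\Sscr|}$ and compute its dual; the multiplier $\beta$ of the type-II constraint becomes the variable of the supremum, and strong duality (feasibility via $T\equiv 1$ plus boundedness) delivers the identity in one stroke. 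Your route buys two things: it is self-contained (no appeal to the external calculus lemma), and it sidesteps the boundary-randomization issue that the paper's deterministic threshold test glosses over --- the equality $\sum_s Q_S(s)\Ibb\{P_S(s)/Q_S(s)\leq\gamma^{*}\}=1-M^{*}$ need not hold exactly for a non-randomized test on a discrete alphabet, whereas the LP over $T\in[0,1]$ handles this automatically. What the paper's argument buys in exchange is an explicit identification of the optimizing $\beta$ as the Neyman--Pearson threshold $\gamma^{*}$. Your observation that the dependence of $M^{*}$ on $Q_S$ is harmless because the inner identity holds uniformly in $\theta$ is the right way to dispose of that point.
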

\begin{proof}
To see the above equivalence, we consider the Neyman-Pearson (NP) optimal test for  $\alpha_{M^{*}}(P_S,Q_S)$. The NP optimal test is given by $T^{*}(s)=\Ibb\{\frac{P_S}{Q_S}(s)\leq \gamma^{*}\}$ such that
\begin{align}
&\alpha_{M^{*}}(P_S,Q_S)=\sum_s P_S(s)\Ibb \biggl \lbrace \frac{P_S}{Q_S}(s)\leq \gamma^{*}\biggr \rbrace \quad \mbox{and} \label{eq:alpha}\\
&\sum_s Q_S(s)\Ibb\biggl \lbrace \frac{P_S}{Q_S}(s)\leq \gamma^{*}\biggr \rbrace=1-M^{*} \label{eq:beta}.
\end{align}
Now, consider $\alpha_{M^{*}}(P_S,Q_S)-\gamma^{*}(1-M^{*})$ which evaluates to
\begin{align*}
&\sum_s P_S(s)\Ibb \biggl \lbrace \frac{P_S}{Q_S}(s)\leq \gamma^{*}\biggr \rbrace -\gamma^{*} \sum_s Q(s)\Ibb\biggl \lbrace \frac{P_S}{Q_S}(s)\leq \gamma^{*}\biggr \rbrace\\ &=\sum_s P_S(s)\Ibb \biggl \lbrace \frac{P_S}{Q_S}(s)\leq \gamma^{*}\biggr \rbrace-\gamma^{*}\\& \hspace{0.4cm}+\gamma^{*} \sum_s Q_S(s)\Ibb\biggl \lbrace \frac{P_S}{Q_S}(s)> \gamma^{*}\biggr \rbrace\\&=\sum_s \min\{P_S(s),\gamma^{*} Q(s)\}-\gamma^{*},
\end{align*}which implies that,
$$\alpha_{M^{*}}(P_S,Q_S)=\sum_s \min\{P_S(s),\gamma^{*} Q(s)\}-\gamma^{*} M^{*}.$$ Moreover, the RHS of the above equality can be equivalently written as, 
\begin{align*}&\sum_s \min\{P_S(s),\gamma^{*} Q(s)\}-\gamma^{*} M^{*}\\&=\sup_{\beta\geq 0}\left \{\sum_s \min\{P_S(s),\beta Q(s)\}-\beta M^{*}\right \}.\end{align*} The proof for the last equality follows in the same line as in the proof of \cite[Lemma 1]{elkayam15calc} and we skip the proof here. Now, taking the supremum over $Q_{S} \in \Pscr(\Sscr)$ yields the required result.
\end{proof}
\section{ Proofs of Theorems in Section~\ref{sec:sythesize}}\label{sec:appB}
\begin{proofarg}[of Proposition~\ref{prop:sideconverse}]
Let $(\bar{\gamma}^a,\bar{\gamma}^b,\bar{\lambda}^{(1|2)}_{\ssf},\bar{\lambda}^{(1|2)}_{\csf})\in \FEA(\DPSI_{1|2})$. We now show that the choice of dual variables in \eqref{eq:constr1} is feasible for DPSW.  We first verify the feasibility of the  choice of dual variables with respect to constraint (D1) of DPSW. We get that,
\begin{align*}
&\sum_{y_1,y_2}\mu^{(1)}_{\csf} (s_1,x_1,y_1,y_2)+\sum_{s_2}\mu^{(1|2)}_{\csf}(x_1,s_1,s_2)\\&=\sum_{s_2}\sum_{y_1}\bar{\lambda}^{(1|2)}_{\csf}(x_1,s_1,s_2,y_1)\stackrel{(c)}{\geq} \bar{\gamma}^a(s_1)=\gamma^a(s_1),
\end{align*}thereby satisfying (D1). The inequality in (c) follows from the constraint (B1) of $\DPSI_{1|2}$. For checking feasibility with respect to constraint (D2), we get that
\begin{align*}
&\sum_{y_1,y_2}\mu^{(2)}_{\csf}(x_2,s_2,y_1,y_2)+\sum_{s_1}\mu^{(2|1)}_{\csf}(x_2,s_1,s_2)\\&= \sum_{y_1,y_2} \bar{\gamma}^b(s_2,y_1)\Ibb\{x_2=y_2\}=\sum_{y_1}\bar{\gamma}^b(s_2,y_1)=\gamma^b(s_2),
\end{align*}thereby satisyfing (D2). The feasibility with respect to (D3) is trivially satisfied.
For feasibility with respect to (D4), the LHS of (D4) becomes
\begin{align*}
&\lambda^{(1|2)}_{\ssf}(s_1,s_2,x_2,y_1,y_2,\shat_1,\shat_2)+\lambda_{\csf} (s_1,s_2,x_1,x_2,y_1,y_2)\\&=\bar{\lambda}^{(1|2)}_{\ssf}(s_1,s_2,\shat_1,y_1)\Ibb\{x_2=y_2\}\\& \qquad +\bar{\lambda}^{(1|2)}_{\csf}(x_1,s_1,s_2,y_1)\Ibb\{x_2=y_2\}\\& \stackrel{(a)}{\leq} P(s_1,s_2)\Ibb\{y_1=x_1\}\Ibb\{y_2=x_2\}\Ibb\{s_1\neq \shat_1\}\\&\leq P(s_1,s_2)\Ibb\{(y_1,y_2)=(x_1,x_2)\}\Ibb\{(s_1,s_2)\neq (\shat_1,\shat_2)\},
\end{align*}which is the RHS, thereby satisfying (D4). Here, the inequality in (a) follows from the constriant (B3) of $\DPSI_{1|2}$.
To verify feasibility with respect to (D5), we have, $\mu^{(2)}_{\ssf}(s_2,\shat_1,\shat_2,y_1,y_2)+ \mu^{(2)}_{\csf}(x_2,s_2,y_1,y_2)=$
\begin{align*}
& \bar{\gamma}^b(s_2,y_1)\Ibb\{x_2=y_2\}  \stackrel{(b)}{\leq} \sum_{s_1}\bar{\lambda}^{(1|2)}_{\ssf}(s_1,s_2,\shat_1,y_1)\Ibb\{x_2=y_2\}\\&=\sum_{s_1}\lambda^{(1|2)}_{\ssf} (s_1,s_2,x_2,y_1,y_2,\shat_1,\shat_2),
\end{align*}which is the RHS of (D5), thereby satisfying it. The inequality in (b) follows from constraint (B2) of $\DPSI_{1|2}$. Since $\lambda^{(2|1)}_{\ssf},\mu^{(1)}_{\ssf},\mu^{(1)}_{\csf} \equiv 0$,  the constraint (D6) is trivially satisfied. To verify feasibility with respect to (D7), we have, $\mu^{(2|1)}_{\csf} (x_2,s_1,s_2)+\mu^{(1|2)}_{\csf}(x_1,s_1,s_2)=$
\begin{align*}
&\sum_{y_1}\bar{\lambda}^{(1|2)}_{\csf}(x_1,s_1,s_2,y_1)=\sum_{y_1,y_2}\bar{\lambda}^{(1)}_{\csf}(x_1,s_1,s_2,y_1)\Ibb\{x_2=y_2\}\\&=\sum_{y_1,y_2}\lambda_{\csf}(s_1,s_2,x_1,x_2,y_1,y_2),
\end{align*}thereby satisfying (D7).
 Thus, the considered choice of dual variables is feasible for DPSW.
\end{proofarg}
\begin{proofarg} [of Theorem~\ref{thm:feasDPSW}]
It is enough to show that the above choice of dual variables are feasible with respect to the constraints (D4)-(D7) of DPSW. To verify the feasibility of dual variables with respect to (D4), consider the following two cases.\\
Case 1: $\Ibb\{(s_1,s_2)\neq (\shat_1,\shat_2)\}=1$.\\
In this case, $\lambda^{(1|2)}_{\ssf}(s_1,s_2,x_2,y_1,y_2,\shat_1,\shat_2)=0$ and $\lambda^{(2|1)}_{\ssf}(s_1,s_2,x_1,y_1,y_2,\shat_1,\shat_2)=0$. The LHS of (D4) becomes,
$
\lambda_{\csf}(s_1,s_2,x_1,x_2,y_1,y_2)\leq P(s_1,s_2)\Ibb\{(y_1,y_2)=(x_1,x_2)\},
$ which is the RHS of (D4) thereby satisfying the constraint.
\\
Case 2: $\Ibb\{(s_1,s_2)\neq (\shat_1,\shat_2)\}=0$.
\\
In this case, $s_1=\shat_1$, $s_2=\shat_2$, the RHS of (D4) is zero and the LHS becomes,
\begin{align*}
&\lambda^{(1|2)}_{\ssf}(s_1,s_2,x_2,y_1,y_2,\shat_1,\shat_2)+\lambda^{(2|1)}_{\ssf}(s_1,s_2,x_1,y_1,y_2,\shat_1,\shat_1)\\&\qquad +\lambda_{\csf}(s_1,s_2,x_1,x_2,y_1,y_2)\\
&=\lambdab^{(1|2)}_{\ssf}(s_1,s_2,\shat_1,y_1)\Ibb\{x_2=y_2\}+\alpha \lambdah_{\ssf}(s_1,s_2,\shat_1,\shat_2,y_1,y_2)\\&\hspace{0.1cm}+\lambdat^{(2|1)}_{\ssf}(s_1,s_2,\shat_2,y_2)\Ibb\{x_1=y_1\}+\lambda_{\csf}(s_1,s_2,x_1,x_2,y_1,y_2) \\&\qquad+(1-\alpha)\lambdah_{\ssf}(s_1,s_2,\shat_1,\shat_2,y_1,y_2)
\end{align*}
\begin{align*}
&\leq [\lambdab^{(1|2)}_{\ssf}(s_1,s_2,\shat_1,y_1)+\lambdab^{(1|2)}_{\csf}(s_1,s_2,x_1,y_1)]\Ibb\{x_2=y_2\}\\&+[\lambdat^{(2|1)}_{\ssf}(s_1,s_2,\shat_2,y_2)+\lambdat^{(2|1)}_{\csf}(s_1,s_2,x_2,y_2)]\Ibb\{x_1=y_1\}\\& +\lambdah_{\csf}(s_1,s_2,x_1,x_2,y_1,y_2)+ \lambdah_{\ssf}(s_1,s_2,\shat_1,\shat_2,y_1,y_2)
\end{align*}which is non-positive, thereby satisfying the constraint (D4).
The non-positivity follows since $(\lambdab^{(1|2)}_{\ssf},\lambdab^{(1|2)}_{\csf})$, $(\lambdat^{(2|1)}_{\ssf},\lambdat^{(2|1)}_{\csf})$ and $(\lambdah_{\ssf},\lambdah_{\csf})$ satisfy the constraints (B3), (C3) and (A3) (corresponding to the case when $(s_1,s_2)=(\shat_1,\shat_2)$) of dual programs $\DPSI_{1|2}$, $\DPSI_{2|1}$ and DPJE respectively.

To verify feasibility with respect to (D5), $\sum_{s_1}\lambda^{(1|2)}_{\ssf}(s_1,s_2,x_2,y_1,y_2,\shat_1,\shat_2)$ evaluates to
\begin{align*}
&\bigl[\lambdab^{(1|2)}_{\ssf}(\shat_1,s_2,\shat_1,y_1)\Ibb\{x_2=y_2\} +\alpha \lambdah_{\ssf}(\shat_1,s_2,\shat_1,\shat_2,y_1,y_2)\bigr]\\& \hspace{3cm}\times\Ibb\{s_2=\shat_2\}\\&\stackrel{(a)}{\geq} [\gammab^b(s_2,y_1)-\hspace{-0.2cm}\sum_{s_1\neq \shat_1}\lambdab^{(1|2)}_{\ssf}(s_1,s_2,\shat_1,y_1)]\Ibb\{x_2=y_2,s_2=\shat_2\}\\&\qquad +\alpha\lambdah_{\ssf}(\shat_1,s_2,\shat_1,\shat_2,y_1,y_2)\Ibb\{s_2=\shat_2\}
\\&\geq \mu^{(2)}_{\csf}(x_2,s_2,y_1,y_2)+\mu^{(2)}_{\ssf}(s_2,\shat_1,\shat_2,y_1,y_2),
\end{align*}thereby satisfying (D5). The inequality in $(a)$ results from the constraint (A2).

To verify the feasibility with respect to (D6),
$\sum_{s_2}\lambda^{(2|1)}_{\ssf}(s_1,s_2,x_1,y_1,y_2,\shat_1,\shat_2)$ evaluates to
\begin{align*}
&\biggl[\lambdat^{(2|1)}_{\ssf}(s_1,\shat_2,\shat_2,y_2)\Ibb\{x_1=y_1\}\\&\hspace{1cm} +(1-\alpha)\lambdah_{\ssf}(s_1,\shat_2,\shat_1,\shat_2,y_1,y_2) \biggr] \Ibb\{s_1=\shat_1\},\\&\stackrel{(b)}{\geq} \biggl[\gammat^b(s_1,y_2)\hspace{-0.04cm}-\hspace{-0.14cm}\sum_{s_2\neq \shat_2}\hspace{-0.1cm}\lambdat^{(2|1)}_{\ssf}(s_1,s_2,\shat_2,y_2)\biggr]\Ibb\{x_1=y_1,s_1=\shat_1\}\\& \qquad + (1-\alpha)\lambdah_{\ssf}(s_1,\shat_2,\shat_1,\shat_2,y_1,y_2)\Ibb\{s_1=\shat_1\}\\
&\geq \mu^{(1)}_{\csf}(x_1,s_1,y_1,y_2)+\mu^{(1)}_{\ssf}(s_1,\shat_1,\shat_2,y_1,y_2),
\end{align*}thereby satisfying (D6). The inequality in (b) follows from the constraint (C2) of $\DPSI_{2|1}$. The feasibility with respect to (D7) is trivially satisfied.
Hence, the considered choice of dual variables are feasible for DPSW.
\end{proofarg}
\begin{proofarg}[of Theorem~\ref{thm:SWmetaconverse}]
To get to the above converse, take $(\lambdab^{(1|2)}_{\ssf},\lambdab^{(1|2)}_{\csf},\gammab^a,\gammab^b)$ as in \eqref{eq:SIDfeasible}, $(\lambdat^{(2|1)}_{\ssf},\lambdat^{(2|1)}_{\csf},\gammat^a,\gammat^b)$ as in \eqref{eq:constr4} and $(\lambdah_{\ssf},\lambdah_{\csf},\gammah^a,\gammah^b)$ as in \eqref{eq:JEmetafeasible} and substitute in \eqref{eq:DPSWfeas} to get the values of the variables $\lambda^{(1|2)}_{\ssf},\lambda^{(2|1)}_{\ssf}, \mu^{(2)}_{\ssf},$ and $\mu^{(1)}_{\ssf}$ of DPSW. For the remaining variables, choose the following values of dual variables,
\begin{align}
\lambda_{\csf}(s_1,s_2,x_1,x_2,y_1,y_2)&=\Ibb\{(y_1,y_2)=(x_1,x_2)\}\times\non\\&\hspace{-3.4cm} \min\{P_{S_1,S_2}(s_1,s_2),\phih(s_1,s_2)+\phi^{(1|2)}(s_1,s_2)+\phi^{(2|1)}(s_1,s_2)\},
\non\\
\mu^{(2)}_{\csf}(s_2,x_2,y_1,y_2)&=-\max_{\shat_1} \phi^{(1|2)}(\shat_1,s_2)\Ibb\{x_2=y_2\},\non\\
\mu^{(1)}_{\csf}(s_1,x_1,y_1,y_2)&=-\max_{\shat_2} \phi^{(2|1)}(s_1,\shat_2) \Ibb\{x_1=y_1\},\non\\
\mu^{(2|1)}_{\csf}(x_2,s_1,s_2)&\equiv \min\{P_{S_1,S_2}(s_1,s_2),\label{eq:constr6}\\&\hspace{-0.8cm}\phih(s_1,s_2)+\phi^{(1|2)}(s_1,s_2)+\phi^{(1|2)}(s_1,s_2)\},\non
\\
\gamma^c(y_1,y_2)&=-\max_{\shat_1,\shat_2}\phih(\shat_1,\shat_2),\non\\
\gamma^b(s_2)&=-M_1\max_{\shat_1}\phi^{(1|2)}(\shat_1,s_2)+\non\\& \qquad \sum_{s_1}\mu^{(2|1)}_{\csf}(x_2,s_1,s_2),\non\\
\gamma^a(s_1)&=-M_2 \max_{\shat_2}\phi^{(2|1)}(s_1,\shat_2),\non
\end{align}$\mu^{(1|2)}_{\csf}(x_1,s_1,s_2)\equiv 0$. 
The above choice of variables can be easily verified to satisfy the constraints in \eqref{eq:DPSWfeas}. 
\end{proofarg}
\bibliographystyle{IEEEtran}
\bibliography{apsbib,ref}

\begin{thebibliography}{10}
\providecommand{\url}[1]{#1}
\csname url@samestyle\endcsname
\providecommand{\newblock}{\relax}
\providecommand{\bibinfo}[2]{#2}
\providecommand{\BIBentrySTDinterwordspacing}{\spaceskip=0pt\relax}
\providecommand{\BIBentryALTinterwordstretchfactor}{4}
\providecommand{\BIBentryALTinterwordspacing}{\spaceskip=\fontdimen2\font plus
\BIBentryALTinterwordstretchfactor\fontdimen3\font minus
  \fontdimen4\font\relax}
\providecommand{\BIBforeignlanguage}[2]{{%
\expandafter\ifx\csname l@#1\endcsname\relax
\typeout{** WARNING: IEEEtran.bst: No hyphenation pattern has been}%
\typeout{** loaded for the language `#1'. Using the pattern for}%
\typeout{** the default language instead.}%
\else
\language=\csname l@#1\endcsname
\fi
#2}}
\providecommand{\BIBdecl}{\relax}
\BIBdecl

\bibitem{jose2017linearITW}
S.~T. Jose and A.~A. Kulkarni, ``Linear programming based converses for some
  network-like problems,'' in \emph{to appear in the Proceedings of Information
  Theory Workshop}, 2017.

\bibitem{miyake1995coding}
S.~Miyake and F.~Kanaya, ``Coding theorems on correlated general sources,''
  \emph{IEICE Transactions on Fundamentals of Electronics, Communications and
  Computer Sciences}, vol.~78, no.~9, pp. 1063--1070, 1995.

\bibitem{jose2016linear}
S.~T. Jose and A.~A. Kulkarni, ``Linear programming based converses for finite
  blocklength lossy joint source-channel coding,'' \emph{IEEE Transactions on
  Information Theory}, vol.~63, no.~11, pp. 7066--7094, November, 2017.

\bibitem{polyanskiy2010channel}
Y.~Polyanskiy, H.~V. Poor, and S.~Verd{\'u}, ``Channel coding rate in the
  finite blocklength regime,'' \emph{IEEE Transactions on Information Theory},
  vol.~56, no.~5, pp. 2307--2359, 2010.

\bibitem{kostina2012fixed}
V.~Kostina and S.~Verd{\'u}, ``Fixed-length lossy compression in the finite
  blocklength regime,'' \emph{Information Theory, IEEE Transactions on},
  vol.~58, no.~6, pp. 3309--3338, 2012.

\bibitem{kostina2013lossy}
------, ``Lossy joint source-channel coding in the finite blocklength regime,''
  \emph{Information Theory, IEEE Transactions on}, vol.~59, no.~5, pp.
  2545--2575, 2013.

\bibitem{han2003information}
T.~Han, ``Information-spectrum methods in information theory [english
  translation]. series: Stochastic modelling and applied probability, vol.
  50,'' \emph{Springer}, vol.~1, no.~6, pp. 3--1, 2003.

\bibitem{kulkarni2014optimizer}
A.~A. Kulkarni and T.~P. Coleman, ``An optimizer's approach to stochastic
  control problems with nonclassical information structure,'' \emph{IEEE
  Transactions on Automatic Control}, vol.~60, no.~4, pp. 937--949, 2015.

\bibitem{conforti2014integer}
M.~Conforti, G.~Cornu{\'e}jols, and G.~Zambelli, \emph{Integer
  programming}.\hskip 1em plus 0.5em minus 0.4em\relax Springer, 2014, vol.
  271.

\bibitem{villani2008optimal}
C.~Villani, \emph{Optimal transport: old and new}.\hskip 1em plus 0.5em minus
  0.4em\relax Springer, 2008, vol. 338.

\bibitem{palzer2016converse}
L.~Palzer and R.~Timo, ``A converse for lossy source coding in the finite
  blocklength regime,'' 2016.

\bibitem{tan2014asymptotic}
V.~Y. Tan \emph{et~al.}, ``Asymptotic estimates in information theory with
  non-vanishing error probabilities,'' \emph{Foundations and
  Trends{\textregistered} in Communications and Information Theory}, vol.~11,
  no. 1-2, pp. 1--184, 2014.

\bibitem{el2011network}
A.~El~Gamal and Y.-H. Kim, \emph{Network information theory}.\hskip 1em plus
  0.5em minus 0.4em\relax Cambridge university press, 2011.

\bibitem{vazquez2016bayesian}
G.~Vazquez-Vilar, A.~T. Campo, A.~G. i~F{\`a}bregas, and A.~Martinez,
  ``Bayesian $ m $-ary hypothesis testing: The meta-converse and verd{\'u}-han
  bounds are tight,'' \emph{IEEE Transactions on Information Theory}, vol.~62,
  no.~5, pp. 2324--2333, 2016.

\bibitem{elkayam15calc}
\BIBentryALTinterwordspacing
N.~Elkayam and M.~Feder, ``On the calculation of the meta-converse,''
  \emph{CoRR}, vol. abs/1512.09333, 2015. [Online]. Available:
  \url{http://arxiv.org/abs/1512.09333}
\BIBentrySTDinterwordspacing

\end{thebibliography}
\end{document}